\documentclass[11pt]{article} 
\usepackage[T1]{fontenc}
\usepackage[utf8]{inputenc}

\usepackage{microtype}

\usepackage{subcaption}
\usepackage{graphicx}
\usepackage[backend=biber,sorting=none,sortcites,citestyle=numeric-comp,bibstyle=phys,doi=false,eprint=true,url=false,biblabel=brackets,isbn=true,articletitle=true]{biblatex}   
\usepackage{authblk}

\usepackage{amsmath, amstext}    
\usepackage{cases}
\usepackage{mathtools}
\usepackage{dsfont}
\usepackage[dvipsnames]{xcolor}
\usepackage[margin=1in]{geometry}
\usepackage{float}
\usepackage{ragged2e}
\DeclareCaptionJustification{justified}{\justifying}
\usepackage{placeins}
\usepackage[title,titletoc]{appendix}
\usepackage[colorlinks=true, linkcolor = BlueViolet, citecolor = BlueViolet, urlcolor = BlueViolet]{hyperref}
\usepackage{enumitem}

\usepackage[bitstream-charter]{mathdesign}

\usepackage{subdepth} 

\usepackage{amsthm}   

\allowdisplaybreaks   

\newtheorem{theorem}{Theorem}[section]

\newtheorem{lemma}[theorem]{Lemma}
\newtheorem{proposition}[theorem]{Proposition}
\newtheorem{corollary}[theorem]{Corollary}
\newtheorem{statement}[theorem]{Statement}

\DeclareMathOperator{\Tr}{Tr}
\DeclareMathOperator{\sinc}{sinc}

\DeclareMathOperator{\vectorspan}{span}

\DeclareMathOperator{\ifc}{\overline{\Theta}}

\DeclareMathOperator{\swap}{SWAP}

\newcommand{\uh}{\hat{U}_H}

\newcommand{\proj}{\hat{\Pi}}

\newcommand{\idop}{\hat{\mathds{1}}}
\newcommand{\diff}{\mathrm{d}}

\newcommand{\texcept}{W_{\text{ex}}}

\newcommand{\therm}{\mathrm{th}}

\newcommand{\obs}{\text{obs}}

\newcommand{\unitarygroup}{U}

\newcommand{\aux}{\text{aux}}

\title{Provable quantum thermalization without statistical averages}
\author[1]{Amit Vikram}
\affil[1]{JILA and Center for Theory of Quantum Matter, Department of Physics, University of Colorado, Boulder CO 80309 USA}
\date{}

\begin{document}
\pagenumbering{gobble}
\maketitle

\abstract{
We develop a rigorous system-agnostic method to predict quantum thermalization in an overwhelming fraction of accessible pure states in a many-body system, entirely in terms of certain out-of-time-ordered correlators of few-body observables. In contrast to previous rigorous results on thermalization with semiclassical counterparts, our method is not limited to statistical averages of observables, such as time averages in ergodicity or state averages in mixing. Moreover, consistent with such approaches, we retain the advantage of not requiring a detailed knowledge of energy eigenstate structure or thermodynamically large times, which can become intractable for systems with more than a handful of particles.  Our approach is centered on a geometric result that connects thermalization to the alignment of high dimensional subspaces in a Hilbert space, which is determined by the saturation of ``controllably nonlocal'' out-of-time-ordered correlators. We discuss the possibility of predicting quantum thermalization in almost all states at almost all times from finite-time observations of a single such correlator per observable. This formalism reduces the problem of establishing pure state quantum thermalization at finite times in almost all complex many-body states to a theoretically or experimentally accessible study of few-body correlators, even in thermodynamically large systems.
}

\newpage

\setcounter{tocdepth}{2}
\tableofcontents

\newpage

\pagenumbering{arabic}

\section{Introduction: Provability in quantum statistical mechanics}
\label{sec:intro}

\subsection{Quantum statistical mechanics without energy eigenstates}
\label{sec:computationalmotivation}

Quantum statistical mechanics in isolated systems has long been attributed to the properties of energy eigenstates~\cite{vonNeumannThermalization, JensenShankarETH, deutsch1991eth, srednicki1994eth}, even since its first mathematical treatment by von Neumann~\cite{vonNeumannThermalization}. While these energy eigenstates provide a useful anchor to express results on thermalization, very few theoretical or experimental methods can access these states in systems with any more than the most rudimentary levels of complexity in quantum dynamics. For example, a practical quantum simulation of a sufficiently large system, interpreted as an experimental technique to calculate its dynamics, has virtually no hope of faithfully recreating the energy spectrum of the system due to various approximations inherent in the simulation and experimental limitations such as noise and decoherence~\cite{QuantumSimulationReview2024}. With the development of a variety of eigenstate-insensitive theoretical techniques to calculate the dynamics of few-body observables in various many-body systems~\cite{MaldacenaStanford, NahumOTOCCircuits, ChanScrambling, ProsenErgodic, ClaeysErgodicCircuits, ClaeysLamacraftMaxVelCkt, BertiniPiroliScrambling, RQCreview_2023, XuSwingleScramblingTutorial}, as well as the availability of eigenstate-insensitive few-body control and measurement techniques in experimental platforms such as quantum simulators~\cite{ReyOTOCMeas2017, GoogleScrambling, Joshi2020, OTOC2022ExptFiniteTemp, QuantumSimulationReview2024, GoogleScrambling2}, one could make the case that an eigenstate-independent approach to quantum thermalization is desirable to place the general problem of predicting observable thermalization phenomena within reach of these techniques.

Beyond questions of practical convenience, a more fundamental conceptual limitation exists in eigenstate-based approaches to thermalization such as the eigenstate thermalization hypothesis~\cite{vonNeumannThermalization, JensenShankarETH, deutsch1991eth, srednicki1994eth, srednicki1999eth, rigol2008eth, DAlessio2016, deutsch2018eth} or the equilibration of energy-delocalized initial states~\cite{Tasaki1998, ReimannRealistic, ShortEqb, ShortDegenerate, TasakiTypicalityThermalization}. This is the fact that these approaches only carry rigorous implications for thermalization over \textit{infinite} (thermodynamically large) time scales in many-body systems, with no rigorous discernible consequences for thermalization over finite times (that are e.g. independent of system size). In particular, the timescale $T$ over which eigenstate-based thermalization may be established scales \textit{exponentially} with the size of the system, $T \gtrsim \exp(N)$ in an $N$-particle system~\cite{DAlessio2016, ShortDegenerate}. Finite-time thermalization within these frameworks has largely been addressed in a heuristic manner, essentially by \textit{assuming} that observables generically have sufficiently rapid dynamics that their observable short-time values reach such theoretical long-time averages within a practical timescale~\cite{DAlessio2016}.

From both perspectives, a foundational question at hand is whether it is possible to \textit{predict} quantum thermalization within observable timescales, given only a finite amount of information about a large system. One limitation comes from demonstrations that there exist specific initial states in specific quantum systems for which the thermalization of certain observables over almost all times is undecidable by a classical (or quantum) computer, modeled as a Turing machine~\cite{ThermalizationUndecidability1, ThermalizationUndecidability2}. These examples specifically prohibit a \textit{general} computational mode of deciding long-time thermalization in \textit{all} states.
At the other extreme, a full quantum simulation of the dynamics of any system in a given initial state can trivially decide thermalization in the state at a given instant of time, just by recreating the dynamics under question. However, sampling even an appreciable fraction of all possible initial states of interest is intractable in a large many-body system.
Consequently, both classical and quantum computational approaches to thermalization appear to be severely limited in their direct predictive power to specific simulations of a vanishingly small fraction of nonequilibrium states over finite times. 

Recently, these limitations have been lifted to a significant extent by a rigorous approach to predicting quantum thermalization over different combinations of \textit{almost all} initial states (in \textit{every} orthonormal basis) and \textit{almost all} times, based on finite-time autocorrelators of few-body observables~\cite{dynamicalqthermalization}. Such autocorrelators only require information from the finite-time dynamics of a \textit{single} (mixed) initial state of a quantum system per projective observable of interest\footnote{Noting that all Hermitian few-body observables can be written as a finite sum of few-body projectors, these results have direct implications for the more general class of not necessarily projective Hermitian observables.}, which is directly accessible in the aforementioned computational approaches (theoretical or experimental). This therefore enables general computational modes of predicting certain forms of thermalization within the fundamental limitations indicated above, essentially by constituting an extrapolation procedure from a small amount of calculable or measurable quantum information to thermalization dynamics in at least a large fraction (approaching $1$) of physically relevant states.

A convenient setting to describe these results is the thermalization of a core subsystem $\sigma$ of a finite number $N_{\sigma}$ of particles in an initial nonequilibrium (e.g. pure) state, in contact with a thermodynamically large external bath $\eta$ of $N_{\eta}$ particles. We assume that the joint system of $N = N_{\sigma} + N_{\eta}$ particles (Hilbert space dimension $D = 2^{N}$) undergoes Hamiltonian evolution over continuous or discrete time $t$ with dynamics given by the unitary operator $\uh(t) = e^{-i \hat{H} t}$. In this setting, as quantitatively reviewed in Sec.~\ref{sec:synopsis}, it was shown that a mere finite-time study of the autocorrelator of a few-body projective observable $\proj_A$
can rigorously imply the following \textbf{Statements} on its thermalization to the thermal value $\Pi_{A,\therm} \equiv \Tr[\proj_A]/D$ (corresponding to a full Hilbert space i.e.  microcanonical average~\cite{vonNeumannThermalization}) over comparably finite or even arbitrarily long times (here, stated qualitatively for simplicity):
\begin{enumerate}
    \item \label{intro1:statement1} For any initial state $\lvert \psi\rangle_{\sigma}$ of the core and \textit{any} (arbitrary) orthonormal basis $\left\lbrace \lvert \phi_k\rangle_{\eta}\right\rbrace_k$ of the external bath, the \textit{time-averaged} expectation value of the observable is close to its thermal value for almost all basis states of the bath, over \textit{any} sufficiently long window of time $\mathcal{T}$ (which may be finite)\footnote{Here, we use the notation $|M|$ to denote the cardinality of $M$ if $M$ is a discrete set, or its length according to some natural measure (e.g. the Lebesgue measure over the real numbers) if $M$ is a continuous set.}:
    \begin{equation}
        \frac{1}{|\mathcal{T}|}\int_{\mathcal{T}} \diff t \left[\vphantom{\int}{_\sigma}\langle \psi\rvert \otimes {_\eta}\langle \phi_k\rvert \uh^\dagger(t)\ \proj_A\ \uh(t) \lvert \psi\rangle_{\sigma} \otimes \lvert \phi_k\rangle_{\eta}\right] \approx \Pi_{A,\therm},\;\;\ \text{ for almost all } k.
        \label{eq:intro_timeavgthermalization}
    \end{equation}
    \item \label{intro1:statement2} For \textit{any} sufficiently mixed initial state of the external bath, or equivalently, a sufficiently large statistical ensemble $\mathcal{B}_{\eta}$ of bath states $\hat{\rho}_{\eta} = |\mathcal{B}_{\eta}|^{-1}\sum_{k \in \mathcal{B}_{\eta}} \lvert \phi_k\rangle_{\eta}\langle \phi_k\rvert$ (such as a thermal state that may develop in the bath if it is entangled with an external reservoir), the \textit{instantaneous} expectation value of the observable for any initial pure state of the core is close to its thermal value at almost all times in any sufficiently long time interval $\mathcal{T}$ (which may be finite):
    \begin{equation}
        \frac{1}{|\mathcal{B}_{\eta}|}\sum_{k\in \mathcal{B}_{\eta}}\left[\vphantom{\int}{_\sigma}\langle \psi\rvert \otimes {_\eta}\langle \phi_k\rvert \uh^\dagger(t)\ \proj_A\ \uh(t) \lvert \psi\rangle_{\sigma} \otimes \lvert \phi_k\rangle_{\eta}\right] \approx \Pi_{A,\therm},\;\;\ \text{ for almost all } t \in \mathcal{T}.
        \label{eq:intro_stateavgthermalization}
    \end{equation}
\end{enumerate}
These statements are respectively analogous to classical ergodicity and classical mixing~\cite{HalmosErgodic, SinaiCornfeld, Sinai1976} of \textit{specific} observables~\cite{KhinchinStatMech} in classical statistical mechanics. All that is necessary for these statements to hold (over a finite time interval) is that the autocorrelator of the observable remains close to its thermal value ($\proj_{A,\therm}^2$, the square of the thermal value  of the observable) within a finite resolution for an overwhelming fraction of times\footnote{It is sufficient for the time-average of the autocorrelator to be close to thermal for the first statement to hold.} in a (potentially finite) set of times $\mathcal{T}_{\obs}$ such that $\lvert \mathcal{T}\rvert \gg \lvert \mathcal{T}_{\obs}\rvert$:
\begin{equation}
    \frac{1}{D}\Tr[\proj_A(t)\proj_A] \approx \Pi_{A,\therm}^2\;\;\ \text{ for almost all } t \in \mathcal{T}_{\obs}.
\end{equation}
As an incidental implication, the same behavior of autocorrelators also constrains the structure of the matrix elements of the observable in the energy eigenbasis (if in place of a core-bath tensor product, one considers the initial states in the above expressions to be a basis of energy eigenstates), implying eigenstate thermalization~\cite{vonNeumannThermalization, JensenShankarETH, deutsch1991eth, srednicki1994eth} in almost all energy eigenstates and mildly suppressed off-diagonal matrix elements.

It is worth emphasizing that each of the two Statements \ref{intro1:statement1} and \ref{intro1:statement2} above involves a statistical average: an average over times in the first statement (but applying to almost all states of the bath), and an average over a large fraction of states of the bath in the second (but applying to almost all times). Our main problem of interest in this work is to eliminate the need for a statistical average entirely. This is because the phenomenon of quantum thermalization is generally expected and observed to occur at almost all times for most pure states of the bath in sufficiently complex systems, without further averages. We would like to explore the extent to which this stronger phenomenon of pure state thermalization is predictable for large systems with finite resources.

We note that Statement \ref{intro1:statement2} does allow the statistical average to be eliminated to a limited extent, with the help of typicality arguments~\cite{vonNeumannThermalization, tumulka_CT, CanonicalTypicalityPSW, NormalTypicality}. These arguments rigorously show that almost all pure states in a large Hilbert space, sampled according to the Haar measure, effectively behave like a statistical ensemble of all states (the maximally mixed state) within the Hilbert space. In the above setting, this means that a typical state of a bath behaves like the aforementioned statistical ensembles of bath states in Statement~\ref{intro1:statement2}, and should therefore attain thermal equilibrium in much the same way as a large mixed state. The primary shortcoming of such an argument is essentially the standard weakness of typicality arguments: the Haar measure is sensitive to the full size of the large Hilbert space, and extremely insensitive to most physical states that one may be interested in in practice, such as a finite set of e.g. computational basis states (bitstrings of qubits)~\cite{NielsenChuang}, whose size by the Haar measure is zero.
Therefore, the typicality argument applied to Statement~\ref{intro1:statement2} eliminates statistical averages for typical states of the bath, even in a nonequilibrium process where the core is initially nonthermal, but does not have sufficient resolution to address any given discrete set of bath states that may be of physical interest.

To address this shortcoming, we will develop a method of rigorously identifying (and by extension, proving or predicting) quantum thermalization in more accessible classes of pure initial states of the bath without any statistical average, just from calculations or observations of few-body dynamics over an accessible range of times, without assuming prior knowledge of the dynamics (i.e., we seek model-independent predictions). We will show that the key to such a method involves out-of-time-ordered correlators (OTOCs)~\cite{LarkinOvchinnikov, ButterflyShenkerStanford, StringyShenkerStanford, MSSotocBound, GJWOTOCReview} rather than mere autocorrelators. This is motivated below in Sec.~\ref{sec:statisticalaverages_class}, where we note that statistical averages exist in the methods of Ref.~\cite{dynamicalqthermalization} due to its compatibility with a classical limit (if relevant in a system). To eliminate these statistical averages, we must break compatibility with the classical limit, which requires sensitivity to distinctly quantum considerations~\cite{DiracQM, ShankarQM} such as the ordering of operators within correlation functions.

\subsection{Statistical averages are necessary in classical statistical mechanics}
\label{sec:statisticalaverages_class}

Our key intuition from classical statistical mechanics is that mixing, or any other ergodic property that concerns classical dynamics~\cite{HalmosErgodic, SinaiCornfeld, Sinai1976}, has no direct connection to thermal \textit{equilibrium} in individual states at specific times~\cite{KhinchinStatMech, GallavottiErgodic, BricmontErgodicityCritique, GoldsteinErgodicityCritique, LebowitzAPS2021}. This is in contrast to ergodic classical dynamics, which does correspond to \textit{time-averaged} thermalization in almost all individual states (but even then, is best restricted to an observable-specific variant if one is interested in accessible timescales~\cite{KhinchinStatMech}). Classical thermal equilibrium is instead attained largely for global intensive observables with a strong ``concentration'' property~\cite{KhinchinStatMech, GallavottiErgodic, BricmontErgodicityCritique, GoldsteinErgodicityCritique, LebowitzAPS2021} (e.g. pressure, various densities): the observable attains its thermal value, \textit{by default}, in almost all states, \textit{irrespective} of dynamics. Even in rare nonequilibrium initial states, it is sufficient for the dynamics to explore a non-negligible portion of the phase space for the observable to thermalize at almost all times, without requiring any stronger ergodic properties~\cite{GallavottiErgodic}. Such global intensive observables are of immediate relevance for thermodynamics, where one expects not to have enough resolution to probe each particle in the system individually.

Thermalization via concentration has a trivial quantum analogue: almost all eigenvalues of an observable concentrate near a thermal value, in which case almost all states in any orthonormal basis (with dynamics that explores even a minuscule fraction of the Hilbert space) see the thermal value of this observable for an overwhelming fraction of times. Such a property may be readily verified, e.g., for a hard-core bosonic or fermionic gas on a lattice $L$ of $|L|=N$ sites, say for the observable $n_M$ representing the particle number density in a subset $M$ of lattice sites with $|M| = f_M N$, spanning a fraction $0<f_M \leq 1$ of the original lattice. With $n_k$ representing the particle number at the $k$-th lattice site with eigenvalues $\lbrace 0, 1\rbrace$, the aforementioned number density observables are formally given by
\begin{equation}
    n_M = \frac{1}{f_M N}\sum_{k \in M} n_k,
    \label{eq:intro_local_density}
\end{equation}
which concentrate near $1/2$ (i.e. almost all eigenvalues are within a vanishingly small distance of $1/2$) for fixed $f_M>0$ as $N\to\infty$.
The main nontrivial question of thermalization dynamics for such observables concerns whether a rare non-equilibrium state explores enough of the near-thermal subspace of $n_M$, which is often understood in system-specific ways (e.g., no $n_L$-conservation), due to the fraction of nonthermal states being inaccessibly small (even within a single orthonormal basis) for general statements about ``almost all'' states to have strong implications. However, in follow-up work~\cite{dynamicalqaperiodicity}, we show that general \textit{state/ensemble-dependent} computable criteria may indeed be formulated to establish the thermalization of concentrated observables in such rare states/ensembles of physical interest over finite timescales.

A different scenario is relevant in \textit{local} statistical mechanics, beyond classical thermodynamics, where thermalization via concentration is impossible. Even in a thermodynamically large system, one can in some cases expect to be able to measure properties of a few degrees of freedom, e.g., probe a few specific sites on the above lattice. This situation may arise if one imagines coupling a finite subsystem compatible with external probes to a thermodynamically large bath. Moreover, in experiments on many quantum systems, such as in quantum simulators with local control, it is possible to measure even ultra-local observables, such as an individual $n_k$ corresponding to one lattice site~\cite{QuantumSimulationReview2024, AndreasPurityRM, GoogleScrambling, Joshi2020, GoogleScrambling2}. In a classical system, such an observable can never attain thermal equilibrium at a given instant of time: the thermal value is typically $(n_k)_{\therm} \in (0,1)$, e.g. $(n_k)_{\therm} = 1/2$ for infinite temperature thermalization, which is not a classically allowed value of $n_k$. To produce agreement with a thermal value, one usually needs a statistical average that aggregates different outcomes of $0$ and $1$ into the thermal value $1/2$: either an average over times, or an average over initial states. The former corresponds to ergodicity~\cite{HalmosErgodic, SinaiCornfeld, Sinai1976} (more precisely, an observable-specific variant of ergodicity~\cite{KhinchinStatMech}), where time averages in almost all initial states are nearly thermal; the latter corresponds to (observable-specific) mixing, where averages over a large ensemble of states at any sufficiently long time are nearly thermal. Indeed, even Eq.~\eqref{eq:intro_local_density} can be viewed as a different kind of statistical average --- an average over lattice sites --- and therefore not really immune to the classical requirement of statistical averages\footnote{Another common statistical average is an \textit{ensemble} average over dynamics, such as over a class of related Hamiltonians or unitary evolution operators~\cite{RQCreview_2023}. This again often has the effect of reducing quantum dynamics to nearly classical dynamics, but our concern here is with dynamics in an individual system, so we will largely ignore this kind of average except to justify typicality statements in Sec.~\ref{sec:dynamics}. However, we note that as we are dealing with bounded (non-negative) quantities, ensemble averages may help apply our general results to specific classes of models, where the smallness of ensemble-averaged quantities (if significantly small) implies their smallness in almost all systems of the ensemble~\cite{RossProbability}.}.

The phenomenon of local quantum thermalization is fundamentally different: while the measurement outcomes of $\hat{n}_k$ remain restricted to $\lbrace 0,1\rbrace$, the expectation value $\langle \psi\vert \hat{n}_k\rvert \psi\rangle$ in a pure state $\lvert \psi\rangle$ can attain the thermal value $(n_k)_{\therm}$ at any given time. Although an expectation value is measured as a statistical average of different outcomes in practice, in the mathematical formalism of quantum mechanics it remains an instantaneous property of quantum states that most directly corresponds to the instantaneous value of observables in classical states in the classical limit~\cite{DiracQM, ShankarQM}. Therefore, no explicit statistical average occurs in the formalism for thermalization in a pure state at a given instant of time. This is of particular interest to us for the following reason.

The methods of Ref.~\cite{dynamicalqthermalization} address the expectation values of observables in quantum states, but were constructed to be compatible in the appropriate limit(s) with a classical result~\cite{KhinchinStatMech} connecting classical autocorrelator behavior to the ergodic behavior of an observable (involving a time average), and semiclassical results~\cite{Shnirelman,  CdV, ZelditchOG, ZelditchMixing, Sunada, Zelditch, Anantharaman} connecting quantized eigenstate structure to classical ergodicity and mixing (involving time or state averages) for quantized classical systems. And perhaps unsurprisingly (given the above arguments) due to this mode of construction, they work best in predicting the thermalization of expectation values with statistical averages: either time-averaged for almost all pure states in any basis, or averaged over a nonvanishing fraction of basis states at almost all times. This persistent reliance on statistical averages in all previously developed classical or quantum methods must be circumvented to allow a computational framework to fully predict quantum thermalization.

\subsection{Overview and organization of this paper}
\label{sec:overview}

The goal of this paper is then to design a general model-independent method to predict quantum thermalization in individual pure states without statistical averages, strictly from few-body and finite-time observations even with only finite resolution. In particular, all we will assume are a Hilbert space (in practice, of large dimension), and unitary dynamics on this space (which may be time-dependent). As described in Sec.~\ref{sec:computationalmotivation}, this would provide an experimentally accessible probe of quantum thermalization, while also enabling analytical or computational proofs of thermalization in arbitrary quantum systems of any size.
Having isolated the limitations of Ref.~\cite{dynamicalqthermalization} to its compatibility with a classical limit, we will explicitly motivate our results via the requirement that they fail in the classical limit. We will show that this requirement can be met by considering the non-classical behavior of OTOCs~\cite{LarkinOvchinnikov, ButterflyShenkerStanford, StringyShenkerStanford, MSSotocBound, GJWOTOCReview}, which do not remotely appear in any form in the aforementioned classical~\cite{KhinchinStatMech} or semiclassical approaches~\cite{Shnirelman, CdV, ZelditchOG, ZelditchMixing, Sunada, Zelditch, Anantharaman}, in contrast to the conventional 2-point correlators in Ref.~\cite{dynamicalqthermalization} which can always have classical counterparts.

In Sec.~\ref{sec:synopsis}, we review the technical details of \cite{dynamicalqthermalization}, and summarize the new results of this paper, focusing on how they add to the predictive power developed in the previous work. Our main result is that the decay of ``controllably nonlocal'' OTOCs to sufficiently small values within the core system is necessary and sufficient for quantum thermalization in almost all basis states of the bath at any given time. The requisite small values are extremely sensitive to operator ordering, and therefore invisible in a classical limit (if one exists), satisfying our requirement above. While these predictions are initially restricted to times where these controllably nonlocal correlators can be constrained, we discuss (in Sec.~\ref{sec:Discussion_forecasts}) a possible strategy to extrapolate to arbitrary times from a finite set of times.

To be more specific, a ``controllably nonlocal'' OTOC involves an ordered higher-order product of a few-body observable of interest, say supported on $N_S$ qubits, with another \textit{extended} few-body observable supported on a larger set $N_{\sigma}$ qubits (say, within the core), where $N_{\sigma}$ is a large but finite number that determines the resolution to which one can establish thermalization. Due to its general independence of $N$, the total size of the system, this mode of prediction can be scaled to the thermodynamic limit of $N \to \infty$ particles in the bath, keeping $N_{\sigma}$ finite. This correlator may therefore be amenable (modulo a finite degree of tedium) to theoretical calculations in a variety of systems where local out-of-time-ordered correlators (with $N_S = N_{\sigma} = 1$) can be calculated~\cite{MaldacenaStanford, NahumOTOCCircuits, ChanScrambling, ClaeysLamacraftMaxVelCkt, BertiniPiroliScrambling, XuSwingleScramblingTutorial}, as well as experimental computation using a number of available protocols~\cite{ReyOTOCMeas2017, GoogleScrambling, Joshi2020, OTOC2022ExptFiniteTemp, GoogleScrambling2} with finite but expensive resources. For the latter, we typically need $N_{\sigma} \gtrsim 24$ and a measurement resolution of $10^{-7}$ to show $N_S=1$-qubit thermalization with a crude resolution of $10\%$, though weaker yet nontrivial statements are possible with $N_{\sigma} \sim 12$ and a resolution of $10^{-4}$. We also note that to establish thermalization to arbitrary resolution in theoretical calculations, it is typically necessary to take $N_{\sigma} \to \infty$ \textit{after} taking $N\to\infty$, which perhaps provides a nontrivial technical challenge. Up to such technicalities, this completes our elimination of statistical averages while enabling the rigorous prediction of quantum thermalization from a finite amount of information, at least at a given time. 

Sec.~\ref{sec:subspacethermalization} contains our core pair of results in an abstract mathematical form, namely Theorem~\ref{thm:quantumthermalization_subspaces} and Proposition~\ref{prop:thermalizationsubspacesconverse}. This section is expressed in the general language of projective observables and subspaces, with the intuition that a sufficiently high-dimensional projector is a good description of a few-body observable in a many-body setting, but is otherwise more general~\cite{vonNeumannThermalization}. It is shown that the an ordering-sensitive higher order product of two high-dimensional projectors of different sizes directly determines the thermalization of the observable represented by one of them in the subspace of states represented by the other.

In Sec.~\ref{sec:dynamics}, we consider the typical behavior of these OTOCs and map the abstract but general results of Sec.~\ref{sec:subspacethermalization} to the many-body setting emphasized here. In particular, we show that in typical cases, it is necessary for the dimensions of the projectors involved in the OTOCs to differ by a large constant for pure state thermalization within the subspace of states to even be possible at any instant of time. In the many-body language, this necessitates the consideration of the controllably nonlocal OTOCs described above. We also discuss some obstacles in trying to extrapolate our present mode of prediction from finite times to arbitrarily long intervals, which necessitate alternate strategies for such extrapolations compared to Ref.~\cite{dynamicalqthermalization}.

Finally, in Sec.~\ref{sec:Discussion}, we comment on some practical issues, qualitative insights and possible future directions, namely the experimental relevance of our criteria, the hierarchy of timescales across mixed state and pure state thermalization dynamics, and the possible model-independent extrapolation from finite-time data to infinite time pure state thermalization (as is possible for thermalization with statistical averages in Statements~\ref{intro1:statement1}, \ref{intro1:statement2}) or its circumvention via model-specific calculations.

The Appendices are organized as follows. Appendix~\ref{app:autocorrelator_method} mainly reviews the methods and some relevant results of Ref.~\cite{dynamicalqthermalization}, as well as the obstacles that prevent a direct extension of these methods to thermalization without averages. Appendix~\ref{app:Haar} provides rigorous typicality estimates for our results using standard integral relations and concentration properties for the unitary group, in line with the general ``typicality'' viewpoint by which different properties of sufficiently complex systems approach those of random unitary dynamics to various extents~\cite{vonNeumannThermalization, tumulka_CT, CanonicalTypicalityPSW, NormalTypicality}. Appendix~\ref{app:proofs} contains the proof of our main result, Theorem~\ref{thm:quantumthermalization_subspaces}, as well as a more concise version of one of the primary results of Ref.~\cite{dynamicalqthermalization}, Theorem~\ref{thm:autocorrelatorimpliescorrelator_Gen}, that is repeatedly invoked here for comparative purposes.

\section{Synopsis}
\label{sec:synopsis}

We show that out-of-time-ordered correlators (OTOCs)~\cite{LarkinOvchinnikov, ButterflyShenkerStanford, StringyShenkerStanford, MSSotocBound, GJWOTOCReview}, a standard and widely studied class of quantities in the context of ``quantum chaos'' (although here occuring in a slightly nonlocal variant that doesn't appear as widely studied), allow one to rigorously predict thermalization in an overwhelming fraction of large classes of pure states, thus resolving the problem of eliminating statistical averages from accessible thermalization predictions. These quantities are usually held to measure forms of dynamics that are generally distinct from thermalization, such as operator spreading and sensitivity to perturbations (famously associated with ``the butterfly effect'')~\cite{LarkinOvchinnikov, ButterflyShenkerStanford, StringyShenkerStanford, MSSotocBound, GJWOTOCReview, XuSwingleScramblingTutorial}. It has also been explicitly suggested~\cite{MSSotocBound, ShenkerThouless, Susskind_dS_scrambling} that OTOCs are sensitive to more nontrivial forms of quantum dynamics than thermalization, and measure a(n apparently) distinct phenomenon called ``scrambling''\footnote{The term ``scrambling'' is sometimes used to describe OTOC saturation~\cite{StringyShenkerStanford,MSSotocBound}. While this is a semantic choice, our preference is to associate scrambling with operational recovery protocols~\cite{HaydenPreskill, YoshidaKitaev} that led to the introduction of the term, which are more directly related to mutual information and operator entanglement~\cite{YoshidaKitaev, HosurQiRobertsYoshida, scramblingSWAP}. Entanglement-based scrambling dynamics can also be constrained more rigorously in a system-independent manner~\cite{dynamicalqspeedlimit, dynamicalqfastscrambling} than OTOC dynamics~\cite{MSSotocBound}. While an average of OTOCs over relevant complete bases of operators reduces to entanglement or mutual information for unitary dynamics~\cite{HosurQiRobertsYoshida, YoshidaYao}, we note that this averaged quantity may no longer bear the recognizable hallmarks of individual OTOCs. This is similar to an average of autocorrelation functions $\Tr[e^{-i \hat{H} t} \hat{A} e^{i\hat{H} t}\hat{A}^\dagger]$ over a complete basis of operators $\hat{A}$ giving~\cite{ChaosComplexityRMT} the spectral form factor $\lvert\Tr(e^{-i\hat{H} t})\rvert^2$, despite individual autocorrelators having different dynamics from the spectral form factor in general (e.g. energy eigenbasis/eigenoperator autocorrelators are constant or sinusoidal for any spectral form factor). From this viewpoint, we also regard the results of this work as providing a more direct operational characterization of individual OTOCs in terms of pure state thermalization for the most general quantum system with unitary dynamics, beyond their looser connection via operator-averaging to operational scrambling protocols.}. However, rather surprisingly (to us), we will find that OTOCs serve as direct probes of the very same standard phenomenon of quantum thermalization as well as its absence, in our case of pure states rather than correlators or mixed states, without discernible additional information of importance (in our general model-independent setting) as encoded in their decay and saturation behavior. 

Most crucially for our goal, local OTOCs can be analytically calculated in several systems~\cite{NahumOTOCCircuits, ChanScrambling, ClaeysLamacraftMaxVelCkt, BertiniPiroliScrambling, RQCreview_2023, XuSwingleScramblingTutorial} and experimentally measured via (often) practically tested measurement protocols~\cite{SwingleOTOCprotocol, HafeziOTOCprotocol, ReyOTOCMeas2017, Vermersch2019, Joshi2020, GoogleScrambling, OTOC2022ExptFiniteTemp, GoogleScrambling2}, which we take to portend comparable advantages for the \textit{slightly} nonlocal OTOCs we are interested in given sufficient near-term advances in computational and experimental techniques. They also do not have a well-defined saturation behavior in the classical limit, and it is precisely this saturation behavior that will be key to our results, satisfying our main requirement of being incompatible with a classical limit. We emphasize that mere local operator spreading or sensitivity to perturbations (which can have classical counterparts) are not sufficient for OTOC saturation, and should not be used to conclude thermalization. However, to the extent that they are responsible for the lead-up to OTOC saturation in local systems, they may be considered \textit{initial} mechanisms that can lead towards pure state thermalization, and our results allow a rigorous justification of this attribution.


\subsection{Previous work: Predicting thermalization with statistical averages}
\label{sec:synopsis_prevwork}

To summarize the results of Ref.~\cite{dynamicalqthermalization}, it is convenient to think of $2$-point correlation functions as inner products of operators, such as $\langle \hat{B}, \hat{A}\rangle = D^{-1}\Tr[\hat{B}^\dagger \hat{A}]$. Then, our technical problem statement is simple:
Given $\langle \hat{A}(0), \hat{A}(t)\rangle$ over $t\in\mathcal{T}_{\obs}$ under Hamiltonian (autonomous unitary) dynamics with\footnote{This convention still corresponds to operator evolution in the Schr\"{o}dinger picture of evolving states rather than the Heisenberg picture of dynamical operators~\cite{ShankarQM}; specifically, we imagine that $\hat{A}(t)$ describes the evolution of a density operator representing a general mixed state of the system (a correspondence that is straightforward to make if $\hat{A}$ is a projective observable) rather than an observable as such.} $\hat{A}(t) = \uh(t) \hat{A} \uh^\dagger(t)$, we want to predict $\langle \hat{B}(t_2), \hat{A}(t_1)\rangle = \langle \hat{B}(0), \hat{A}(t_1-t_2)\rangle$ for all other operators $\hat{B}$ and all times $(t_1,t_2)$ to the extent possible.

This is addressed by the following inequality, derived in Appendix~\ref{app:autocorrelator_method} [footnote \ref{footnote:autocorrelatortheoremspecifics}] as a special but representative case of Theorem~\ref{thm:autocorrelatorimpliescorrelator_Gen} (which combines and simplifies some results of Ref.~\cite{dynamicalqthermalization}): for any arbitrarily chosen constant $\xi > 0$, 
\begin{equation}
    \frac{1}{T}\int_{t_0}^{t_0+T}\diff t\ \frac{\langle \hat{B}(0), \hat{A}(t)\rangle}{\sqrt{\langle \hat{B},\hat{B}\rangle \langle \hat{A},\hat{A}\rangle}} \leq \frac{T_{\obs}}{\xi T}+\frac{\xi}{|\sin\xi|}\sqrt{\int_{-T_{\obs}}^{T_{\obs}}\frac{\diff t}{T_{\obs}}\ \left(1-\frac{|t|}{T_{\obs}}\right)\frac{\langle \hat{A}(0), \hat{A}(t)\rangle}{\langle \hat{A},\hat{A}\rangle}},
    \label{eq:synopsis_autocorrelator_to_correlator}
\end{equation}
from which it intuitively follows that the left hand side is small if the autocorrelator remains small on average over $[-T_{\obs}, T_{\obs}]$, and if $T \gg T_{\obs}/\xi$. Here, $\hat{A} = \hat{A}(0)$ and likewise for $\hat{B}$. The constant $\xi$ here sets a tradeoff between autocorrelator measurements and time intervals. For small $\xi \ll 1$, we only need to measure autocorrelators to low resolution, but then Eq.~\eqref{eq:synopsis_autocorrelator_to_correlator} would only provide a useful constraint for large $T \gg T_{\obs}/\xi$. For appreciable $\xi \sim 1$ e.g. near a node of $\sin \xi$, such as $\xi \approx \pi$, we can even constrain correlators for $T \sim T_{\obs}/\pi$, provided we know that the autocorrelator remains small on average to a resolution of approximately $(\xi-\pi)^2$.

The above inequality implies thermalization with statistical averages with the following setup:
\begin{enumerate}
    \item Take $\hat{B} = \hat{\rho}$ to be a density operator (i.e. an ensemble of initial states) with $\Tr[\hat{\rho}^2] \leq 1/(\mu D)$ for some finite constant $\mu$ satisfying $0 < \mu \leq 1$.  
    \item Without loss of generality, take $\hat{A}$ to have a thermal value of $A_{\therm} = 0$ (if not, consider $\hat{A} - A_{\therm}\idop$ in place of $\hat{A}$) --- this is just for convenience.
\end{enumerate}
Then, Eq.~\eqref{eq:synopsis_autocorrelator_to_correlator} implies that the decay of the autocorrelator of $\hat{A}$ to small values on average is sufficient to guarantee the time-averaged thermalization in the mixed state $\hat{\rho}$. As this applies to any mixed state $\hat{\rho}$ (including ensembles of pure states that all deviate from the thermal value with a specific sign), it follows that $\hat{A}$ attains a time-averaged thermal value almost all pure states in any finite set of orthonormal states (involving at least $\Theta(D)$ states so that $\Tr(\hat{\rho}^2) \leq 1/(\mu D)$) e.g. an orthonormal basis~\cite{dynamicalqthermalization}. For state-averaged thermalization at almost all times, one can instead take $\hat{B} = \hat{\rho} \otimes \hat{\rho}$ and $\hat{A} \to \hat{A}\otimes \hat{A}$ so that the time-averaged expectation value above is over a nonnegative integrand, which implies the thermalization of $\hat{A}$ in $\hat{\rho}$ at almost all times. These statements are discussed in more detail in Appendix~\ref{app:correlators_examples}. Adapting these statements to the core-bath subsystem decomposition in Sec.~\ref{sec:intro} gives Eqs.~\eqref{eq:intro_timeavgthermalization} and \eqref{eq:intro_stateavgthermalization}; see also Sec.~\ref{sec:restoredynamics} for additional specifics.

The overall strategy here is to substitute knowledge of the properties of energy eigenstates, which are difficult to calculate in general, with a knowledge of few-body correlators over finite times, which are among the most straightforward objects to calculate or measure. It is possible to anchor this implication on eigenstate structure, via a notion of ``energy-band thermalization''~\cite{dynamicalqthermalization} (which implies eigenstate thermalization in almost all states and, in its cloned form, mildly suppressed off-diagonal matrix elements), but it is not necessary to do so as seen in Eq.~\eqref{eq:synopsis_autocorrelator_to_correlator}. The qualitative implication structure then becomes, with e.g. the restriction to ``almost all states''  in a basis being understood:
\begin{align}
    \text{Autocorrelator decay} \iff \text{Energy-band } &\text{thermalization} \implies \text{(Averaged) finite-time thermalization.} \nonumber \\*
    &\rotatebox[origin=c]{270}{$\implies$} \\*
    \text{Eigenstate } &\text{thermalization} \implies \text{(Averaged) infinite-time thermalization.} \nonumber
\end{align}
This shows that autocorrelator decay is, to this resolution (i.e. thermalization in almost all as opposed to all states) a stronger statement than eigenstate thermalization. It is in this spirit that we find it appropriate to try to consider the direct implications of simple correlation functions of an observable for thermalization in arbitrary states, without referring to energy eigenstates.

As noted in Sec.~\ref{sec:intro}, the above statements seem to unavoidably require some form of statistical average: either a time-average or a state-average (with the latter being lifted somewhat for mathematically typical states whose physical relevance cannot be guaranteed). But guided by the implication structure above, we are now equipped to ask if there are additional few-body correlation functions of an observable that can constrain thermalization in pure states without such averages or typicality arguments. We emphasize that this will evolve into a somewhat drastic ask: we will develop a method to show pure state quantum thermalization, with only finite resources, in \textit{every} conceivable basis of the bath. This guarantees thermalization for any basis of physically relevant states, but is by no means necessary for any given finite set of relevant bases to thermalize, though we expect that the thermalization of states of physical interest is often tied to our criterion in practice e.g. when they belong to an \textit{atypical} basis (which may frequently include the computational basis).

\subsection{This work: OTOCs establish thermalization without averages}
\label{sec:synopsis_thiswork}

Our primary strategy, as motivated in Sec.~\ref{sec:intro}, is to identify a method of predicting thermalization in individual states that is guaranteed to fail for classical systems. This requirement of failure almost directly suggests a solution: classical dynamics does not alter the commutator of two commuting observables (even in the Koopman-von Neumann Hilbert space language~\cite{Koopman_KvN, vonNeumann_KvN1, vonNeumann_KvN2, HalmosErgodic, SinaiCornfeld}).
Thus, any method that relies on nonzero commutators will satisfy this requirement\footnote{\label{footnote:commutatorPB}While the commutator $[\hat{A}, \hat{B}]$ is well-known to reduce to the Poisson bracket $[A,B]_{\text{PB}}$ in the classical limit (for systems that have such a limit)~\cite{DiracQM, ShankarQM}, we recall that the correspondence involves Planck's constant $\hbar$, via $[\hat{A},\hat{B}] \sim i \hbar[A,B]_{\text{PB}}$. Thus, for any finite Poisson bracket, the commutator remains zero in the classical $\hbar \to 0$ limit. This will be useful to keep in mind for the discussion in Sec.~\ref{sec:Discussion}.}.

An alternative, but essentially equivalent, viewpoint is that any correlation function that depends critically on operator ordering does not have a good classical limit. This implicates correlation functions such as out-of-time-ordered correlators (OTOCs)~\cite{LarkinOvchinnikov, ButterflyShenkerStanford, StringyShenkerStanford, MSSotocBound, GJWOTOCReview} as suitable candidates that meet our criterion. The details of how OTOCs constrain thermalization without averages is developed in the bulk of this paper, with the presentation giving precedence to some mathematical notions to emphasize technical intuition and exactness. In this section, we will provide an intuitive summary of our results in the familiar setting of a many-body system in which local observables are of interest.

As our focus is not on the structure of energy eigenstates with respect to observables (or any other properties of energy eigenstates or eigenvalues), we do not discuss in any particular detail the extensive recent work on the connection between higher order correlators such as OTOCs and the matrix elements of observables in the energy eigenbasis, stemming from the observations in Refs.~\cite{FoiniKurchanETH, PappalardiFoiniKurchanETH}, whose developments are generally orthogonal to the goal of this work. For example, our present results apply equally well to time-dependent unitary dynamics as to Hamiltonian dynamics, and do not require the existence of (or access to the) energy eigenstates or eigenvalues.

As in Sec.~\ref{sec:intro}, our primary setting of convenience is that of a core subsystem $\sigma$ of dimension $D_{\sigma}$ (e.g. with $N_{\sigma}$ qubits, $D_{\sigma} = 2^{N_{\sigma}}$), in contact with a thermal bath $\eta$ of dimension $D_{\eta}$ (e.g. with $N_{\eta}$ qubits, $D_{\eta} = 2^{N_{\eta}}$). The joint system evolves under a unitary $\uh(t)$, that is now allowed to be time-dependent (usually with $\uh(0) = \idop$). We are also interested in a projective few-body observable $\proj_R$ (satisfying $\proj_R^\dagger = \proj_R$, $\proj_R^2 = \proj_R$), that projects onto a pure state within an ``observed'' subsystem (core, bath, or partially on both) of dimension $D_S$ (of $N_S$ qubits, $D_S = 2^{N_S}$). For example, $\proj_R$ may represent the probability that all of the $N_S$ qubits within the observed subsystem are each in the basis state $\lvert 0\rangle$. We will assume that the core is small, e.g., $D_{\sigma} = \Theta(1)$ in a thermodynamic limit, as is the observed subsystem $D_S = \Theta(1)$, while the bath can be thermodynamically large $D_{\eta} \to \infty$. 

Statements \ref{intro1:statement1} and \ref{intro1:statement2} in Sec.~\ref{sec:intro} discuss the thermalization of $\proj_R$ for almost all pure states $\lvert \phi\rangle_{\eta}$ in $\eta$ if averaged over a large range of times, and for almost all times if averaged over all $\lvert \phi\rangle_{\eta}$. To eliminate both averages, we consider a projector onto the ensemble of all bath states for a fixed core state $\lvert \psi\rangle_{\sigma}$:
\begin{equation}
    \proj_{\psi} \equiv \lvert \psi\rangle_{\sigma}\langle \psi\rvert \otimes \idop_{\eta}.
    \label{eq:synopsis_projectorpsi}
\end{equation}
Importantly, this is also a few body observable: its measurement involves a projection onto $\lvert \psi\rangle_{\sigma}$ in the core with absolutely no action on the bath. Two dynamical correlation functions involving these projectors are of interest. The first is the conventional 2nd order correlator of both observables at $t$:
\begin{equation}
    G_{R\psi}^{(2)}(t) = \frac{D_{\sigma}}{D} \Tr[\proj_{\psi}(t)\proj_R]
\end{equation}
where $\proj_{\psi}(t) \equiv \uh(t) \proj_{\psi} \uh^\dagger(t)$, and the multiplication by the $\Theta(1)$ factor $D_{\sigma}$ is for formal convenience. In general, this is an $O(1)$ nonnegative quantity that is constrained to lie in $[0,1]$, whose typical values are finite.
The second is the out-of-time-ordered correlator at time $t$ may be defined as:
\begin{equation}
    G_{R\psi}^{(4)}(t) \equiv \frac{D_{\sigma}}{D}\Tr\left[\proj_R\proj_\psi(t)\proj_R\proj_{\psi}(t)\right],
\end{equation}
Their difference equals the (rescaled) Hilbert-Schmidt or Frobenius norm~\cite{HaarBook} of the commutator of the two projectors:
\begin{equation}
    \frac{1}{2D}\Tr\left\lbrace [\proj_R, \proj_{\psi}(t)]^\dagger [\proj_R, \proj_{\psi}(t)]\right\rbrace = G_{R\psi}^{(2)}(t)-G_{R\psi}^{(4)}(t),
\end{equation}
relating nonclassical values of the OTOC to a nonzero commutator~\cite{LarkinOvchinnikov, GJWOTOCReview} in line with our motivating arguments above.

Given these correlators, our main result is, informally:
\begin{statement}[Thermalization without averages from correlators, informal]
\label{statement:informalresult_synopsis}
    At each time $t$ and for a given few-body core state $\lvert \psi\rangle_{\sigma}$, the few-body observable $\proj_R$ thermalizes to the second-order correlator $G_{R\psi}^{(2)}(t)$ in almost all states $\lvert \phi\rangle_{\eta}$ in any orthonormal basis for the bath $\eta$,
    \begin{equation}
        {_{\sigma}}\langle \psi\rvert \otimes {_{\eta}}\langle \phi\rvert \uh^\dagger(t)\ \proj_R\ \uh(t) \lvert \psi\rangle_{\sigma} \otimes \lvert \phi\rangle_{\eta} \approx G_{R\psi}^{(2)}(t)
        \label{eq:synopsis_thermalization_statement_eq1}
    \end{equation}
    ``if and only if'' the out-of-time-ordered correlator nearly factorizes into the square of the second-order correlator:
    \begin{equation}
        G_{R\psi}^{(4)}(t) \approx \left[ G_{R\psi}^{(2)}(t)\right]^2.
    \end{equation}
\end{statement}
\begin{proof}[Justification]
    This is an informal statement of Corollary~\ref{cor:manybodypurestatethermalization}, which is an application of Theorem~\ref{thm:quantumthermalization_subspaces} and Proposition~\ref{prop:thermalizationsubspacesconverse} to the many-body setting. Its key emphasis is the mapping of thermalization in almost all complex many-body pure states to a condition on relatively accessible few-body correlators.
\end{proof}

We note that unlike Eq.~\eqref{eq:synopsis_autocorrelator_to_correlator} and its consequences for thermalization with statistical averages, Statement~\ref{statement:informalresult_synopsis} is an instantaneous relation that holds at any given time $t$ rather than a prediction that extrapolates from a short range of times to arbitrarily long ones. This allows the latter to be more general in applying to \textit{time-dependent} unitary dynamics, but even in the Hamiltonian case, we have not found an extrapolation procedure in time that works. The main strength of this result is then its ability to use a finite amount of information from few-body dynamics to predict \textit{pure state} thermalization without averages in a thermodynamically large number of states \textit{at the same instant of time}. By way of comparison, the dynamical connection to OTOCs without a classical counterpart in Statement~\ref{statement:informalresult_synopsis} surpasses these previous results in terms of being able to access pure state quantum thermalization without averages, but the full formalism for extrapolating from finite-time computations to arbitrary-time predictions for Hamiltonian systems remains to be established in the fully quantum case. We will return to this point in Sec.~\ref{sec:restoredynamics} and Sec.~\ref{sec:Discussion}. 

In almost all cases, we expect that $D_{\sigma} \gg D_S$ is a necessary condition for Statement~\ref{statement:informalresult_synopsis} to be useful, i.e. the core should be larger than the observed subsystem by a ``significant'' amount. This corresponds to ``controllably nonlocal'' OTOCs, as these quantities were termed in Sec.~\ref{sec:intro}. Measuring these may be expensive; it is shown in Sec.~\ref{sec:manybodysystems} that to establish the thermalization of a single qubit, a nonlocal OTOC with $N_{\sigma} \gtrsim 24$ should be measured to an accuracy of $10^{-7}$ for the approximate equality in Eq.~\eqref{eq:synopsis_thermalization_statement_eq1} to hold to an accuracy of $10\%$ for $90\%$ of bath basis states under e.g. random unitary dynamics. A more near-term computational or experimental goal may be to rigorously establish thermalization in \textit{at least} a non-negligible fraction of states in \textit{every} orthonormal basis, which still constitutes thermodynamically many states and is a stronger thermalization diagnosis than previously possible; e.g. single-qubit thermalization to an accuracy of $0.1$ for $10\%$ of states in every bath basis requires a more accessible resolution of $10^{-4}$ with $N_{\sigma} \gtrsim 12$.

From a theoretical standpoint, the calculation of these correlators also presents a higher challenge than the calculation of their single-site counterparts~\cite{NahumOTOCCircuits, ChanScrambling, ClaeysLamacraftMaxVelCkt, BertiniPiroliScrambling, XuSwingleScramblingTutorial}, but again requires handling only a finite amount of tedium for finite resolution and therefore should be solvable in principle. For the theoretically ideal scenario of showing thermalization in the core to arbitrary resolution, we do typically require a weaker limit of $N_{\sigma} \to \infty$ (after taking $N \to \infty$ with fixed $N_S$ and $N_{\sigma}$) so that the approximate equalities above approach exactness. As this also requires the core to become infinitely large (but still a vanishingly small fraction of the total system), such theoretical calculations are most useful if they can show the eventual smallness of the OTOC for (at least) an overwhelming fraction of a set of initial states of interest (e.g., a specific orthonormal basis) in the core, so that one does not require the state of an essentially infinite number of core qubits to be fine-tuned to a specific $\lvert \psi\rangle_{\sigma}$.

We have expressed Statement~\ref{statement:informalresult_synopsis} in terms of the thermalization of $\proj_R$ to $G_{R\psi}^{(2)}(t)$ in pure states at time $t$, to formally allow the thermal value to depend on the initial state of the core and on time. In many-body systems, this kind of generality is realistically not necessary in most cases. In practice (e.g. in the absence of any locally accessible conserved quantities), we usually want the thermal value to be constant and independent of the initial state of the core i.e. ``global thermalization'', which necessarily imposes $G_{R\psi}^{(2)}(t) \to 1/D_S$ (as obtained by averaging over an orthonormal basis of core states) which precisely corresponds to the ``mixing'' value of these functions (i.e. thermal value in mixed rather than pure states). In such cases, global quantum thermalization requires that mixing takes place, whose presence can in turn be determined via the techniques described in Sec.~\ref{sec:synopsis_prevwork}.

More generally, in the presence of (accessible) conserved charges $\hat{Q}_k$, the thermal value can depend on the distribution of the initial state across the eigenspaces of each $\hat{Q}_k$. In general, highly mixed product states such as in Eq.~\eqref{eq:synopsis_projectorpsi} may be distributed over a broad range of such eigenspaces, and cannot be restricted to a single eigenspace of any charge. Here, we expect that the out-of-time-ordered correlator can be filtered to a narrow range of eigenvalues of accessible conserved charges to determine thermalization in each subspace using more general interference/echo measures as described in Ref.~\cite{dynamicalqthermalization} for second order correlators. In our case, by applying additional unitary transformations generated by these conserved charges (including, possibly, the Hamiltonian $\hat{Q}_0 = \hat{H}$), i.e. generalizing to OTOCs of the form:
\begin{equation}
    G_{R\psi}^{(4)}\left(t; \lbrace s_{k1}\rbrace,\lbrace s_{k2}\rbrace,\lbrace s_{k3}\rbrace,\lbrace s_{k4}\rbrace\right) \equiv \frac{D_{\sigma}}{D}\Tr\left[\proj_R e^{-i\sum_k\hat{Q}_k s_{k1}}\proj_\psi(t)e^{i\sum_k\hat{Q}_k s_{k2}}\proj_R e^{-i\sum_k\hat{Q}_k s_{k3}} \proj_{\psi}(t) e^{i\sum_k\hat{Q}_k s_{k4}}\right],
\end{equation}
and averaging suitably over the $\lbrace s_{kj}\rbrace$, we can restrict our subspace of initial states to projectors that have a large overlap with the filtered distributions represented by the density operators:
\begin{equation}
    \hat{\rho}_{\psi, v_{kj}}(t) \equiv \int\diff s_{01} v_{01}(s_{01})\ldots \int\diff s_{K2} v_{K2}(s_{K2})\ e^{-i\sum_{k=0}^{K}\hat{Q}_k s_{k1}}\ \proj_\psi(t)\ e^{i\sum_{k=0}^{K}\hat{Q}_k s_{k2}},
\end{equation}
where we require that $\int\diff s\ v_{kj}(s) e^{-i s \tau} \geq 0$. This may allow deriving pure state thermalization in subspaces with fixed values (or narrow ranges) of each conserved charge, which may then imply thermalization of more general initial states to a value that depends on their distribution across conserved charge eigenspaces.
We do not explicitly consider this more general case any further here for simplicity, and point to the earlier work~\cite{dynamicalqthermalization} for more technical details in the context of regular $2$-point functions and thermalization with averages, which may be possible to adapt to this context in future work. The rest of this paper will therefore focus entirely on global thermalization, so that we may specifically highlight what additional ingredients are required to establish quantum thermalization without averages in the simplest setting. How these results fit into a broader framework of ``operational quantum statistical mechanics'' is described in Ref.~\cite{dynamicalqaperiodicity}.

\section{Geometry: Quantum thermalization and aligned subspaces}
\label{sec:subspacethermalization}

In this section, our goal is to identify if there is a way to predict quantum thermalization in almost all pure states of interest (according to some sufficiently strong, nontrivial notion of ``almost all''), in terms of quantities that do not \textit{a priori} have sufficient resolution to probe pure states in the Hilbert space. In particular, we are only allowed to start with quantities that may be built out of \textit{high-dimensional} projectors $\proj$ in the Hilbert space~\cite{vonNeumannThermalization}, where $\Tr[\proj]/\dim \mathcal{H} > \epsilon_{\text{hd}} = \Theta(1)$ [perhaps with logarithmic factors in $D$] as $\dim D \to \infty$. For example, in a many-body system, few-body observables can be constructed out of such high-dimensional projectors, which corresponds to a standard notion of experimentally accessible observables.

To address this problem in a form that is pruned down to its most essential details, we will begin with its mathematical core: how can one derive constraints on individual pure states from simpler constraints on projectors that contain a large number of linearly independent pure states? Physically, this can be thought of as addressing the problem of how to constrain microscopic thermalization in pure states from correlators of macroscopic projective observables at a single instant of time. Here, we will find that this can be done using certain geometric properties of high-dimensional subspaces in a Hilbert space captured by higher order correlators involving their operator products, described in Sec.~\ref{sec:correlatorgeometry}. Subsequently, we will derive rigorous constrains on thermalization in Sec.~\ref{sec:thermalsubspacetheorem} from these correlators, namely Theorem~\ref{thm:quantumthermalization_subspaces} and Proposition~\ref{prop:thermalizationsubspacesconverse}.  In Sec.~\ref{sec:dynamics}, we will make the identification of these geometric properties with OTOCs more explicit, although this should already be apparent to a large extent from the correlation functions used in this section.

\subsection{Operator products as probes of subspace geometry}
\label{sec:correlatorgeometry}

Our starting point is a geometric picture of correlation functions constructed out of operator product strings of two projectors, based on a Venn-diagram-like decomposition of a Hilbert space relative to any two given subspaces~\cite{Halmos2sub, Intro2sub}, in terms of which subspaces lie entirely inside or outside the given subspaces and which subspaces account for their relative orientation. Such a picture was used in \cite[Appendix C]{qergthesis} to derive a quantum speed limit on the growth of commutators; here, we will expand on this picture to set it up as a key ingredient for predicting quantum thermalization without reference to energy eigenstates, the latter being handled in Sec.~\ref{sec:thermalsubspacetheorem}.

For definiteness, consider two projectors $\proj_R$ and $\proj_{\rho}$ in a Hilbert space $\mathcal{H}$ of dimension $D$. Each projector represents a subspace $\mathcal{H}_R$ or $\mathcal{H}_{\rho}$ of $\mathcal{H}$ (on which the corresponding projector acts as identity, e.g. loosely\footnote{More precisely, $\proj_R \lvert \psi_R\rangle = \lvert \psi_R\rangle$ for all $\lvert \psi_R\rangle \in \mathcal{H}_R$ and $\proj_R \lvert \phi_{\overline{R}}\rangle = 0$ for all $\lvert \phi_{\overline{R}}\rangle \in \mathcal{H} \setminus \mathcal{H}_R$, and likewise for $B$.} $\proj_R \mathcal{H}_R = \mathcal{H}_R$), of respective dimensions $\dim \mathcal{H}_R = \Tr[\proj_R] = D_R$ and $\dim \mathcal{H}_{\rho} = \Tr[\proj_{\rho}] = D_{\rho}$. Without loss of (mathematical) generality, we take $D_{\rho} \leq D_R$. From an intuitive standpoint, we will consider $\proj_R$ (or more precisely, $\mathcal{H}_R$) to be an eigenspace of some local observable $\hat{A}$, while $\proj_\rho$ (i.e., $\mathcal{H}_{\rho}$) is a subspace within which we may choose a pure state $\lvert \psi\rangle_{\rho}$ of interest. 

In Euclidean geometry in $D$ dimensions, these subspaces are analogous to flat planar surfaces of respective dimensions $D_R$ and $D_{\rho}$. The relative orientation of these planar surfaces can be completely described by a set of $D_{\rho}$ ``principal angles of overlap'' $\theta_k \in [0,\pi/2]$ that capture overlaps between different directions (or ``principal axes of overlap'') within the two surfaces. To motivate this terminology, note that a unit sphere in surface $O$, when projected onto surface $\rho$, takes the form of an ellipsoid with principal axes $\lvert \widetilde{w}_k\rangle \in \mathcal{H}_{\rho}$ of respective lengths $\cos\theta_k$.

An analogous geometric decomposition is possible in our complex Hilbert space $\mathcal{H}$, in which case it is sometimes called Halmos' decomposition~\cite{Halmos2sub, Intro2sub}. For formal reasons (simplifying notation), it is convenient to introduce an auxiliary Hilbert space $\mathcal{H}_{\aux}$ (in which no state of interest lives) such that $\dim(\mathcal{H} \oplus \mathcal{H}_{\aux}) \geq D_R+D_{\rho}$. Then, 
\begin{lemma}[Halmos' decomposition of two subspaces~\cite{Halmos2sub, Intro2sub}]
\label{lem:Halmos}
There exist $(D_R - D_{\rho})$ vectors $\lvert a_j\rangle \in \mathcal{H}$, $D_{\rho}$ vectors $\lvert u_k\rangle \in \mathcal{H}$ and $D_{\rho}$ vectors $\lvert v_k\rangle \in \mathcal{H} \oplus \mathcal{H}_{\aux}$, all mutually orthonormal:
\begin{alignat}{3}
    &\langle a_j\vert a_m\rangle = \delta_{jm},&\ &\langle u_k\vert u_{\ell}\rangle = \delta_{k\ell},&\ &\langle v_k\vert v_{\ell}\rangle = \delta_{k\ell}, \nonumber \\
    &\langle a_j\vert u_{\ell}\rangle = 0,&\ &\langle u_k\vert v_{\ell}\rangle = 0,&\ &\langle v_k\vert a_m\rangle = 0,
\end{alignat}
with $D_{\rho}$ corresponding principal angles $\theta_k \in [0,\pi/2]$, such that
\begin{align}
    \proj_R &= \sum_{j = 1}^{D_R-D_{\rho}} \lvert a_j\rangle\langle a_j\rvert + \sum_{k = 1}^{D_{\rho}} \lvert u_k\rangle \langle u_k\rvert, \\
    \proj_{\rho} &= \sum_{k=1}^{D_{\rho}}\left(\cos\theta_k \lvert u_k\rangle + \sin\theta_k \lvert v_k\rangle\right) \left(\cos\theta_k \langle u_k\rvert + \sin\theta_k \langle v_k\rvert\right).
\end{align}    
\end{lemma}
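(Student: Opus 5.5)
The plan is to build the decomposition from the singular value decomposition of the compression of $\proj_R$ to $\mathcal{H}_{\rho}$. Consider the positive semidefinite operator $\proj_{\rho}\proj_R\proj_{\rho}$ restricted to $\mathcal{H}_{\rho}$. It is self-adjoint with spectrum in $[0,1]$, since $0 \leq \proj_R \leq \idop$. By the spectral theorem it has an orthonormal eigenbasis $\lvert w_k\rangle \in \mathcal{H}_{\rho}$, $k=1,\ldots,D_{\rho}$, with eigenvalues $\cos^2\theta_k$, where $\theta_k\in[0,\pi/2]$. These $\lvert w_k\rangle$ will be the principal axes, and the target is $\lvert w_k\rangle = \cos\theta_k\lvert u_k\rangle + \sin\theta_k\lvert v_k\rangle$. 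Since $\proj_{\rho}=\sum_k \lvert w_k\rangle\langle w_k\rvert$, this immediately gives the stated form of $\proj_{\rho}$.

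Next I split each $\lvert w_k\rangle = \proj_R\lvert w_k\rangle + (\idop-\proj_R)\lvert w_k\rangle$. The first piece has squared norm $\langle w_k\rvert\proj_R\lvert w_k\rangle=\cos^2\theta_k$. It is orthogonal across $k$, because $\langle w_j\rvert\proj_R\lvert w_k\rangle=\langle w_j\rvert\proj_\rho\proj_R\proj_\rho\lvert w_k\rangle=\cos^2\theta_k\,\delta_{jk}$. Similarly, $\langle w_j\rvert(\idop-\proj_R)\lvert w_k\rangle=\sin^2\theta_k\,\delta_{jk}$.

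Where $\cos\theta_k>0$, I set $\lvert u_k\rangle=\proj_R\lvert w_k\rangle/\cos\theta_k$. Where $\sin\theta_k>0$, I set $\lvert v_k\rangle=(\idop-\proj_R)\lvert w_k\rangle/\sin\theta_k$. The $u$'s lie in $\mathcal{H}_R$ and the $v$'s in $\mathcal{H}_R^\perp$, so $\langle u_k\vert v_\ell\rangle=0$ automatically, and each family is orthonormal by the computations above.

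The main obstacle is the degenerate indices: those with $\theta_k=\pi/2$, where $u_k$ is undefined, and those with $\theta_k=0$, where $v_k$ is undefined. For the missing $u_k$, I will complete the partial orthonormal set of $u$'s inside $\mathcal{H}_R$. This is possible because $\dim\mathcal{H}_R=D_R\geq D_{\rho}$. The choice is harmless, since these $u_k$ enter $\proj_\rho$ with coefficient $\cos\theta_k=0$.

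For the missing $v_k$, I must choose vectors orthogonal to $\mathcal{H}_R$ and to the existing $v$'s. The ambient space $\mathcal{H}\oplus\mathcal{H}_{\aux}$ has dimension at least $D_R+D_\rho$, so the orthogonal complement of $\mathcal{H}_R$ has dimension at least $D_\rho$. Enough orthonormal vectors therefore exist, and this is exactly why the auxiliary space is introduced. Again, these $v_k$ carry coefficient $\sin\theta_k=0$, so $\proj_\rho$ is unchanged.

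Finally, I complete $\{\lvert u_k\rangle\}$ to an orthonormal basis of $\mathcal{H}_R$ using $D_R-D_\rho$ vectors $\lvert a_j\rangle$. This gives the stated form of $\proj_R$, with the $a$'s orthogonal to the $u$'s and, lying in $\mathcal{H}_R$, also orthogonal to the $v$'s.

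I would also check that the completing $u$'s and $a$'s can be taken in $\mathcal{H}$, which is trivial since $\mathcal{H}_R\subseteq\mathcal{H}$. For the $v$'s, membership in $\mathcal{H}\oplus\mathcal{H}_{\aux}$ is all the statement requires. The nondegenerate $v_k$ in fact lie in $\mathcal{H}$, because $\lvert w_k\rangle\in\mathcal{H}$.
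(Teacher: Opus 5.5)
Your argument is correct and complete; it differs from the paper mainly in that the paper gives no proof of its own. It cites Halmos' two-subspace theorem and defers the $\mathcal{H}_{\aux}$ bookkeeping to an external appendix.

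Halmos' result is a general structure theorem, valid in infinite dimensions. It splits the space into the four intersections of $\mathcal{H}_R,\mathcal{H}_R^\perp$ with $\mathcal{H}_\rho,\mathcal{H}_\rho^\perp$, plus a ``generic'' part. On that part the two projectors are unitarily equivalent to $2\times 2$ operator matrices built from commuting cosine and sine operators. The lemma is then read off by diagonalizing those and relabeling.

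You bypass this machinery by directly diagonalizing the compression $\proj_\rho\proj_R\proj_\rho$ on $\mathcal{H}_\rho$ (equivalently, taking an SVD of $\proj_R\proj_\rho$). You then split each eigenvector with $\proj_R$ and $\idop-\proj_R$, and fill the degenerate slots by orthonormal completion.

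Your route is elementary and self-contained. It also characterizes the principal axes and angles intrinsically, as the eigenvectors and eigenvalues $\cos^2\theta_k$ of that compression. This is exactly what the paper uses downstream:
the identity $\langle w_k\rvert\proj_R\lvert w_\ell\rangle=\delta_{k\ell}\cos^2\theta_k$ in the proof of Theorem~\ref{thm:quantumthermalization_subspaces};
the choice $\lvert b_k\rangle=\lvert w_k\rangle$ in Proposition~\ref{prop:thermalizationsubspacesconverse};
and $G^{(2n)}_{R\rho}=D_\rho^{-1}\sum_k\cos^{2n}\theta_k$, via $\Tr[(\proj_R\proj_\rho)^n]=\Tr[(\proj_\rho\proj_R\proj_\rho)^n]$.

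Your construction also makes explicit why $\mathcal{H}_{\aux}$ is needed. The complement $\mathcal{H}_R^\perp\cap\mathcal{H}$ has dimension $D-D_R$, which may be smaller than $D_\rho$. The shortfall is confined to the $\theta_k=0$ directions (those in $\mathcal{H}_R\cap\mathcal{H}_\rho$), where $\lvert v_k\rangle$ enters with zero coefficient. The paper states this only in words.

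The Halmos route buys generality and a canonical decomposition of the whole space, neither of which the paper needs.
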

\begin{proof}
    This is the main content of Refs.~\cite{Halmos2sub, Intro2sub}, with the adaptation to the form used here with an auxiliary Hilbert space $\mathcal{H}_{\aux}$ described in Ref.~\cite[Appendix C]{qergthesis}.
\end{proof}

Intuitively, this states that a ``preferred'' set of eigenvectors of $\proj_{\rho}$ (determined by $\proj_R$),
\begin{equation}
    \lvert w_k\rangle = \cos\theta_k \lvert u_k\rangle + \sin\theta_k \lvert v_k\rangle,
    \label{eq:principalaxesdef}
\end{equation}
are rotated by $\theta_k$ away from the corresponding eigenvectors $\lvert u_k\rangle$ of $\proj_R$, which form the respective ``principal axes'' in each subspace. Note that if $\theta_k \neq 0$, then it is necessary that $\lvert v_k\rangle \in \mathcal{H}$ as neither projector lives in $\mathcal{H}_{\aux}$ (in this context, we add that $\mathcal{H}_{\aux}$ is merely introduced to harbor unnecessary but convenient orthogonal vectors $\lvert v_k\rangle$ for directions in $\mathcal{H}_{\rho}$ that also lie in $\mathcal{H}_R$, i.e. with $\theta_k = 0$, so that the number of principal angles always corresponds to the dimension $D_{\rho}$ of $\mathcal{H}_{\rho}$). To directly draw a parallel to the Euclidean surfaces analogue above, the decomposition of $A$ represents some set of principal axes of a sphere in $\mathcal{H}_R$, a subset $\lbrace \lvert u_k\rangle \rbrace$ of which can be projected onto the $\proj_{\rho}$-eigenvectors $\lvert w_k\rangle$
in $\mathcal{H}_{\rho}$ to give an ellipsoid with principal axes of length $\cos\theta_k$:
\begin{equation}
    \lvert \widetilde{w}_k\rangle \equiv \lvert w_k\rangle \langle w_k\vert u_k\rangle = \cos\theta_k \lvert w_k\rangle.
\end{equation}

As $\proj^2 = \proj$ for a projector, the only nontrivially independent correlation functions are traces of powers of $\proj_R\proj_{\rho}$ i.e. must involve a product string of operators that necessarily alternates between $\proj_R$ and $\proj_{\rho}$. The correlator containing $(2n)$ operators can be written in terms of the principal angles (with an additional normalization by $D_{\rho}$ for later convenience) as:
\begin{equation}
    G_{R{\rho}}^{(2n)} \equiv\frac{1}{D_{\rho}}\Tr[(\proj_R \proj_{\rho})^n] = \frac{1}{D_{\rho}}\sum_k \left(\cos\theta_k\right)^{2n}.
\end{equation}
In this sense, these correlation functions capture the distribution of the principal angles $\theta_k$, via the $n$-th moments of $\cos\theta$. It is useful to emphasize that these remain $2$-point correlation functions, in the sense of fundamentally involving only $2$ operators $\proj_R$ and $\proj_{\rho}$, but are of order $(2n)$ in the sense of involving a product of $(2n)$ instances of these operators. The $2$-point nature of these functions is convenient for some tangential considerations in Appendix~\ref{app:autocorrelator_method}.

\subsection{Instantaneous quantum thermalization in pure states}
\label{sec:thermalsubspacetheorem}

Let us return to the intuitive interpretation of the subspace $\mathcal{H}_{\rho}$ as one that contains our pure states $\lvert \psi\rangle_{\rho}$ of interest. This subspace may be represented as an ensemble of states with a density operator:
\begin{equation}
    \hat{\rho} \equiv \frac{1}{D_{\rho}}\proj_{\rho}.
\end{equation}
The simplest correlation function is the second order correlation function:
\begin{equation}
    G_{R\rho}^{(2)} \equiv \frac{1}{D_{\rho}} \Tr[\proj_R\proj_{\rho}] = \Tr[\hat{\rho} \proj_R] = \frac{1}{D_{\rho}}\sum_k\cos^2\theta_k,
    \label{eq:G2anglesexpr}
\end{equation}
which can be interpreted as the mean value of the $\cos^2\theta$. In previous work~\cite{dynamicalqthermalization}, it has been possible to constrain the thermalization of such correlation functions \textit{at a given instant of time} with accessible resources, with the caveat that these always correspond to a statistical average over an ensemble of states; no accessible way of constraining pure states appeared to be possible within that framework. Separately, time-averages could be constrained probabilistically for individual states, and it appeared that one needed to have either the state average or the time average for accessible results (see Sec.~\ref{sec:intro} and Sec.~\ref{sec:synopsis} for details).

Here, in light of the geometric picture of correlation functions in terms of principal angles, we note that the 4th order correlation function
\begin{equation}
    G_{R\rho}^{(4)} \equiv \frac{1}{D_{\rho}}\Tr[\proj_R \proj_{\rho}\proj_R \proj_{\rho}] = \frac{1}{D_{\rho}} \sum_k \cos^4\theta_k
    \label{eq:G4anglesexpr}
\end{equation}
contains information about the distribution of the $\theta_k$, and therefore can constrain the expectation value of $\proj_R$ in \textit{individual} pure states at a given instant of time, thereby eliminating both statistical averages. In particular, we can form the variance of principal angles,
\begin{equation}
    \sigma_{R\rho}^2 \equiv G_{R\rho}^{(4)} - \left[G_{R\rho}^{(2)}\right]^2  = \frac{1}{D_\rho}\sum_k\left[\cos^2\theta_k - G_{R\rho}^{(2)}\right]^2.
    \label{eq:anglevariance_def}
\end{equation}
Intuitively, if this variance is small, then almost all of the $\cos^2\theta_k$ are close to $G_{R\rho}^{(2)}$, which means that any linear combination of the corresponding vectors $\lvert w_k\rangle = \cos\theta_k\lvert u_k\rangle + \sin\theta_k \lvert v_k\rangle$ in $\mathcal{H}_{\rho}$ has a $\proj_R$-expectation value close to $G_{R\rho}^{(2)}$. In other words, $\proj_R$ thermalizes in almost all pure states in $\mathcal{H}_{\rho}$. Geometrically, it is then convenient to call $\proj_{\rho}$ ``(nearly) aligned'' with $\proj_R$, as all the principal angles are (nearly) equal\footnote{This is not to be confused with $\proj_\rho$ being nearly contained in $\proj_R$, which would require that all $\theta_k \approx 0$.}.

Formalizing the above argument gives us our first main result:
\begin{theorem}[Quantum thermalization in pure states in nearly aligned subspaces]
\label{thm:quantumthermalization_subspaces}
Let $\proj_R$, $\proj_{\rho}$ be projectors in the Hilbert space $\mathcal{H}$ representing subspaces $\mathcal{H}_R$, $\mathcal{H}_{\rho}$, with $\sigma_{R\rho}^2$ defined as in Eq.~\eqref{eq:anglevariance_def}. Then, for every $\lambda \geq 0$, the following statements hold:
\begin{enumerate}
    \item There exists a subspace $\mathcal{H}_{\therm}(\lambda) \subseteq \mathcal{H}_{\rho}$ with a dimension of at least
    \begin{equation}
        D_{\therm}(\lambda) \equiv \dim \mathcal{H}_{\therm}(\lambda) \geq D_{\rho}\left(1-\frac{\sigma_{R\rho}^2}{\lambda^2}\right),
        \label{eq:dimrestriction1}
    \end{equation}
    such that every pure state $\lvert \psi\rangle \in \mathcal{H}_{\therm}(\lambda)$ attains an expectation value of $\proj_R$ that is close to $G_{R\rho}^{(2)}$ to a resolution $\lambda$:
    \begin{equation}
        \left\lvert\frac{\langle \psi\vert \proj_R\lvert \psi\rangle}{\langle \psi\vert \psi\rangle} - G_{R\rho}^{(2)} \right\rvert \leq \lambda.
        \label{eq:subspacethermalizationcondition}
    \end{equation}
    \item In any orthonormal basis $\mathcal{B}_{\rho} = \lbrace \lvert b_k\rangle_{\rho}\rbrace \subset \mathcal{H}_{\rho}$ for $\mathcal{H}_{\rho}$, the fraction of basis states in which the expectation value of $\proj_R$ deviates from $G_{R\rho}^{(2)}$ by more than $\lambda$ [where $\ifc(\mathcal{C}) = 1$ if the conditional statement $\mathcal{C}$ is true and $\ifc(\mathcal{C}) = 0$ if $\mathcal{C}$ is false],
    \begin{equation}
        f_{\lambda} \equiv \frac{1}{D_{\rho}} \sum_{\lvert b_k\rangle \in \mathcal{B}_{\rho}} \ifc\left(\left\lvert\langle b_k\vert \proj_R\lvert b_k\rangle - G_{R\rho}^{(2)} \right\rvert > \lambda\right),
        \label{eq:nonthermalbasisstates_fraction}
    \end{equation}
    is constrained by $\sigma_{R\rho}^2$ as follows:
    \begin{equation}
        f_{\lambda} \leq \frac{3}{\lambda}\left(\frac{\sigma_{R\rho}^2}{4}\right)^{1/3}.
        \label{eq:nonthermalbasisstatesconstraint}
    \end{equation}
\end{enumerate}
\end{theorem}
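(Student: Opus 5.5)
The plan is to diagonalize the compression of $\proj_R$ to $\mathcal{H}_{\rho}$ using Lemma~\ref{lem:Halmos}, then argue spectrally. Define $\hat{M} \equiv \proj_{\rho}\proj_R\proj_{\rho}$, viewed as an operator on $\mathcal{H}_{\rho}$. By Lemma~\ref{lem:Halmos}, the vectors $\lvert w_k\rangle = \cos\theta_k\lvert u_k\rangle + \sin\theta_k\lvert v_k\rangle$ form an orthonormal basis of $\mathcal{H}_{\rho}$.

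The key identity is
\begin{equation}
\langle w_k\vert\proj_R\lvert w_\ell\rangle = \cos\theta_k\cos\theta_\ell\langle u_k\vert u_\ell\rangle = \cos^2\theta_k\,\delta_{k\ell}.
\end{equation}
It holds because $\proj_R$ annihilates the $\lvert v_k\rangle$ and the $\lvert a_j\rangle$ are orthogonal to the $\lvert u_k\rangle$. So the $\lvert w_k\rangle$ diagonalize $\hat{M}$, with eigenvalues $\cos^2\theta_k$. For every $\lvert\psi\rangle\in\mathcal{H}_{\rho}$ we then have $\langle\psi\vert\proj_R\lvert\psi\rangle = \langle\psi\vert\hat{M}\lvert\psi\rangle$. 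Set $x_k \equiv \cos^2\theta_k - G_{R\rho}^{(2)}$. By Eq.~\eqref{eq:anglevariance_def}, $\sigma_{R\rho}^2 = D_{\rho}^{-1}\sum_k x_k^2 = D_\rho^{-1}\Tr[(\hat{M}-G^{(2)}_{R\rho}\proj_\rho)^2]$.

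\textbf{Part 1.} Apply Chebyshev/Markov counting to the $x_k$. The number of indices with $\lvert x_k\rvert>\lambda$ is at most $\lambda^{-2}\sum_k x_k^2 = D_{\rho}\sigma_{R\rho}^2/\lambda^2$. Define $\mathcal{H}_{\therm}(\lambda)$ as the span of the $\lvert w_k\rangle$ with $\lvert x_k\rvert\le\lambda$; this gives the dimension bound \eqref{eq:dimrestriction1}. For $\lvert\psi\rangle=\sum_k c_k\lvert w_k\rangle$ in this span, the diagonal form of $\hat{M}$ gives
\begin{equation}
\frac{\langle\psi\vert\proj_R\lvert\psi\rangle}{\langle\psi\vert\psi\rangle} - G^{(2)}_{R\rho} = \frac{\sum_k\lvert c_k\rvert^2x_k}{\sum_k\lvert c_k\rvert^2},
\end{equation}
which is a convex combination of numbers of modulus at most $\lambda$. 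This yields \eqref{eq:subspacethermalizationcondition}.

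\textbf{Part 2.} For an arbitrary basis $\lbrace\lvert b_j\rangle\rbrace$ of $\mathcal{H}_{\rho}$, put $\hat{X}=\hat{M}-G^{(2)}_{R\rho}\proj_{\rho}$ and $d_j=\langle b_j\vert\hat{X}\lvert b_j\rangle = \langle b_j\vert\proj_R\lvert b_j\rangle - G^{(2)}_{R\rho}$. The sum of squared diagonal entries is bounded by the squared Frobenius norm, which is basis independent:
\begin{equation}
\sum_j d_j^2\le\sum_{j,\ell}\lvert\langle b_j\vert\hat{X}\lvert b_\ell\rangle\rvert^2=\Tr[\hat{X}^2]=D_{\rho}\sigma_{R\rho}^2 .
\end{equation}
Markov's inequality then gives $f_\lambda\le\sigma_{R\rho}^2/\lambda^2$. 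This is generally stronger than \eqref{eq:nonthermalbasisstatesconstraint}, so it remains to convert it into the stated form.

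That conversion is the only step needing care, and I would split into two cases. First note that $\cos^2\theta_k\in[0,1]$, so its variance satisfies $\sigma_{R\rho}^2\le 1/4$, i.e.\ $\sigma_{R\rho}\le 1/2$.
\begin{itemize}
\item If $\lambda\ge\sigma_{R\rho}$, the ratio of $\sigma_{R\rho}^2/\lambda^2$ to the right side of \eqref{eq:nonthermalbasisstatesconstraint} equals $\sigma_{R\rho}^{4/3}4^{1/3}/(3\lambda)$. This is at most $(4\sigma_{R\rho})^{1/3}/3\le 2^{1/3}/3<1$, so \eqref{eq:nonthermalbasisstatesconstraint} follows.
\item If $\lambda<\sigma_{R\rho}$, the right side of \eqref{eq:nonthermalbasisstatesconstraint} exceeds $3(4\sigma_{R\rho})^{-1/3}\ge 3/2^{1/3}>1\ge f_\lambda$. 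The bound then holds trivially.
\end{itemize}
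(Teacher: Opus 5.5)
Your Part 1 coincides with the paper's argument. Your Part 2 is correct but takes a genuinely different route.

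\textbf{How the paper proves Part 2.} It fixes an auxiliary $\widetilde{\lambda}$ and splits each $\lvert b_k\rangle$ into its components inside and outside $\mathcal{H}_{\therm}(\widetilde{\lambda})$. The cross term vanishes because the compression of $\proj_R$ to $\mathcal{H}_\rho$ is diagonal in the $\lvert w_k\rangle$. The nonthermal piece is bounded by its weight. This controls the mean \emph{absolute} deviation, $D_\rho^{-1}\sum_k \lvert\langle b_k\rvert \proj_R\lvert b_k\rangle - G_{R\rho}^{(2)}\rvert \le \widetilde{\lambda} + \sigma_{R\rho}^2/\widetilde{\lambda}^2$. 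Optimizing at $\widetilde{\lambda} = (2\sigma_{R\rho}^2)^{1/3}$ and applying Markov gives exactly the $\sigma_{R\rho}^{2/3}/\lambda$ form of the statement.

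\textbf{How your route differs.} You work with second moments instead. The squared diagonal entries of $\hat{X}$ in any basis sum to at most the basis-independent $\Tr[\hat{X}^2] = D_\rho\sigma_{R\rho}^2$. Chebyshev then gives $f_\lambda \le \sigma_{R\rho}^2/\lambda^2$ in every basis, the same bound as in Part 1. Your case split checks out (it would even work with the cruder $\sigma_{R\rho}\le 1$) and recovers the weaker stated inequality.

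\textbf{What your route buys.}
\begin{itemize}
\item It is shorter. The identity $\Tr[\hat{X}^2]=D_\rho\sigma_{R\rho}^2$ follows from cyclicity of the trace alone, so Part 2 needs neither $\mathcal{H}_{\therm}$ nor Lemma~\ref{lem:Halmos}.
\item It makes the paper's closing sanity check unnecessary.
\item Via Cauchy--Schwarz, it improves the paper's mean-absolute-deviation bound from $3(\sigma_{R\rho}^2/4)^{1/3}$ to $\sigma_{R\rho}$.
\item The gain matters downstream. Using the typical variance, the core-size requirement becomes $D_\sigma \gtrsim (D_S-1)/(D_S^2\lambda^2 f_\lambda)$ rather than scaling as $(\lambda f_\lambda)^{-3}$. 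For the single-qubit example of Sec.~\ref{sec:manybodysystems} ($\lambda_{\text{rel}} = f_\lambda = 0.1$), this gives roughly $N_\sigma \gtrsim 10$ and $\sigma_{R\rho}^2 \lesssim 2.5\times 10^{-4}$, instead of the paper's $N_\sigma \gtrsim 24$ and $10^{-7}$.
\end{itemize}

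\textbf{What the paper's decomposition offers in exchange.} It yields an explicit per-state inequality, $\lvert\langle b\rvert\proj_R\lvert b\rangle - G_{R\rho}^{(2)}\rvert \le \widetilde{\lambda} + \langle b\rvert(\idop_{\rho}-\proj_{\therm}(\widetilde{\lambda}))\lvert b\rangle$, for every unit $\lvert b\rangle\in\mathcal{H}_\rho$. Here $\proj_{\therm}(\widetilde{\lambda})$ is the projector onto $\mathcal{H}_{\therm}(\widetilde{\lambda})$. This ties each individual state's deviation to its weight outside the aligned subspace, in keeping with the paper's geometric picture. It also follows in one line from your diagonal form of $\hat{M}$.
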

\begin{proof}
    This follows from a rigorous version of the argument after Eq.~\eqref{eq:anglevariance_def}, see Appendix~\ref{proof:quantumthermalization_subspaces}.
\end{proof}

While these are general bounds, the physical content of these bounds is that $D_{\rho}(\lambda)$ is nearly as large as $D_{\rho}$ and $f_{\lambda}$ is small when $\sigma_{R\rho}^2$ is small. This allows one to predict quantum thermalization to accuracy $\lambda \gg \sigma_{R\rho}^{2/3}$ for an overwhelming fraction of pure states in the subspace $\mathcal{H}_{\rho}$, if one is given the (e.g. analytically computable and experimentally measurable) $2$-nd and $4$-th order correlation functions of the high-dimensional projectors $\proj_R$ and $\proj_{\rho}$.

We note that higher-order correlation functions ($G_{R\rho}^{(2n)}$ for $2n>4$) are usually not \textit{necessary} to constrain thermalization. However, they may allow for more refined bounds e.g. on $D_{\rho}(\lambda)$ or $f_{\lambda}$ via the distributional information captured in the higher moments of the $\cos^2\theta_k$, which may be relevant especially to improve resolution in finite-size systems.

Another important question concerns the reverse implication: does pure state thermalization in $\proj_{\rho}$ imply a low variance $\sigma_{R\rho}^2$? Only were this the case can we regard the correlation functions that occur in $\sigma_{R\rho}^2$ as providing a robust signature of pure state thermalization or its absence. As it turns out, the answer to this question is a straightforward but qualified yes:
\begin{proposition}[Pure state thermalization in all bases implies a near-alignment of subspaces]
\label{prop:thermalizationsubspacesconverse}
Given two subspaces $\mathcal{H}_R$, $\mathcal{H}_{\rho}$ with the associated projectors $\proj_R$, $\proj_{\rho}$ and a constant $\lambda \geq 0$ with $\lambda < 1$, if for \textbf{every} orthonormal basis $\mathcal{B}_{\rho} = \lbrace \lvert b_k\rangle_{\rho}\rbrace \subset \mathcal{H}_{\rho}$ for $\mathcal{H}_{\rho}$, the fraction of basis states in which the expectation value of $\proj_R$ deviates from $G_{R\rho}^{(2)}$ by more than $\lambda$ is bounded by $f_{\lambda}^{\max}$, i.e.
\begin{equation}
    f_{\lambda} \equiv \frac{1}{D_{\rho}} \sum_{\lvert b_k\rangle \in \mathcal{B}_{\rho}} \ifc\left(\left\lvert\langle b_k\vert \proj_R\lvert b_k\rangle - G_{R\rho}^{(2)} \right\rvert > \lambda\right) \leq f_{\lambda}^{\max},
    \label{eq:basisfractionbound_converse}
\end{equation}
then the variance $\sigma_{R\rho}^2$, given in terms of correlators by Eq.~\eqref{eq:anglevariance_def}, is bounded by:
\begin{equation}
    \sigma_{R\rho}^2 \leq \lambda^2+(1-\lambda^2)f_{\lambda}^{\max}.
    \label{eq:conversevariancebound}
\end{equation}
\end{proposition}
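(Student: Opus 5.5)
The approach is to apply the hypothesis to one well-chosen basis: the principal-axis basis $\lbrace \lvert w_k\rangle\rbrace$ of $\mathcal{H}_{\rho}$ from Lemma~\ref{lem:Halmos}. In this basis the diagonal expectation values of $\proj_R$ are exactly the squared cosines of the principal angles. Indeed, from $\lvert w_k\rangle = \cos\theta_k\lvert u_k\rangle + \sin\theta_k\lvert v_k\rangle$, the form of $\proj_R$ in the lemma, and the orthogonality of $\lvert v_k\rangle$ to all $\lvert u_\ell\rangle$ and $\lvert a_j\rangle$, we get
\begin{equation}
    \langle w_k\vert \proj_R\lvert w_k\rangle = \cos^2\theta_k .
\end{equation}
Since $\lbrace \lvert w_k\rangle\rbrace_{k=1}^{D_\rho}$ is an orthonormal basis of $\mathcal{H}_\rho$ (it diagonalizes $\proj_\rho$ with eigenvalue $1$), the hypothesis \eqref{eq:basisfractionbound_converse} applies to it. Hence at most $f_\lambda^{\max} D_\rho$ indices $k$ satisfy $\lvert\cos^2\theta_k - G_{R\rho}^{(2)}\rvert > \lambda$.

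Next, split the sum in Eq.~\eqref{eq:anglevariance_def} into the ``thermal'' indices $K_{\mathrm{t}}$, where the deviation is at most $\lambda$, and the ``nonthermal'' indices $K_{\mathrm{n}}$, with $\lvert K_{\mathrm{n}}\rvert = f D_\rho$ and $f \leq f_\lambda^{\max}$. Each term in $K_{\mathrm{t}}$ contributes at most $\lambda^2$. For $K_{\mathrm{n}}$, use the crude bound that both $\cos^2\theta_k$ and $G_{R\rho}^{(2)}$ lie in $[0,1]$, so each squared deviation is at most $1$. This gives
\begin{equation}
    \sigma_{R\rho}^2 \leq (1-f)\lambda^2 + f = \lambda^2 + (1-\lambda^2) f .
\end{equation}
The right-hand side is increasing in $f$ because $\lambda<1$, so replacing $f$ by $f_\lambda^{\max}$ yields Eq.~\eqref{eq:conversevariancebound}.

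The only step needing care is justifying that the principal-axis vectors $\lvert w_k\rangle$ really are a basis inside $\mathcal{H}_\rho$ and that the auxiliary space $\mathcal{H}_{\aux}$ causes no trouble. When $\theta_k=0$ we have $\lvert w_k\rangle = \lvert u_k\rangle \in \mathcal{H}$, and otherwise $\lvert v_k\rangle\in\mathcal{H}$, so every $\lvert w_k\rangle$ lies in $\mathcal{H}_\rho$. Beyond this, the argument is a direct counting bound and I expect no real obstacle. The requirement ``for every basis'' in the hypothesis is used only through this single basis.
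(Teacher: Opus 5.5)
Your proof is correct and follows the paper's argument exactly. You apply the hypothesis to the principal-axis basis $\lbrace \lvert w_k\rangle\rbrace$, where $\langle w_k\vert\proj_R\lvert w_k\rangle=\cos^2\theta_k$, bound the terms of Eq.~\eqref{eq:anglevariance_def} by $\lambda^2$ on the thermal fraction and by $1$ on the rest, and then pass to $f_\lambda^{\max}$ using $1-\lambda^2>0$; your extra check that the $\lvert w_k\rangle$ lie in $\mathcal{H}_\rho$ despite $\mathcal{H}_{\aux}$ makes explicit a point the paper leaves implicit.
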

\begin{proof}
    In Eq.~\eqref{eq:basisfractionbound_converse}, choose $\lvert b_k\rangle = \lvert w_k\rangle$ (the principal axes in $\proj_{\rho}$), in which case a fraction $(1-f_{\lambda})$ of terms in Eq.~\eqref{eq:anglevariance_def} contributes at most $\lambda^2$ and the remaining fraction $f_{\lambda}$ of terms contributes at most $1$, giving Eq.~\eqref{eq:conversevariancebound} on using $f_{\lambda} \leq f_{\lambda}^{\max}$ as $(1-\lambda^2) > 0$.
\end{proof}

The combination of Theorem~\ref{thm:quantumthermalization_subspaces} and Proposition~\ref{prop:thermalizationsubspacesconverse} shows that the smallness of $\sigma_{R\rho}^2$ is a faithful indicator of thermalization in pure states, specifically almost all pure states in \textit{every} orthonormal basis in $\mathcal{H}_{\rho}$, both implying and being implied by the latter (with appropriate quantitative restrictions). But the qualifier indicated above is as follows: if we relax the requirement of thermalization to apply to only \textit{some} (not all) orthonormal bases in $\proj_{\rho}$, it is no longer necessary that $\sigma_{R\rho}^2$ is small. Indeed, by standard typicality arguments~\cite{vonNeumannThermalization, tumulka_CT, CanonicalTypicalityPSW, NormalTypicality}, almost all pure states (in $\mathcal{H}_{\rho}$, where any finite set of states has measure zero) have a thermal expectation value of any given observable $\proj_R$ irrespective of any details of the system (if $D_{\rho}$ is sufficiently large). What $\sigma_{R\rho}^2$ brings to the table is a stronger constraint on almost all pure states in every (finite) orthonormal basis in $\mathcal{H}_{\rho}$, where resolving a finite set of states is beyond the reach of typicality arguments. Such a stronger constraint is useful for providing guarantees of thermalization when the initial states of interest belong to atypical bases that are easier to access in practice, such as the computational basis of a many-qubit system\footnote{At the same time, one can also wonder in parallel if there is an intermediate degree of rigorous predictability between the previously known universal typicality results for ``almost all'' orthonormal bases in a subspace, and strong thermalization results for almost all states in \textit{every} orthonormal basis in a space as obtained here. Such an intermediate approach may be helpful in systems where certain bases of interest thermalize, but not every conceivable basis.}.

\section{Dynamics: Subspace alignment in many-body systems}
\label{sec:dynamics}

While Sec.~\ref{sec:subspacethermalization} contains our main technical results, we will now undertake the task of completing the translation of these results to the language of few-body correlation functions in many-body systems. This will bridge these technical statements to the physical picture provided in the synopsis in Sec.~\ref{sec:synopsis}.

Sec.~\ref{sec:typicality} will consider the typical behavior of these correlation functions, obtaining constrains on the sizes of the projectors for the results of Sec.~\ref{sec:subspacethermalization} to be useful in typical cases. Sec.~\ref{sec:restoredynamics} will consider the time evolution of these correlation functions, and point to certain difficulties in generalizing our instantaneous results to carry implications outside the times of observation for Hamiltonian systems. This makes the restriction to Hamiltonian dynamics unnecessary for our present purposes, and allows us to consider general time-dependent unitary dynamics. Sec.~\ref{sec:manybodysystems} will specifically formulate Theorem~\ref{thm:quantumthermalization_subspaces} and Proposition~\ref{prop:thermalizationsubspacesconverse} in a form compatible with the many-body setting of Sec.~\ref{sec:synopsis}, namely Corollary~\ref{cor:manybodypurestatethermalization}.

\subsection{Typicality of instantaneous subspace alignment}
\label{sec:typicality}

Theorem~\ref{thm:quantumthermalization_subspaces} is a general result that constrains instantaneous thermalization in pure states given certain subspace alignment properties implied by correlation functions, with Proposition~\ref{prop:thermalizationsubspacesconverse} handling its converse. It is also instructive to consider when such an alignment may take place in physical systems. Our key observation will be that this \textit{typically} requires $D_{\rho} \ll D_{R}$, i.e., the space of states represented by $\proj_{\rho}$ is much smaller than a degenerate eigenspace of the observable $\proj_R$.

Generic behavior is obtained when $\proj_R$ and $\proj_{\rho}$ are randomly chosen projectors~\cite{vonNeumannThermalization, NormalTypicality} acting on $\mathcal{H}$. This can be effectively modeled by taking $\proj_{R} = \hat{V}_R \proj_{R1} \hat{V}_R^\dagger$ and $\proj_{\rho} = \hat{V}_{\rho} \proj_{\rho1} \hat{V}_{\rho}^\dagger$, where $\proj_{R1}$ and $\proj_{\rho 1}$ are some fixed projectors of dimensions $D_R$ and $D_{\rho}$, while $\hat{V}_R$ and $\hat{V}_{\rho}$ are chosen to be random unitaries, specifically according to the Haar measure~\cite{HaarBook} on the unitary group $\unitarygroup(D)$, corresponding to the Circular Unitary Ensemble (CUE)~\cite{Mehta, Haake} acting on the $D$-dimensional Hilbert space $\mathcal{H}$. However, in all correlation functions of interest, only the combination $\hat{V} = \hat{V}_R^\dagger \hat{V}_{\rho}$ occurs:
\begin{equation}
    G_{R\rho}^{(2n)} = \frac{1}{D_{\rho}}\Tr\left(\left[\proj_{R1}\hat{V} \proj_{\rho 1} \hat{V}^\dagger\right]^n\right);
    \label{eq:correlator_randomization_def}
\end{equation}
by standard properties of the Haar measure (invariance under multiplication by a unitary), it is sufficient to consider $\hat{V}$ itself to be randomly chosen according to the Haar measure of CUE.

Averaging $\hat{V}$ over CUE, we are interested in the mean values of correlators,
\begin{equation}
    \left\langle G_{R\rho}^{(2n)} \right\rangle_{\hat{V},\text{CUE}} = \int_{\unitarygroup(D)}\diff \hat{V}\ \frac{1}{D_{\rho}}\Tr\left(\left[\proj_{R 1} \hat{V} \proj_{\rho 1} \hat{V}^\dagger\right]^n\right),
    \label{eq:Haar_mean_G}
\end{equation}
as well as the variances $\left\langle [G_{R\rho}^{(2n)}]^2 \right\rangle_{\hat{V},\text{CUE}} - \left\langle G_{R\rho}^{(2n)} \right\rangle_{\hat{V},\text{CUE}}^2$, which determine if a typical member of CUE has $G_{R\rho}^{(2n)}$ close to the Haar average. However, most of these are tedious to compute: the mean values are (conveniently) straightforward for $2n \leq 4$, while the variances are (inconveniently) manageable only for $2n = 2$; higher orders require a careful consideration of multiple terms associated with permutations of degree $n$ as per Weingarten calculus~\cite{Weingarten1, Weingarten2} (see e.g. \cite{MertWg} for relevant symbolic manipulation resources for these terms). Recall that $\sigma_{R\rho}^2$ involves both $2n = 2$ and $2n=4$. Most crucially for our purposes, the $2n = 4$ variance appears to be quite formidable, which we will circumvent using stronger concentration inequalities.

Evaluating the Haar averages for the easily tractable cases is straightforward using standard Haar integral relations~\cite{Weingarten1, Weingarten2, ChaosComplexityDesign, ChaosComplexityRMT, CotlerHunterJones2}, as described in Appendix~\ref{app:Weingarten}, We obtain, with $D_S = D/D_R$ and $D_{\sigma} = D/D_{\rho}$:
\begin{align}
    \left\langle G_{R\rho}^{(2)} \right\rangle_{\hat{V},\text{CUE}} &= \frac{1}{D_S}, \\
    \left\langle [G_{R\rho}^{(2)}]^2 \right\rangle_{\hat{V},\text{CUE}} - \left\langle G_{R\rho}^{(2)} \right\rangle_{\hat{V},\text{CUE}}^2 &= \frac{(D_{\sigma}-1)(D_S-1)}{(D^2-1)D_S^2} \\
    \left\langle G_{R\rho}^{(4)} \right\rangle_{\hat{V},\text{CUE}} &= \frac{D^2}{D^2-1}\left\lbrace \frac{1}{D_S^2} + \frac{1}{D_S D_{\sigma}} - \frac{1}{D_S^2 D_{\sigma}} - \frac{1}{D D_S D_{\sigma}}\right\rbrace.
\end{align}
We observe that the variance of $G_{R\rho}^{(2)}$ becomes negligible compared to $\left\langle G_{R\rho}^{(2)} \right\rangle_{\hat{V},\text{CUE}}$ in the large-$D$ limit with fixed $D_S$, $D_{\sigma}$. This limit is justified in Sec.~\ref{sec:manybodysystems} (and anticipated in Sec.~\ref{sec:intro} and Sec.~\ref{sec:synopsis}), where we want $D_S$ and $D_{\sigma}$ to capture the dimensionality of few-body systems in a thermodynamically large system of dimension $D$.

Now, we turn to the size of fluctuations for $2n=4$. An easier and more ``natural'' way of estimating the size of deviations, without evaluating the Haar averages of moments, is to use concentration of measure results for the unitary group~\cite{HaarBook} as is standard in typicality arguments~\cite{vonNeumannThermalization, NormalTypicality} (which is similar to the use of L\'{e}vy's lemma for Haar random vectors on a sphere~\cite{HaarBook} in state typicality arguments~\cite{CanonicalTypicalityPSW, CanonicalTypicalityPSW2}). Concentration of measure loosely states that the size of deviations from the Haar average for any sufficiently ``coarse-grained'' $O(1)$ quantity in a typical member of CUE is vanishingly small as $D\to\infty$. The precise size of these deviations for the above correlation functions is rigorously estimated in Appendix~\ref{app:concentration}; we note that other concentration bounds for closely related correlation functions have been derived in Refs.~\cite{SAZ, DKM}, which our concentration results are similar to with some technical differences.

Qualitatively, concentration of measure implies that with a probability of almost $1$, the size of fluctuations in $G_{R\rho}^{(2n)}$ is of the order (see Appendix~\ref{app:concentration} for a rigorous quantitative statement)
\begin{equation}
    \left\lvert G_{R \rho}^{(2n)} - \left\langle G_{R\rho}^{(2n)} \right\rangle_{\hat{V},\text{CUE}}\right\rvert \lesssim \frac{2n \sqrt{2D_{\sigma}}}{D} \kappa,
    \label{eq:generalcorrelators_concentration}
\end{equation}
for some large constant $\kappa \gg 1$.
These fluctuations may therefore be neglected as $D\to\infty$ e.g. for fixed $D_{\sigma}$. For near-term practical computations in small systems, we note that these fluctuations are negligible even if $D \sim D_{\sigma}$ i.e. the ``bath'' subsystem need not be large compared to the core.

Given the smallness of fluctuations, we have the following leading order behavior of the relevant correlators for ``typical'' choices of $\proj_R$ and $\proj_{\rho}$ (taking $D\to\infty$ while keeping $D_S$, $D_{\sigma}$ finite):
\begin{align}
    G_{R\rho}^{(2)} &\simeq \frac{1}{D_S}, \label{eq:G2Haar} \\
    G_{R\rho}^{(4)} &\simeq \frac{1}{D_S^2}+ \frac{1}{D_{\sigma} D_S} - \frac{1}{D_{\sigma} D_S^2}. \label{eq:G4Haar}
\end{align}
Therefore, one should typically see that the variance of principal angles saturates to (here $D_{\sigma} \geq D_S$ by assumption, as $D_R \geq D_{\rho}$):
\begin{equation}
    \sigma_{R\rho}^2 \simeq \frac{1}{D_\sigma D_S}\left(1-\frac{1}{D_S}\right).
    \label{eq:sigmaHaar}
\end{equation}
For this variance to be small, i.e. $\sigma_{R\rho}^2/[G_{R\rho}^{(2)}]^2 \ll 1$, we require that $D_\sigma \gg D_S$. This is a very important point in this context: we note that even $\textit{typically}$, subspaces with $D_{\sigma} \sim D_S$ have large variances and their principal axes fail to thermalize. In fact, choosing $\lambda = \epsilon D_S^{-1}$ (with $\epsilon> 0$ expected to be small but finite), we obtain from Theorem~\ref{thm:quantumthermalization_subspaces}:
\begin{align}
    D_{\therm}(\epsilon D_S^{-1}) &\gtrsim D_{\rho}\left(1-\frac{D_S-1}{\epsilon^2 D_{\sigma}}\right), \nonumber \\
    f_{\epsilon D_S^{-1}} &\lesssim 4\left[\frac{D_S^2-D_S}{4\epsilon^3 D_{\sigma}}\right]^{1/3},
\end{align}
which requires $D_{\sigma} \gg D_S^2/\epsilon^3$ (or $D_{\rho} \ll \epsilon^3(D_R/D) D_R$) to give good bounds. As $\epsilon$ is a finite constant and $D_S$ is also assumed to be finite, this still allows $\proj_{\rho}$ to be a high-dimensional projector in the sense that $D_{\rho} > \epsilon_{\text{hd}} D$ for some sufficiently small $\epsilon_{\text{hd}} = \Theta(1)$. Moreover, from Proposition~\ref{prop:thermalizationsubspacesconverse}, we also have
\begin{equation}
    f_{\epsilon D_S^{-1}}^{\max} \geq \frac{D_S^2 \sigma_{R\rho}^2 -\epsilon^2}{D_S^2-\epsilon^2} \geq \sigma_{R\rho}^2 - \frac{\epsilon^2}{D_S^2} \simeq \frac{1}{D_S D_{\sigma}}\left(1-\frac{1+\epsilon^2 D_{\sigma}}{D_S}\right), 
\end{equation}
which shows a clear nonthermal fraction of basis states in $\proj_{\rho}$ for $\epsilon \ll 1$ if $1/(D_S D_{\sigma})$ is appreciable, e.g. for $D_{\sigma} \sim D_S$. Therefore, large subspaces typically almost never thermalize in all their constituent pure states, and there necessarily exist noticeable nonthermal bases. It becomes \textit{necessary} to choose smaller subspaces $\mathcal{H}_{\rho}$ with $D_\rho \ll D_R$ to guarantee thermalization to within a finite resolution for almost all states (in every orthonormal basis) within the subspace.

Some intuition for this $D_{\rho}$-dependence may be acquired as follows: it is guaranteed that any set of eigenstates of $\proj_R$ (which is highly degenerate) do not thermalize, having expectation values equal to the eigenvalues $\lbrace 0,1\rbrace$. These nonthermal eigenstates are visible to any typical subspace $\proj_{\rho}$ of sufficiently large dimension (e.g., trivially, if $D_{\rho} = D$, $\mathcal{H}_{\rho} = \mathcal{H}$ contains these nonthermal states). Consequently, a typical large $\proj_{\rho}$ has principal axes that are sufficiently sensitive to these nonthermal eigenstates to not allow thermalization within $\mathcal{H}_{\rho}$. This can only be mitigated by making $D/D_{\rho}$ large, whose implications we will examine more closely in Sec.~\ref{sec:manybodysystems}.

Having assessed the feasibility of finding a sufficiently small variance to make Theorem~\ref{thm:quantumthermalization_subspaces} useful, i.e., finding that small variance is typical provided $\proj_{\rho}$ though large is parametrically smaller in dimension that $\proj_R$, we briefly comment on what appears to be a counter-intuitive (or at least, to us, not immediately intuitive) geometry of this situation. It is well known that two random pure states are typically nearly orthogonal, i.e. $\lvert \langle \psi\vert \phi\rangle\rvert \sim O(1/\sqrt{D})$ with leading order $O(1/\sqrt{D})$ fluctuations, if $\lvert \psi\rangle, \lvert \phi\rangle \in \mathcal{H}$ are independently chosen according to the Haar measure of states (the measure on $\mathcal{H}$ that is invariant under unitary transformations). However, the above estimates show that the behavior of two random high-dimensional surfaces is quite different: most of the $D_{\rho}$ principal angles cluster around a finite value $\theta = \arccos D_S^{-1}$ determined by the larger subspace, with fluctuations of size $\sim \sqrt{D_S/D_{\sigma}}$, which becomes negligible as $D_{\sigma} \gg D_S$. What results is a phenomenon of \textit{uniform alignment} of most of the corresponding principal axes of the two subspaces with nearly the same angle $\theta$, rather than strong random fluctuations or near-orthonormality as with pure states (note, however, that this random-orthonormal pure state behavior is recovered above for $D_R = D_{\rho} = 1$). However, uniform alignment is only seen for $D_{\sigma} \gg D_S$, and completely disappears if $D_{\sigma} \sim D_S$. From our present standpoint, this uniform alignment effect is the geometric phenomenon that is at the root of pure state quantum thermalization, its typicality directly translating into the ubiquity of the latter in sufficiently restricted subspaces.

\subsection{Dynamics of subspace alignment}
\label{sec:restoredynamics}

We had deliberately ignored time evolution in Sec.~\ref{sec:subspacethermalization} section to focus on the core mathematical content of our results, which expresses pure state thermalization in terms of a simple geometric picture of alignment of subspaces. Here, we will start addressing the direct implications of Sec.~\ref{sec:subspacethermalization} for thermalization dynamics.

For this purpose, let us now formally introduce dynamics into our system (in the Schr\"{o}dinger picture~\cite{ShankarQM}): we will take states to have a time dependence $t$ generated by some (in general, time-dependent) unitary operator $\uh(t)$:
\begin{equation}
    \lvert \psi(t)\rangle = \uh(t)\lvert \psi\rangle.
    \label{eq:timeevolution}
\end{equation}
Note that we do not require that this unitary is autonomous, i.e.,  generated by a Hamiltonian; as our thermalization results apply at each instant of time, non-autonomous unitary dynamics is in its domain of applicability just as well as autonomous unitary dynamics.

As we have previously associated $\mathcal{H}_{\rho}$ with our subspace of states, we will require all states in this subspace to evolve according to Eq.~\eqref{eq:timeevolution}. Consequently, other quantities associated with this subspace will also acquire some time-dependence, e.g., $\proj_{\rho}(t) = \uh(t)\proj_{\rho}\uh^\dagger(t)$. As $\proj_R$ was associated with observables, this subspace has no dynamics of its own; however, the principal axes and angles in Lemma~\ref{lem:Halmos} depend on $\proj_{\rho}(t)$ and will therefore become time-dependent, e.g. $\lvert u_k(t)\rangle$, $\lvert v_k(t)\rangle$, $\lvert w_k(t)\rangle$, and $\theta_k(t)$. Where specification is necessary, by $\mathcal{H}_{\rho}$ and $\mathcal{H}_{R}$ we will mean the subspaces $\mathcal{H}_{\rho}(t)$ and $\mathcal{H}_{R}(t)$ corresponding to these projectors at $t=0$.

All our considerations in Sec.~\ref{sec:subspacethermalization} now carry over to the corresponding time-dependent correlation functions:
\begin{equation}
    G_{R\rho}^{(2n)}(t) \equiv \frac{1}{D_{\rho}}\Tr\left(\left[\proj_R \proj_{\rho}(t)\right]^n\right).
    \label{eq:timedependentcorrelatorsdef}
\end{equation}
The $2n = 2$ case is the standard $2$-point correlation function of $\proj_R$ and $\proj_{\rho}$. The $2n = 4$ case corresponds to a slightly more nontrivial object: $G_{R\rho}^{(4)}(t)$ may be recognized as the out-of-time-ordered correlator (OTOC) of $\proj_R$ and $\proj_{\rho}(t)$, whose role was qualitatively delineated in Sec.~\ref{sec:synopsis}.

The behavior of these correlation functions can be characterized, for example, for Haar random unitaries. To be specific, let us pick some time $t=t_0$ at which $\uh(t_0)$ is sampled from the circular unitary ensemble (CUE) of random unitary matrices~\cite{Haake, Mehta}. This is a reasonable model for generic quantum dynamics: even unitaries with a much simpler (than Haar random) structure tend to show leading-order correlations (e.g. for operators of high rank) comparable to Haar random unitaries (motivating several facets of the field of ``quantum chaos''~\cite{Haake, ChaosComplexityRMT} that are based on comparisons of different quantities in a system with ``sufficiently complex'' dynamics to Haar random behavior).

Typical systems may evolve to show leading-order Haar behavior in a specific subset of observables of interest after some ``Haar typicality time''\footnote{Such a time scale goes by different names; one common choice is ``scrambling time''~\cite{ChaosComplexityRMT}, which is used in several different (but somewhat related) ways in other contexts e.g. for entanglement generation~\cite{HaydenPreskill, LashkariFastScrambling, dynamicalqspeedlimit, dynamicalqfastscrambling}. We will avoid using such an umbrella term with a view to distinguish various timescales with more specific names.} $t_s$ that potentially depends on the chosen set of observables. In many cases, the relevant Haar ensemble is CUE, in which case by Eqs.~\eqref{eq:G2Haar}, \eqref{eq:G4Haar} and \eqref{eq:sigmaHaar} one can expect (for $t$ implicitly smaller than the scale of quantum recurrences~\cite{QuantumRecurrences, BrownSusskind2}):
\begin{align}
    G_{R\rho}^{(2)}(t > t_s) &\simeq \frac{1}{D_S}, \\
    G_{R\rho}^{(4)}(t > t_s) &\simeq \frac{1}{D_S^2}+ \frac{1}{D_{\sigma} D_S} - \frac{1}{D_{\sigma} D_S^2}, \\
    \sigma_{R\rho}^2(t > t_s) &\simeq \frac{1}{D_\sigma D_S}\left(1-\frac{1}{D_S}\right).
\end{align}
If it is known that such ``typical'' behavior lasts for almost all times, then thermalization in pure states follows at almost all times as in Sec.~\ref{sec:typicality}. Especially in experimentally relevant situations, however, it may only be possible to establish such behavior for some finite set of times. It is worth exploring to what extent one can extrapolate from a finite set of times to almost all times.

Such an extrapolation is possible for $2$nd order functions in autonomous (Hamiltonian) systems as follows. Let $\mathcal{T}_{R\rho}^{(2)}(\epsilon)$ be the time interval in which $G_{R\rho}^{(2)}(t)$ is close to its Haar-averaged value within some resolution $\epsilon > 0$:
\begin{equation}
    \mathcal{T}_{R\rho}^{(2)}(\epsilon) \equiv \left\lbrace t: \left\lvert G_{R\rho}^{(2)}(t) - \frac{1}{D_S}\right\rvert \leq \epsilon\right\rbrace.
\end{equation}
If the dynamics is autonomous (generated by a Hamiltonian), then the results of Ref.~\cite{dynamicalqthermalization} allow us to provide powerful constraints on $\mathcal{T}_{R\rho}^{(2)}(\epsilon)$, given only the dynamics of the autocorrelator $G_{RR}^{(2)}(t)$ over a finite time interval. For example, given any \textit{symmetric} and contiguous time interval $\mathcal{T}_{\obs} = [-T_{\obs},T_{\obs}]$ of finite duration, if it is possible to determine that the autocorrelator is nonthermal only for a small fraction $\kappa_{RR} > 0$ of this interval,
\begin{equation}
    1-\frac{\left\lvert \mathcal{T}_{RR}^{(2)}(\varepsilon) \cap \mathcal{T}_{\obs}\right\rvert}{\left\lvert \mathcal{T}_{\obs}\right\rvert} < \kappa_{RR},
    \label{eq:autocorrelatortimeintervalconstraint}
\end{equation}
then it follows that for any contiguous time interval $\mathcal{T} = [t_0, t_0+T]$ of finite length $\lvert \mathcal{T}\rvert = T$, and for any chosen constant $\xi > 0$,
\begin{equation}
    1-\frac{\left\lvert \mathcal{T}_{R\rho}^{(2)}(\lambda) \cap \mathcal{T}\right\rvert}{\left\lvert \mathcal{T}\right\rvert} \leq \frac{D_{\sigma}}{\lambda^2D_S}\left(\frac{\xi}{|\sin\xi|}\sqrt{\varepsilon^2 + 2\kappa_{RR}} + \frac{T_{\obs}}{\xi T}\right).
    \label{eq:secondordertimeintervalconstraint}
\end{equation}
This is derived in Appendix~\ref{app:secondordercorrelators} [footnote~\ref{footnote:timeintervalconstraintderivation}] as a consequence of (a variant of) Eq.~\eqref{eq:synopsis_autocorrelator_to_correlator} in Sec.~\ref{sec:synopsis} (and is in principle a slightly weaker inequality than the optimal inequality that would follow from the results therein). Thus, for a given small resolution $\lambda$ with autocorrelator observations that satisfy $\epsilon^2, \kappa_{RR} \ll \lambda^4$, the nonthermal set of times $\mathcal{T}_{R\rho}^{(2)}(\lambda)$ of the correlator occupies a negligibly small fraction of any interval $\mathcal{T}$ of sufficiently large duration $T \gg T_{\obs}$, as long as $D_S$ and $D_\sigma$ are finite. In this way, the thermalization of the autocorrelator over a certain time interval implies the thermalization of all other second order two point functions over almost all times in long intervals.

When it comes to fourth order functions, we do not appear to be so fortunate. For example, it is argued in Appendix~\ref{app:fourthordercorrelators} that the method of Ref.~\cite{dynamicalqthermalization}, suitably generalized as in Appendix~\ref{app:autocorrelators_ips}, does not yield a useful bound for any fourth-order correlator from observations over a finite time interval with finite resolution. From a knowledge of the second order correlators alone, the best we can seem to do is to obtain the following straightforward inequalities:
\begin{equation}
    G_{R\rho}^{(2)}(t) \geq G_{R\rho}^{(4)}(t) \geq \left[G_{R\rho}^{(2)}(t)\right]^2.
    \label{eq:G2G4bound}
\end{equation}
Here, the first inequality follows from  e.g. $\cos^4\theta_k \leq \cos^2\theta_k$ in Eqs.~\eqref{eq:G2anglesexpr} and \eqref{eq:G4anglesexpr}, while the second follows from $\sigma_{R\rho}^2(t) \geq 0$ in Eq.~\eqref{eq:anglevariance_def}. Which of these two extremes $G_{R\rho}^{(4)}(t)$ is closer to at any given time determines the closeness of the subspace $\mathcal{H}_{\rho}(t)$ to being strongly thermal for $\proj_R$. When $G_{R\rho}^{(4)}(t)$ nearly saturates the second inequality, i.e. virtually attains its minimum value $[G_{R\rho}^{(2)}(t)]^2$, then we have thermalization in almost all pure states in $\mathcal{H}_{\rho}(t)$; however, when it is comparable to $G_{R\rho}^{(2)}(t)$, the principal axes at $t$ strongly fail to thermalize, e.g. may have extreme principal angles $\theta_k \in \lbrace 0, \pi/2\rbrace$ if the first inequality is saturated. Consequently, even in $\mathcal{T}_{R\rho}^{(2)}(\lambda)$, we can only constrain:
\begin{equation}
     \frac{1}{D_S} + \lambda \geq \sigma_{R\rho}^2\left(t \in \mathcal{T}_{R\rho}^{(2)}(\lambda)\right) \geq 0,
\end{equation}
which is not sufficient to conclude quantum thermalization in pure states.

To see why such forecasts from the behavior of a given set of correlators $G_{R\rho}^{(2n)}(t)$ alone may be (at least) difficult, let us return to the typicality results of Sec.~\ref{sec:typicality}. In typical systems, second order autocorrelators $G_{RR}(t) \sim 1/D_S$ saturate to the same value required to establish the thermalization $G_{R\rho}^{(2)}(t) \sim1/D_S$ of other second order correlators. Therefore, if we choose $\mathcal{H}_{\rho} \subset \mathcal{H}_R$, there is no obstacle to the thermalization of $\proj_{\rho}(t)$.

This is not the case for fourth-order correlators. In particular, let $\mathcal{H}_A$ be a subspace such that $\dim\mathcal{H}_A = \dim \mathcal{H}_R$. Then, even typically,
\begin{align}
    \left.G_{RA}^{(4)}(t)\right\rvert_{\text{typical}} &\simeq \frac{2}{D_S^2}
-\frac{1}{D_S^3} \nonumber \\
\implies \left.\sigma_{RA}^2(t)\right\rvert_{\text{typical}} &\simeq \frac{1}{D_S^2}
\end{align}
which, by Proposition~\ref{prop:thermalizationsubspacesconverse} implies that in the basis of principal axes within $\proj_{A}(t)$, the fraction $f_{\lambda}^{\max}$ that fails to thermalize to resolution $\lambda$ is at least, for $\lambda \leq \epsilon/D_S$ for $\epsilon < 1$,
\begin{equation}
    f_{\lambda}^{\max}(t) \geq \frac{\sigma_{RA}^2(t)-\lambda^2}{1-\lambda^2} \geq \frac{1-\epsilon^2}{D_S^2-1},
\end{equation}
which is a $\Theta(1)$ fraction of states. Therefore, the principal axes of $\proj_A(t)$ are even \textit{typically} nonthermal, consistent with our observation after Eq.~\eqref{eq:sigmaHaar} that large subspaces $\mathcal{H}_{\rho}$ generally have a large variance of principal angles. Given any $\mathcal{H}_{\rho} \subset \mathcal{H}_A$, it is possible that the principal axes of $\proj_{\rho}(t)$ [noting that $\mathcal{H}_{\rho}(t) \subseteq \mathcal{H}_{A}(t)$] have a large overlap with the nonthermal principal axes of $\proj_A(t)$ with respect to $\proj_R$.

This means that there can be no general statement that \textit{every} high-dimensional subspace $\mathcal{H}_{\rho}$ remains thermal \textit{at all times} with respect to $\proj_R$. It may still be possible to show that every high-dimensional subspace remains thermal at almost all times\footnote{Intuitively, that such long-time predictions are even possible for $2$nd order functions as in \cite{dynamicalqthermalization} was not particularly clear within the standard framework of quantum thermalization, but for the existence of semiclassical examples that hinted at such a possibility~\cite{KhinchinStatMech, Shnirelman, CdV, ZelditchOG, ZelditchMixing, Sunada, Zelditch, Anantharaman} as discussed there. We would therefore not like to dismiss the potential existence of some general prediction strategy for $4$th order functions on mere intuitive grounds.}, i.e., maintains its overlap with the principal axes of any corresponding $\proj_A(t)$ only briefly. In Sec.~\ref{sec:Discussion_forecasts}, we will develop an argument that shows that thermalization at almost all times can be extrapolated from finite-time data (usually) involving an additional set of correlators, without fully establishing its computational accessibility.

We will therefore not specialize to autonomous dynamics and continue to express our results for general unitary dynamics, unlike Ref.~\cite{dynamicalqthermalization}, as gains in predictive power for the autonomous case remain to be worked out. For this more general form of unitary dynamics, that a direct determination of $\sigma_{R\rho}^2(t)$ over a given set of times allows constraining pure state thermalization \textit{at the same times} is the best prediction we have to offer in this work. This may be compared to a simpler variant of Eq.~\eqref{eq:synopsis_autocorrelator_to_correlator} that can be derived for general (not necessarily autonomous) dynamics, discussed in Appendix~\ref{app:autocorrelators_ips} [Eq.~\eqref{eq:autocorrelator_to_correlator_generaldynamics}]. We will return to this issue qualitatively in Sec.~\ref{sec:Discussion}, but now turn to showing that the determination of $\sigma_{R\rho}^2(t)$ at a given time $t$ is significantly easier from a computational or experimental perspective than a direct determination of thermalization in pure states in many-body systems.

\subsection{Application to many-body systems}
\label{sec:manybodysystems}

The high-dimensional projectors in our results have a natural realization in many-body systems: as anticipated in Sec.~\ref{sec:intro} and Sec.~\ref{sec:synopsis}, they can be thought of (for example) as projectors onto specific pure states in a subsystem, with the state of the complementary subsystem left completely undetermined. In this setting, consider two tensor product decompositions of the Hilbert space, $\mathcal{H} = \mathcal{H}_S \otimes \mathcal{H}_E$ and $\mathcal{H} = \mathcal{H}_{\sigma} \otimes \mathcal{H}_{\eta}$, such that $\dim H_S = D_S$, $\dim \mathcal{H}_E = D_E$, $\dim \mathcal{H}_\sigma = D_{\sigma}$, and $\dim \mathcal{H}_{\eta} = D_{\eta}$, with $D = D_S D_E = D_\sigma D_{\eta}$. For a system of qubits, we also have the particle numbers $N_k = \log_2 D_k$ for each of these subsystems. Then, for any pair of states $\lvert \chi\rangle_S \in \mathcal{H}_S$ and $\lvert \psi\rangle_{\sigma} \in \mathcal{H}_{\sigma}$, it is natural to choose
\begin{align}
    \proj_R &= \lvert \chi\rangle_S\langle \chi\rvert \otimes \idop_E, \label{eq:projRdef} \\
    \proj_{\rho} &= \lvert \psi\rangle_\sigma\langle \psi\rvert \otimes \idop_{\eta}. \label{eq:projrhodef}
\end{align}
Note that $D_R \equiv \Tr[\proj_R] = D_E$ and $D_{\rho} \equiv \Tr[\proj_{\rho}] = D_{\eta}$. In such a realization, $\proj_R$ may represent an eigenstate of a few-body observable on $S$, say $\hat{A}_S = a \lvert \chi\rangle_S\langle \chi\rvert + \ldots$; more generally, one can write $\hat{A}_S = \sum_{k=1}^{D_S} c_k \proj_{Rk}$ for an orthogonal family of projectors $\proj_{Rk}$ of the above form. Concluding thermalization for all such projectors of interest can subsequently imply the thermalization of more general few-body observables.

Then, Theorem~\ref{thm:quantumthermalization_subspaces} and Proposition~\ref{prop:thermalizationsubspacesconverse} address the thermalization of $\proj_R$ in initial states of the form
\begin{equation}
    \lvert \Psi_k\rangle = \lvert \psi\rangle_{\sigma} \otimes \lvert \phi_k\rangle_{\eta},
\end{equation}
where $\lvert \phi_k\rangle_{\eta}$ forms an orthonormal basis for $\mathcal{H}_{\eta}$, with (say) $k \in \mathbb{Z}_{D_{\eta}}$, according to:
\begin{corollary}[Quantum thermalization from correlators for pure states of a bath]
\label{cor:manybodypurestatethermalization}
For a given initial state $\lvert \psi\rangle_{\sigma}$ of the ``core'' subsystem $\sigma$, define the ``core'' projector $\proj_{\psi} \equiv \lvert \psi\rangle_\sigma\langle \psi\rvert \otimes \idop_{\eta}$. Further, for any orthonormal basis $\mathcal{B}_{\eta} \equiv \left\lbrace \lvert \phi_k\rangle_{\eta}\right\rbrace_{k\in \mathbb{Z}_{D_{\eta}}}$ of the complementary ``bath'' subsystem $\eta$, define the set of normalized initial states corresponding to the given core state and all bath states in the basis:
\begin{equation}
    \mathscr{I}[\mathcal{B}_{\eta}] = \left\lbrace \lvert \Psi_k(0)\rangle = \lvert \psi\rangle_{\sigma} \otimes \lvert \phi_k\rangle_{\eta} :\ \lvert \phi_k\rangle_{\eta} \in \mathcal{B}_{\eta}\right\rbrace.
\end{equation}
Then, for any projector observable $\proj_R$ (satisfying $\proj_R^2 =\proj_R$, $\proj_R = \proj_R^\dagger$)
\begin{enumerate}
    \item If the correlation functions of $\proj_{\psi}(t) = \uh(t) \proj_{\psi} \hat{U}^\dagger(t)$ and $\proj_R$ show a ``variance'' at least as small as some $\epsilon > 0$ squared:
\begin{equation}
    \frac{1}{D_{\eta}}\Tr\left[\proj_{\psi}(t)\proj_R \proj_{\psi}(t) \proj_R\right]-\left(\frac{1}{D_{\eta}}\Tr\left[\proj_{\psi}(t)\proj_R\right]\right)^2 \leq \epsilon^2
    \label{eq:epsilonincorollary}
\end{equation}
at a given time $t$, then for \textit{any} choice of $\mathcal{B}_{\eta}$ the fraction of initial states in $\mathscr{I}[\mathcal{B}_{\eta}]$ that are nonthermal at time $t$ to any given resolution $\lambda > 0$,
\begin{equation}
    f_{\lambda}[\mathcal{B}_{\eta}](t) \equiv \frac{1}{D_{\eta}}\sum_{\lvert \Psi_k(t)\rangle \in \mathscr{I}[\mathcal{B}_{\eta}]} \ifc\left(\left\lvert \langle \Psi_k(t)\rvert \proj_R\lvert \Psi_k(t)\rangle - \frac{1}{D_{\eta}}\Tr\left[\proj_{\psi}(t)\proj_R\right]\right\rvert > \lambda\right)
\end{equation}
is constrained by:
\begin{equation}
    f_{\lambda}[\mathcal{B}_{\eta}](t) \leq \frac{3\epsilon^{2/3}}{4^{1/3}\lambda}.
    \label{eq:flambdaconstraintincorollary}
\end{equation}
\item If, on the other hand, the ``variance'' is at least as large as some $\gamma>0$ squared,
\begin{equation}
    \frac{1}{D_{\eta}}\Tr\left[\proj_{\psi}(t)\proj_R \proj_{\psi}(t) \proj_R\right]-\left(\frac{1}{D_{\eta}}\Tr\left[\proj_{\psi}(t)\proj_R\right]\right)^2 \geq \gamma^2,
\end{equation}
at a given time $t$, then there exists an orthonormal basis $\widetilde{B}_{\eta,t}$ for $\mathcal{H}_{\eta}$ corresponding to each such time $t$ such that a certain  minimum fraction of initial states in $\mathscr{I}[\widetilde{B}_{\eta,t}]$ are nonthermal to a resolution $\lambda > 0$ at the time $t$:
\begin{equation}
    f_{\lambda}[\widetilde{B}_{\eta,t}](t) \geq \frac{\gamma^2-\lambda^2}{1-\lambda^2}.
\end{equation}
\end{enumerate}
\end{corollary}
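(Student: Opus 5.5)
The corollary is a direct specialization of Theorem~\ref{thm:quantumthermalization_subspaces} (part 2) and Proposition~\ref{prop:thermalizationsubspacesconverse}. We apply both to the pair of projectors $\proj_R$ and $\proj_{\rho} \equiv \proj_{\psi}(t) = \uh(t)\proj_{\psi}\uh^\dagger(t)$ at a fixed time $t$. The first step is to note that $\proj_{\psi}(t)$ is a projector of rank $D_{\rho} = \Tr[\proj_{\psi}] = D_{\eta}$, because unitary conjugation preserves rank and idempotence. Consequently the normalized correlators of Sec.~\ref{sec:thermalsubspacetheorem} become
\begin{equation}
    G_{R\rho}^{(2)} = \frac{1}{D_{\eta}}\Tr[\proj_{\psi}(t)\proj_R], \qquad G_{R\rho}^{(4)} = \frac{1}{D_{\eta}}\Tr[\proj_{\psi}(t)\proj_R\proj_{\psi}(t)\proj_R].
\end{equation}
The second equality uses cyclicity of the trace. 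Hence the left-hand side of Eq.~\eqref{eq:epsilonincorollary} is exactly $\sigma_{R\rho}^2$ as defined in Eq.~\eqref{eq:anglevariance_def}.

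The second step identifies the states. The family $\lbrace \lvert\psi\rangle_\sigma\otimes\lvert\phi_k\rangle_\eta\rbrace_k$ is an orthonormal basis of $\mathcal{H}_{\psi} = \operatorname{range}(\proj_{\psi})$. Applying $\uh(t)$ maps it to the states $\lvert \Psi_k(t)\rangle = \uh(t)\lvert\Psi_k(0)\rangle$, which form an orthonormal basis $\mathcal{B}_\rho$ of $\mathcal{H}_{\rho}(t) = \operatorname{range}(\proj_{\psi}(t))$. These states are normalized, so $\langle \Psi_k(t)\rvert\proj_R\lvert\Psi_k(t)\rangle$ is the normalized expectation value appearing in Eq.~\eqref{eq:nonthermalbasisstates_fraction}. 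It follows that $f_\lambda[\mathcal{B}_\eta](t)$ coincides with $f_\lambda$ of Theorem~\ref{thm:quantumthermalization_subspaces} for this basis. Part 1 then follows from Eq.~\eqref{eq:nonthermalbasisstatesconstraint} with $\sigma_{R\rho}^2\le\epsilon^2$, since $3(\epsilon^2/4)^{1/3}/\lambda = 3\epsilon^{2/3}/(4^{1/3}\lambda)$.

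For part 2, the bound is only meaningful for $\lambda<1$, so we assume it; for $\lambda \geq 1$ no fraction exceeds $1$ nontrivially. We use the contrapositive of Proposition~\ref{prop:thermalizationsubspacesconverse}, or more directly its proof. Take the principal axes $\lvert w_k(t)\rangle$ of $\proj_{\psi}(t)$ relative to $\proj_R$ from Lemma~\ref{lem:Halmos}. These give an orthonormal basis of $\mathcal{H}_{\rho}(t)$ with $\langle w_k\rvert\proj_R\lvert w_k\rangle = \cos^2\theta_k$. Pulling back, we define $\lvert \phi_k\rangle_\eta$ by $\lvert\psi\rangle_\sigma\otimes\lvert\phi_k\rangle_\eta = \uh^\dagger(t)\lvert w_k(t)\rangle$. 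This is well defined because $\uh^\dagger(t)\lvert w_k\rangle$ lies in the range of $\proj_\psi$, which is exactly $\lvert\psi\rangle_\sigma\otimes\mathcal{H}_\eta$. The resulting vectors form an orthonormal basis $\widetilde{B}_{\eta,t}$ of $\mathcal{H}_\eta$.

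In this basis, Eq.~\eqref{eq:anglevariance_def} gives $\gamma^2 \le \sigma_{R\rho}^2 \le (1-f)\lambda^2 + f$, where $f = f_\lambda[\widetilde{B}_{\eta,t}](t)$. The estimate holds because each thermal term contributes at most $\lambda^2$ and each nonthermal term at most $1$, since $(\cos^2\theta_k - G^{(2)})^2\le 1$. Rearranging gives $f \ge (\gamma^2-\lambda^2)/(1-\lambda^2)$.

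The only step needing care is this pullback of the principal-axis basis to a bath basis, specifically checking that it is orthonormal and that $\mathcal{H}_{\aux}$ plays no role. The latter holds because the vectors $\lvert w_k\rangle$ are eigenvectors of $\proj_\rho$, so they lie in $\mathcal{H}$. Everything else is bookkeeping.
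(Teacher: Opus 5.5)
Your proposal is correct and follows the paper's own argument. The paper proves the corollary by applying Theorem~\ref{thm:quantumthermalization_subspaces} (part 2) and Proposition~\ref{prop:thermalizationsubspacesconverse} with $\proj_{\rho}=\proj_{\psi}(t)$ and $D_{\rho}=D_{\eta}$. The Proposition's proof uses exactly the principal-axis basis that you pull back to $\mathcal{H}_{\eta}$.

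Your restriction to $\lambda<1$ is necessary, not merely a matter of the bound being meaningful. For $\lambda>1$ the stated lower bound $(\gamma^2-\lambda^2)/(1-\lambda^2)$ exceeds $1$, since $\gamma^2\le\sigma_{R\rho}^2\le 1/4$, while $f_\lambda=0$. So part 2 is false there and implicitly inherits the Proposition's hypothesis $\lambda<1$.
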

\begin{proof}
Each statement respectively follows from applying Theorem~\ref{thm:quantumthermalization_subspaces} and Proposition~\ref{prop:thermalizationsubspacesconverse} to the stated setting.
\end{proof}

The content of the above corollary is that it is necessary and sufficient for the ``variance'' measure of $\proj_{\psi}(t)$ and $\proj_R$, constructed out of their $2$nd and $4$th order [out-of-time-ordered] correlators, to be small at time $t$ for $\proj_R$ to thermalize in an overwhelming fraction of every orthonormal basis of bath states at the time $t$ for the given core state. The time $t$ here is somewhat of a formality, given that our results do not relate quantities at different times. However, our results isolate the variance of such few body correlators as the determiners of thermalization that is sensitive to pure states in the bath. This provides a rigorous statement of our main result concerning quantum thermalization in many-body systems, on which the summary in Statement~\ref{statement:informalresult_synopsis} of Sec.~\ref{sec:synopsis} is based.

Typically, as per the arguments of Sec.~\ref{sec:typicality}, we require that $D_{\sigma} \gg D_S = \Tr[\proj_R]/D$ for the variance to be small. This makes $\proj_{\psi}$ a slightly nonlocal object, in the sense that the ``core'' subsystem $\sigma$ necessarily has to be much larger than the observed subsystem $S$. However, it can still be a finite subsystem, $D_{\sigma} = O(1)$, in the thermodynamic limit $D\to\infty$. In this sense, $\proj_{\psi}$ remains an extended but local operator relative to the system size, which we term ``controllably nonlocal''. Quantitatively, the required degree of nonlocality depends very sensitivity on the desired resolution. For example, using the typical behavior of the variance in Eq.~\eqref{eq:sigmaHaar}, we find that the dimension of the core subsystem must satisfy:
\begin{equation}
    D_{\sigma} \gtrsim \frac{D_S-1}{4D_S^2}\left(\frac{3}{\lambda f_{\lambda}}\right)^3 = \frac{D_S(D_S-1)}{4}\left(\frac{3}{\lambda_{\text{rel}} f_{\lambda}}\right)^3,
\end{equation}
where $\lambda_{\text{rel}} = \lambda D_S$ represents the resolution relative to the Haar typical thermal value $1/D_S$ of $\proj_R$. The cubic scaling of $D_{\sigma}$ with $\lambda_{\text{rel}} f_{\lambda}$ is the main practical challenge: for $N_S = 1$, with a crude resolution of $\lambda_{\text{rel}} \leq 0.1$ and $f_{\lambda} \leq 0.1$ (i.e. at least $90\%$ of bath states in every basis thermalize to $\pm 10\%$ resolution), we require a measurement of $\sigma^2 \lesssim 10^{-7}$ with $N_{\sigma} \gtrsim 24$, which requires high experimental fidelity. A more near-term experimental goal may be to establish thermalization to low resolution in a non-negligible fraction of bath states (which is still a thermodynamically large number): for $\lambda \leq 0.1$, $f_{\lambda} \leq 0.9$ (at least $10\%$ of bath states in every basis thermalize to $\pm 0.1$ resolution), and $N_S = 1$, it should suffice to have $\sigma^2 \lesssim 10^{-4}$ with $N_{\sigma} \gtrsim 12$.

Even when restricting ourselves to a specific instant of time $t$, one key qualitative difference between Corollary~\ref{cor:manybodypurestatethermalization} and Statements~\ref{intro1:statement1} and \ref{intro1:statement2} in Sec.~\ref{sec:intro} is that the latter pair only require as dynamical input the autocorrelator of a given observable, but the former involves correlators that require choosing both an observable and a core initial state. The need for fixing a core initial state stems from the fact that not all bases of initial states can be thermal at time $t$; notably, a basis that evolves to coincide at time $t$ with an eigenbasis of the (nontrivial projector) observable is necessarily non-thermal. This problem does not arise in Statements~\ref{intro1:statement1} and \ref{intro1:statement2} of Sec.~\ref{sec:intro} as they are not statements about specific instants of time: different initial bases may only briefly have significant overlap with an eigenbasis of the observable at different times, allowing each to thermalize either on average over time or at almost all times. Again from the discussion in Sec.~\ref{sec:restoredynamics}, we emphasize that thermalization at a specific time $t$ typically \textit{does not occur} for every possible choice of core initial state $\lvert \psi\rangle_{\sigma}$, particularly those that correspond to nonthermal bath states for a smaller core $\sigma'$ within $\sigma$ (e.g. with $N_{\sigma'} \sim N_S$). It is likely that thermalization does occur at \textit{almost all} times for every choice of core initial state in generic systems of interest, but this is beyond the rigorous reach of our present framework (possibly being tied to the problem of extrapolating to other times discussed in Sec.~\ref{sec:restoredynamics} and later in Sec.~\ref{sec:Discussion}), and is an important avenue for potential future extensions e.g. as discussed in Sec.~\ref{sec:Discussion_forecasts}.

Finally, let us consider how we might choose the core subsystem in practice. With the observable $\proj_R$ initially localized to the $N_S$-qubit subsystem $S$, any projector $\proj_{\psi}$ localized to a subsystem $\sigma$ within its complement $E$ does not connect different eigenspaces of $\proj_R$ (i.e., $\proj_R$ and $\proj_{\psi}$ commute). This means that one can identify states within the corresponding subspace $\mathcal{H}_{\psi}$ that lie entirely within each eigenspace of $\proj_R$. In other words, for such a choice of core subsystem, bath states that are entirely non-thermal always exist. Furthermore, this situation of nonthermality would persist until $\proj_{\psi}$ spreads into the subsystem $S$ (i.e. the commutator remains zero, or the OTOC remains large, until the operator spreads from $\sigma$ to $S$). It is therefore apparent that a practically convenient choice of core subsystem $\sigma$ that may show OTOC decay and pure state thermalization at early times, without requiring a mandatory wait~\cite{LucasReview} for operator spreading between subsystems, is one that already contains the subsystem $S$.

\section{Discussion: Experiments, timescales, and forecasts}
\label{sec:Discussion}

Having set up a formalism for predicting quantum thermalization in pure states of the bath given a knowledge of the dynamics of few-body correlators, we will now discuss some of its implications for theoretical and experimental studies of thermalization.

\subsection{Experimental accessibility}

First, let us consider the experimental accessibility of these measurements in an $N$-particle system, which we will take to be qubits for specificity. Given the core subsystem of $N_{\sigma}$ qubits in a state $\lvert \psi\rangle_{\sigma}$ with a bath of $N_\eta = N-N_{\sigma}$ qubits, and a few body observable $\proj_{R}$ supported on a subsystem of $N_S \ll N_{\sigma}$ qubits [see Eqs.~\eqref{eq:projRdef}, \eqref{eq:projrhodef} for formal expressions with $\proj_{\rho}$ identified with $\proj_{\psi}$], we need to measure the $2$nd order correlator and the $4$th order OTOC (reproduced here for convenience):
\begin{align}
    G_{R\psi}^{(2)}(t) &= \frac{1}{D_\eta}\Tr[\proj_{\psi}(t)\proj_R], \\
    G_{R\psi}^{(4)}(t) &= \frac{1}{D_\eta}\Tr[\proj_{\psi}(t)\proj_R \proj_{\psi}(t)\proj_R].
\end{align}
We recall that a finite value of $N_{\sigma}$ is sufficient to establish thermalization to within a finite resolution, with a weaker $N_{\sigma} \to \infty$ limit guaranteeing infinite resolution. From an experimental standpoint, we will assume that a finite resolution is inescapable, and therefore take $N_{\sigma}$ to be formally finite and independent of $N$.

The measurement of $G_{R\psi}^{(2)} (t)$ is straightforward, and can be done by preparing the bath subsystem $\eta$ in a maximally mixed state (which is easy to create assuming that some basic quantum gates and qubit-level control are available, e.g. by preparing each qubit in $\eta$ in a maximally entangled state with another external qubit~\cite{NielsenChuang}, see also a similar discussion in this context in \cite{dynamicalqthermalization}), evolving for the time $t$, and measuring $\proj_R$ projectively~\cite{NielsenChuang}. This corresponds to dynamics in the mixed initial state $\proj_\psi/D_{\eta}$, which we can write as:
\begin{equation}
    \hat{\rho}_{\psi} = \lvert \psi\rangle_{\sigma}\langle \psi\rvert \otimes \frac{\idop_{\eta}}{D_{\eta}}.
\end{equation}
Moreover, as $G_{R\psi}^{(2)}(t) \sim 1/D_S$ is typically finite for thermal dynamics, these measurements are successful with a finite probability\footnote{Recall that $p_{\proj}(\hat{\rho}) = \Tr[\hat{\rho}\proj]$ is the postselection probability of a projective measurement, which is $G_{R\psi}^{(2)}(t)$ here essentially by definition, whose finiteness implies that only a finite number of measurements are necessary for the measured expectation value of $\proj$ to approach its ideal expectation. In particular, the number of measurements required scales as $N_M \sim p_{\proj}(\hat{\rho})^2$.} in the thermodynamic limit $D\to\infty$.

Our main concern here is more the measurement of $G_{R\psi}^{(4)}(t)$ with the specific normalization required for our results. In particular, we want to be able to measure \textit{finite} (accessible) values of $G_{R\psi}^{(4)}(t)$, recalling that $[G_{R\psi}^{(2)}(t)]^2 \leq G_{R\psi}^{(4)}(t) \leq 1$ with both bounds being finite in our setting. The easiest way to measure an OTOC with the natural dynamics $\uh(t)$ available in a platform (i.e. no control qubits or time reversal) is to interpret it~\cite{YaoPurityOTOC} as the purity of $\hat{\rho}_{\psi}$ under the completely positive quantum operation~\cite{NielsenChuang}:
\begin{equation}
    \mathcal{M}_t(\hat{\rho}_{\psi}) = \proj_R \uh(t) \hat{\rho}_{\psi} \hat{U}^\dagger(t)\proj_R,
\end{equation}
which is straightforward to implement by a projective measurement of $\proj_R$ after evolving $\hat{\rho}_{\psi}$ for the time $t$. The purity of $\mathcal{M}(\hat{\rho})$, which can be measured at time $t$ e.g. with local-qubit control~\cite{AndreasPurityRM} is then related to the OTOC by the simple (but divergent) scale factor $D_{\eta}$:
\begin{equation}
    \mathcal{P}(t) \equiv \Tr\left(\mathcal{M}_t(\hat{\rho})^2\right) = \frac{1}{D_{\eta}}G_{R\psi}^{(4)}(t),
\end{equation}
wherein lies the obstacle for scalable measurements. For $G_{R\psi}^{(4)}(t)\leq 1$, we have $\mathcal{P}(t) \leq 1/D_{\eta} = 2^{-(N-N_{\sigma})}$, which is exponentially small in $N$ for finite $\sigma$, and therefore would typically require an inaccessible number of measurements. More specifically, a standard purity measurement protocol based on randomized measurements~\cite{AndreasPurityRM} even appears to require exponentially many measurements (in the subsystem size, which here is the full system) for any value of purity. Further, direct randomized measurements of OTOCs~\cite{Vermersch2019, Joshi2020} again appear to require exponentially many measurements in general, with workarounds possible at short times with limited operator spreading in local systems.

The only available measurement techniques that are fully scalable in the thermodynamic limit appear to be those that interpret $G_{R\psi}^{(4)}(t)$ as a normalized trace of a Kraus operator~\cite{NielsenChuang} (here, a combination of unitary, reversed unitary, and projective evolution) such as
\begin{equation}
    \hat{\mathcal{K}}_{R\psi}(t) \equiv \uh^\dagger(t) \proj_R \uh(t) \proj_{\psi}\uh^\dagger(t)\proj_R\uh(t),
\end{equation}
in terms of which one can write
\begin{equation}
    G_{R\psi}^{(4)}(t) = \Tr\left[\hat{\rho}_{\psi}\hat{\mathcal{K}}_{R\psi}(t)\right]
\end{equation}
The quantity on the right hand side can be measured to $\Theta(1)$ resolution with an accessible number of measurements, most directly~\cite[Appendix F]{SFFmeas} by initializing the system to the state $\hat{\rho}_{\psi}$ and using a control qubit\footnote{To summarize this strategy for completeness, one initializes the control qubit to an eigenstate of $\sigma_x$, applies the evolution $\lvert 0\rangle\langle 0\rvert \otimes \idop + \lvert 1\rangle\langle 1\rvert \otimes \hat{\mathcal{K}}_{R\psi}(t)$ where $\lvert 0\rangle, \lvert 1\rangle$ are $\sigma_z$ eigenstates, and measures $(\sigma_x + i\sigma_y)$ whose expectation value gives this trace.} to apply each step in $\hat{\mathcal{K}}_{R\psi}(t)$. This allows a scalable measurement of finite values of $G_{R\psi}^{(4)}(t)$ e.g. when $N_{\sigma}$ is finite, as $G_{R\psi}^{(4)}(t)$ is directly encoded in spin expectation values of the control qubit without any divergent scale factors. Moreover, for Hamiltonian evolution, using an additional control qubit to switch the effective sign of $\hat{H}$ can implement the reversal of $\uh(t)$, as in Ref.~\cite{HafeziOTOCprotocol}. With these protocols already in place, the main experimental challenge is the magnitude of the ``finite'' values to be measured. For example, we require (1) an accuracy of $10^{-7}$ with $N_{\sigma} \sim 24$ for the thermalization of a single qubit to $10\%$ resolution in $90\%$ of states in every bases, (2) an accuracy of $10^{-4}$ with $N_{\sigma} \gtrsim 12$ for single-qubit thermalization to an absolute resolution of $0.1$ in $10\%$ of states in every basis. The latter still constitutes a nontrivial rigorous prediction that may be more accessible in the near term, as discussed in Sec.~\ref{sec:synopsis_thiswork} and Sec.~\ref{sec:manybodysystems}.

\subsection{Timescales of mixing and thermalization}

Now, we pivot to some conceptual implications. The role of chaos in thermalization (classical or quantum) is a prominent subject in everyday folklore~\cite{KhinchinStatMech, DAlessio2016, deutsch2018eth}. Yet, care must be taken in discerning its precise physical implications.

In ergodic theory, chaos is most closely associated with a nonzero invariant (Kolmogorov-Sinai) dynamical entropy~\cite{SinaiCornfeld}. In systems with ``smooth'' dynamics, this entropy can be related to the Lyapunov exponents, associated with the exponential divergence of nearby trajectories, via Pesin's theorem~\cite{Pesin, RuelleEntropyLyapunov}. It is not necessarily the case that the exponential divergence represented by a nonzero dynamical entropy implies an exponentially fast rate of mixing, though the converse does hold~\cite{ExpMixingImpliesBernoulliK}; instead, the mixing of conventional two-point correlation functions can be arbitrarily slow or even absent with nonzero dynamical entropy~\cite{PlatoErgodic}.

In a classical limit (for systems that admit them), a rescaled OTOC can be reduced to a Poisson bracket with a phase space derivative structure, which is sensitive to Lyapunov exponents~\cite{LarkinOvchinnikov, MSSotocBound, XuSwingleScramblingTutorial, GJWOTOCReview}. To this extent, an OTOC can be identified as a quantum dynamical quantity that captures some aspects of classical dynamical entropy generation\footnote{Of course, with quantum dynamics being fundamentally different from classical dynamics, there is usually no such thing as \textit{the} unique quantum counterpart to classical dynamical entropy. In this regard, we would also like to entertain the notion that the spectral form factor~\cite{Haake, BerrySpectralRigidity} is a true quantum dynamical invariant that measures one sense of ``entropy'' generation (or ``information loss'') without requiring any choice of basis or observables. This is partly supported by the results of \cite{dynamicalqspeedlimit, dynamicalqfastscrambling}, by which the spectral form factor rigorously constrains the rate of ``information loss'' in projection valued measures as well as of entanglement entropy generation, and in fact specifies the optimal rate at which the former may occur.}. Here, a point worth emphasizing is that nothing about classical dynamical entropy in classical ergodic theory would suggest that the OTOC is sensitive to mixing or thermalization per se, beyond non-informative quantum bounds such as Eq.~\eqref{eq:G2G4bound}. Indeed, it has explicitly been suggested in the context of quantum information scrambling~\cite{Susskind_dS_scrambling} that OTOCs are sensitive to features beyond thermalization, the latter being presumed to belong to the domain of two-point functions (as with the mixing of observables in classical ergodic theory~\cite{KhinchinStatMech}).

From this point of view, we find that our results provide some food for thought. In particular, we have rigorously associated (controllably nonlocal) OTOCs with quantum thermalization. In systems with a classical limit subject to Pesin's theorem~\cite{Pesin, RuelleEntropyLyapunov}, this suggests that the classical dynamical entropy, which does not appear to have an obviously close connection to classical thermalization with statistical averages (in the sense of few-body ergodicity or mixing), may contain information about the \textit{initial} approach of the quantum system towards thermalization (if it occurs) in the least thermal quantum states at any given instant (instantaneous principal axes). In other words, the classical dynamical entropy may track the initial approach towards quantum thermalization in bases that may not even be visible to classical dynamics, even as commutators or OTOCs change negligibly in the classical limit (see also footnote~\ref{footnote:commutatorPB}). It would be interesting to consider if this is mere speculation or can be justified by a robust analysis of the classical limit.

In purely quantum systems, we obtain a more concrete picture of the different timescales of thermalization available in the full Hilbert space. The timescale of autocorrelator decay corresponds to the timescale of quantum mixing, which in the setting of Sec.~\ref{sec:intro} and Sec.~\ref{sec:synopsis} is the timescale of thermalization of \textit{typical} states in the bath. But it is the timescale of OTOC saturation that determines the largest timescale of thermalization in the \textit{collection} of every basis in the bath (i.e. the timescale required for the almost all states in the ``most'' nonthermal basis of bath states at each instant to thermalize). In other words, an autocorrelator captures the \textit{typical} rate of thermalization of an observable, while the set of (controllably nonlocal) OTOCs measures its \textit{slowest} collective rate across different bases.
Additionally, the spectral form factor~\cite{BerrySpectralRigidity, Haake} measures the \textit{fastest} possible averaged rate of thermalization of a complete set of projective observables~\cite{dynamicalqspeedlimit, dynamicalqfastscrambling} e.g. within a subsystem, across its measurable $\Theta(1)$ values. Between these three measures, we get a fairly detailed and rigorous picture of the dynamical thermalization profiles available in a system from theoretical calculations or experimentally feasible measurements of accessible quantities.



\subsection{Can we foresee thermalization in the future?}
\label{sec:Discussion_forecasts}

A notable tradeoff is worth emphasizing: with the loss of statistical averages, we also seem to have lost the ability to make model-independent forecasts about thermalization even for autonomous unitary or Hamiltonian dynamics\footnote{This is a caveat relative to the expectations set by methods with statistical averages, not an implicit shortcoming in our present method. While such computational forecasts are not possible for \textit{all} allowed initial states due to results based on Turing machines~\cite{ThermalizationUndecidability1, ThermalizationUndecidability2}, we will still argue here that they may be possible for \textit{almost all states} in the Hilbert space provided that a single correlator per observable is computationally accessible.}. In both the time-averaged and state-averaged statements of Ref.~\cite{dynamicalqthermalization}, it turned out to be sufficient to track the dynamics of a single observable in a single mixed state with finite resolution over a finite interval of time to predict the thermalization of the observable over all states and all times (with either of the latter coming with a statistical average), just within the fully general framework of Hamiltonian quantum dynamics. This was due to Eq.~\eqref{eq:synopsis_autocorrelator_to_correlator} [corresponding to Theorem~\ref{thm:autocorrelatorimpliescorrelator_Gen}] having only $O(1)$ prefactors in that case. When applied to OTOCs, it is not clear that such forecasts are generally possible using a similar method (see Appendix~\ref{app:correlators_examples} for details), although we will argue below that a different method may allow such forecasts from finite-time data, whose computational accessibility however remains to be established.

In any case, we can still point to the fact that the OTOCs of few body observables have been derived with quantitative exactness in a variety of physical models, such as ensembles of random quantum circuits~\cite{NahumOTOCCircuits, ChanScrambling, ClaeysLamacraftMaxVelCkt, BertiniPiroliScrambling, RQCreview_2023, XuSwingleScramblingTutorial}, notably for arbitrarily long but finite (i.e. asymptotically long) times in the $N\to\infty$ limit. While these are usually single-site OTOCs rather than the more general ``controllably nonlocal'' few-body OTOCs of interest here, it appears possible that such techniques may generalize to the latter, especially if one is only interested in showing thermalization to finite resolution (i.e., a finite core subsystem). In particular, the minimum number of sites required in the core subsystem for a given resolution decreases with increasing local Hilbert space dimension at each site. The question of whether qualitatively new techniques are necessary for controllably nonlocal OTOCs with infinite resolution (i.e. a core subsystem whose size grows without bound, but yet is not thermodynamically large) may be interesting to explore. Such an exact derivation of these OTOCs for an observable over a long time interval in any system, if it shows their decay even to a finite or at best $o(1)$ resolution in the thermodynamic limit, would then rigorously imply the thermal equilibration of the observable in an overwhelming fraction of thermodynamically many pure states over the corresponding times.

But for the forecast caveat (partially addressed below), which can be circumvented in such specific classes of models that allow a determination of controllably nonlocal OTOCs, our results here (without statistical averages but restricted to times of observation) and in Ref.~\cite{dynamicalqthermalization} (with at least one statistical average but with the ability to make forecasts for infinite times) bring us tantalizingly close to a complete rigorous formalism for quantum statistical mechanics that allows one to predict thermalization from (finite-time) observations of few-body observables rather than their inaccessible (infinite-time) properties in energy eigenstates or any other detailed information about the dynamics. Given that the formalism of Ref.~\cite{dynamicalqthermalization} can already handle nontrivial thermalization forecasts in \textit{typical} pure states (of a large subsystem) in place of state averages, the kind of universal predictive power we are seeking with this present formalism is notably stronger in providing thermalization guarantees for \textit{every} pure state basis in the Hilbert space. Indeed, we seem to have already surpassed the predictive power of classical statistical mechanics in this regard, which has no obvious analogue of pure state thermalization. It would nevertheless be interesting to test the limits of this approach and complete this formalism (or prove its incompleteness) by explicitly identifying an accessible method to make long-time forecasts of OTOC behavior from finite-time observations (or prove that this cannot be done in general), or at least by enumerating classes of systems where this may be possible.

\paragraph*{Argument for forecasting eternal quantum thermalization from finite-time data} With a view towards resolving the above question, we will now describe an argument that shows that forecasts of long-time quantum thermalization from finite-time data are \textit{in principle} possible for Hamiltonian or time-translation invariant dynamics, somewhat inspired by the state-dependent forecasting method of Ref.~\cite{dynamicalqaperiodicity} in the separate context of macroscopic thermalization. In particular, we will show that for a given observable $\proj_R$ and a suitable set of times $\mathcal{T}$ of finite size or range $\lvert \mathcal{T}\rvert < \infty$, for any $D_{\rho}$ there exists a \textit{single} subspace $\mathcal{H}_{\rho}^{\max}$ of dimension $D_{\rho}$ with projector $\proj_{\rho}^{\max}$, such that the factorization of its controllably nonlocal OTOC at almost all times in $\mathcal{T}$ implies the thermalization of $\proj_R$ in almost all states of \textit{any} subspace $\mathcal{H}_{\rho}$ of dimension $D_{\rho}$ in the Hilbert space, at \textit{almost all} times inside or outside $\mathcal{T}$. Thus, a single controllably nonlocal OTOC of the observable over finite times contains enough information to control its thermalization in almost all states over almost all times, which can allow computational forecasts provided that this OTOC can be computed.

We begin our formal argument with the observable $\proj_{R}$ and an unspecified ensemble of states $\proj_{\rho}$, in the notation of Sec.~\ref{sec:dynamics}. As per Statement~\ref{statement:informalresult_synopsis} or Corollary~\ref{cor:manybodypurestatethermalization}, the quantum thermalization of $\proj_R$ in almost all states in $\mathcal{H}_{\rho}$ at a specified time $t$ is determined (given the thermal behavior of $G_{R\rho}^{(2)}(t)$ as can be established at almost all times using finite-time autocorrelators \cite{dynamicalqthermalization}) by the smallness of
\begin{equation}
    G_{R\rho}^{(4)}(t) - [G_{R\rho}^{(2)}(t)]^2.
\end{equation}
By the bounds in Eq.~\eqref{eq:G2G4bound} and the fact that each correlator itself satisfies $0\leq G_{R\rho}^{(2n)}(t) \leq 1$, the time average of this quantity over the times in any set $\mathcal{T}$ is bounded (and is a continuous function of $\proj_{\rho}$):
\begin{equation}
    \Delta_4 G\left(\proj_{\rho}; \proj_{R}, \mathcal{T}\right) \equiv \frac{1}{\lvert \mathcal{T}\rvert}\int_{\mathcal{T}}\diff t\ \left\lbrace G_{R\rho}^{(4)}(t) - [G_{R\rho}^{(2)}(t)]^2\right\rbrace\ \in\ [0,1].
\end{equation}
Every projector of dimension $D_{\rho}$ can be written as $\proj_{\rho}(\hat{W}) = \hat{W} \proj_{\rho 0}\hat{W}^{\dagger}$ for a fixed $\proj_{\rho 0}$, where $\hat{W}$ is unitary. Therefore, $\Delta_4 G\left(\proj_{\rho}; \proj_{R}, \mathcal{T}\right)$ attains a maximum as a function of $\proj_{\rho}(\hat{W})$ for at least one choice of $\hat{W} = \hat{W}_{\max}$. Let $\proj_{\rho}^{\max} \equiv \proj_{\rho}(\hat{W}_{\max})$. We can formally write the existence of this maximum as:
\begin{equation}
    \Delta_4 G\left(\proj_{\rho}(\hat{W}); \proj_{R}, \mathcal{T}\right) \leq \Delta_4 G\left(\proj_{\rho}^{\max}; \proj_{R}, \mathcal{T}\right),\;\;\ \text{ for  every unitary } \hat{W}.
    \label{eq:forecastinequalitystatic}
\end{equation}
This  inequality is specified over a given set of times $\mathcal{T}$. Now, we can leverage the time-translation invariance of Hamiltonian dynamics: if we translate every time in $\mathcal{T}$ by a fixed amount $t_0$, resulting in the shifted set of times (note that even if $\mathcal{T}$ is discrete, such shifts can collectively span $\mathbb{R}$)
\begin{equation}
    \mathcal{T}(t_0) \equiv \lbrace t+t_0: \text{ for all } t \in \mathcal{T}\rbrace, 
\end{equation}
then a unitary transformation $\proj_{\rho}(\hat{W}) \to e^{i\hat{H} t_0} \proj_{\rho}(\hat{W}) e^{-i \hat{H} t_0}$ leaves the correlators invariant, by Eqs.~\eqref{eq:timeevolution} and \eqref{eq:timedependentcorrelatorsdef} with $\uh(t) = e^{-i\hat{H} t}$. Therefore,
\begin{equation}
    \Delta_4 G\left(\proj_{\rho}(e^{i\hat{H} t_0}\hat{W}); \proj_{R}, \mathcal{T}(t_0)\right) = \Delta_4 G\left(\proj_{\rho}(\hat{W}); \proj_{R}, \mathcal{T}\right)\;\;\ \text{ for every } t_0 \in \mathbb{R}.
\end{equation}
Noting that $e^{i\hat{H} t_0} \hat{W}$ is itself unitary, we can replace $\hat{W} \to e^{i\hat{H} t_0} \hat{W}$ in the condition in Eq.~\eqref{eq:forecastinequalitystatic}, obtaining
\begin{equation}
    \Delta_4 G\left(\proj_{\rho}(\hat{W}); \proj_{R}, \mathcal{T}(t_0)\right) \leq \Delta_4 G\left(\proj_{\rho}^{\max}; \proj_{R}, \mathcal{T}\right),\;\;\ \text{ for  every unitary } \hat{W} \text{ and every } t_0 \in \mathbb{R}.
\end{equation}
This relates the computation of the right hand side in $\mathcal{T}$ to the behavior of the left hand side for arbitrary $\mathcal{T}(t_0)$, including $t_0 \to \pm \infty$.
By Markov's inequality~\cite{RossProbability}, if the left hand side is small, then $G_{R\rho}^{(4)}(t) - [G_{R\rho}^{(2)}(t)]^2$ is small at almost all times in $\mathcal{T}(t_0)$ for every $t_0$, similar to the considerations in Sec.~\ref{sec:restoredynamics}. Thus, the smallness of $\Delta_4 G\left(\proj_{\rho}^{\max}; \proj_{R}, \mathcal{T}\right)$ over the finite set $\mathcal{T}$ is sufficient to establish the requisite behavior of OTOCs, for \textit{every} initial $\proj_{\rho}$, for the thermalization of $\proj_{R}$ at almost all times in $\mathbb{R}$ (or $\mathbb{Z}$). The primary open questions for this argument are whether (a suitable approximation to) $\proj_{\rho}^{\max}$ for a given observable $\proj_R$ and set of times $\mathcal{T}$ can be accessed computationally (which appears plausible given that it is only a finite-time correlator that is being maximized)
and if the corresponding correlator is indeed small in enough cases of interest, which we hope to characterize in future work.

\subsection*{Acknowledgments}

We thank Laura Shou for interesting discussions on the differences between measure concentration on the sphere and on the unitary group~\cite{HaarBook}, and the one-way implication from exponentially fast second-order mixing to K-mixing in classical ergodic theory~\cite{ExpMixingImpliesBernoulliK}.
This work was supported by NIST, by the National Science Foundation under Grant Number 1734006 (Physics Frontier Center), and by the Heising-Simons Foundation under Grant 2024-4848.

\appendix

\section{Quantum thermalization from autocorrelators}

\label{app:autocorrelator_method}

In this section, we will review some key results of Ref.~\cite{dynamicalqthermalization} for standard two-point correlation functions and the thermalization of observables with statistical averages. Where these results were partly expressed using an intermediate notion of energy-band thermalization in \cite{dynamicalqthermalization}, here we will avoid any reference to energy levels as far as possible in our main statements, further highlighting the main power of this formalism as its ability to predict thermalization without relying on energy eigenstates. We will also consider the application of this technique to OTOCs, and show that it does not generalize to these quantities, preventing an easy prediction of pure state thermalization (without statistical averages) over infinite time scales from mere finite time observations.

\subsection{Autocorrelators constrain other correlators: general inner products}
\label{app:autocorrelators_ips}

Let us return to the setting in Sec.~\ref{sec:synopsis}: given an inner product $\langle \cdot,\cdot\rangle$ of time-dependent operators $\hat{A}(t)$ and $\hat{B}(t)$, we would like to constrain the behavior of correlation functions $\langle \hat{B}(t'),\hat{A}(t)\rangle$ in terms of autocorrelation functions $\langle \hat{A}(t'), \hat{A}(t)\rangle$. This appendix is dedicated to this mathematical problem. The larger motivation for constraining correlators in terms of autocorrelators stems partly from Ref.~\cite{KhinchinStatMech}, where autocorrelators enable a formulation of classical statistical mechanics without reference to thermodynamically inaccessible ergodic properties, and more directly from Ref.~\cite{dynamicalqthermalization}, where they play a similar role in formulating quantum statistical mechanics without relying on thermodynamically inaccessible properties of energy levels and eigenstates (such as the eigenstate thermalization hypothesis).

Taking $\hat{A}$, $\hat{B}$ to be \textit{vectors} in a complex finite-dimensional vector space\footnote{In all our applications, this vector space will usually comprise of tensor products of a number of copies of the space of linear operators on our Hilbert space of states $\mathcal{H}$, motivating the hat notation $\hat{A}$, $\hat{B}$ (associated with operators) for these vectors.}, we will require all the standard properties of inner products~\cite{ByronFuller} to enable our results:
\begin{enumerate}
    \item Symmetry under complex conjugation:
    \begin{equation}
        \langle \hat{B}, \hat{A}\rangle^\ast = \langle \hat{A}, \hat{B}\rangle.
        \label{eq:ip_symmetry}
    \end{equation}
    \item Linearity in the second argument:
    \begin{equation}
        \langle \hat{B}, c_1 \hat{A}_1 + c_2 \hat{A}_2\rangle = c_1 \langle \hat{B}, \hat{A}_1 \rangle+c_2\langle \hat{B}, \hat{A}_2\rangle.
        \label{eq:ip_linearity}
    \end{equation}
    Anti-linearity (linearity with complex conjugation) in the first argument follows from here and Eq.~\eqref{eq:ip_symmetry}.
    \item Positive-definiteness:,
    \begin{equation}
        \langle \hat{A}, \hat{A}\rangle \geq 0,
        \label{eq:ip_postivity}
    \end{equation}
    and $\langle \hat{A}, \hat{A}\rangle = 0$ if and only if $\hat{A} = 0$.
\end{enumerate}

We will take the time evolution of operators to be generated by an \textit{autonomous} unitary map $\mathcal{U}_t$ parametrized by $t$:
\begin{equation}
    \hat{A}(t) = \mathcal{U}_t(\hat{A}),
\end{equation}
such that\footnote{Here, $\circ$ denotes function composition, as in $\mathcal{U}\circ\mathcal{V}(\hat{A}) = \mathcal{U}(\mathcal{V}(\hat{A}))$.} $\mathcal{U}_{t_1} \circ \mathcal{U}_{t_2} = \mathcal{U}_{t_1+t_2}$. This corresponds to ``time-independent'' dynamics. Unitarity is defined with respect to the inner product $\langle \cdot, \cdot\rangle$, namely $\langle \mathcal{U}_t(\hat{B}),\mathcal{U}_t(\hat{A})\rangle = \langle \hat{B}, \hat{A}\rangle$ with $\mathcal{U}_t$ being an invertible map, and allows an eigendecomposition of the map in terms of ``energy-levels'' $\mathcal{E}_k \in \mathbb{R}$ and projectors $\mathcal{P}_k$ (satisfying $\mathcal{P}_k \circ \mathcal{P}_k = \mathcal{P}_k$) to ``energy-eigenspaces'':
\begin{equation}
    \mathcal{U}_t(\hat{A}) = \sum_k e^{-i\mathcal{E}_k t} \mathcal{P}_k(\hat{A}).
    \label{eq:genunitary_eigendecomposition}
\end{equation}
These eigenspaces are orthonormal with respect to the inner product,
\begin{align}
    \langle \mathcal{P}_j(\hat{B}), \mathcal{P}_k(\hat{A})\rangle &= 0\;\;\ \text{ if }\;\;\  j \neq k, \nonumber \\
    \langle \mathcal{P}_k(\hat{B}), \mathcal{P}_k(\hat{A})\rangle &= \langle \hat{B}, \mathcal{P}_k(\hat{A})\rangle = \langle \mathcal{P}_k(\hat{B}), \hat{A}\rangle,
    \label{eq:eigenspaceorthonormality}
\end{align}
and satisfy the completeness relation:
\begin{equation}
    \sum_k \langle \hat{B},\mathcal{P}_k(\hat{A})\rangle = \langle \hat{B}, \hat{A}\rangle.
    \label{eq:eigenspacecompleteness}
\end{equation}

With this setup, our main concern will be to constrain the weighted time-averages of inner products,
\begin{equation}
    \int\diff t\ w(t) \langle \hat{B}, \mathcal{U}_t(\hat{A})\rangle
\end{equation}
in which $w(t)$ is a ``weighting function'' that satisfies $w(t) \geq 0$ and $\int\diff t\ w(t) = 1$. For example, the standard time average over the interval $[0,T]$ corresponds to $w(t \in [0,T]) = 1/T$ and $w(t \notin [0,T]) = 0$. Another subcategory of such weighting functions is that of ``completely positive'' weighting functions~\cite{dynamicalqthermalization}, denoted with a $+$ subscript (as in $w_+(t)$) that in addition have a non-negative Fourier transform:
\begin{equation}
    \widetilde{w}_+(E) \equiv \int\diff t\ w(t) e^{-i E t} \geq 0.
\end{equation}
An example of such a function is the ``tent'' average in $[-T, T]$, i.e. $w_+(t\in [-T,T]) = (1-|t|/T)/T$ and $w_+(t\notin[-T,T]) = 0$.

Such a constraint is given by the following theorem, which mildly generalizes and combines some results in Ref.~\cite{dynamicalqthermalization}, and essentially states that if a completely-positive time average of an autocorrelator is small, then a long time average of any correlator is small\footnote{\label{footnote:autocorrelatortheoremspecifics} To obtain Eq.~\eqref{eq:synopsis_autocorrelator_to_correlator} from Theorem~~\ref{thm:autocorrelatorimpliescorrelator_Gen}, we pick $\mathcal{U}_t(\hat{A}) = \uh(t) \hat{A} \uh^\dagger(t)$, $w(t) = T^{-1}\ifc(t\in[t_0,t_0+T])$ for any $t_0$, $w_+(t) = T_{\obs}^{-1}(1-|t|/T_{\obs})\ifc(t\in [-T_{\obs},T_{\obs}])$ and $\Delta E = 2\xi/T_{\obs}$, from which it follows that $W = \sinc^2 \xi$, $w_0 = T_{\obs}/(\xi T)$, and rearrange some factors; see also Sec.~\ref{app:secondordercorrelators}.} (intuitively, we have in mind $w_0 \approx 0$ and $W \approx 1$ below):
\begin{theorem}[Autocorrelator smallness predicts correlator smallness]
\label{thm:autocorrelatorimpliescorrelator_Gen}
Let $w(t)$ be a weighting function such that for a given ``Fourier window'' $\Delta E > 0$, $\widetilde{w}(E)$ is at most some $w_0: 0 < w_0 < 1$ for $\lvert E\rvert \geq \Delta E$ (implicitly, within $E \in \lbrace \mathcal{E}_k\rbrace \subset \mathbb{R}$, the range of energy levels of $\mathcal{U}_t$)
\begin{equation}
    \left\lvert\widetilde{w}(E: \lvert E\rvert \geq \Delta E)\right\rvert \leq w_0.
    \label{eq:w_w0_constraint}
\end{equation}
Further, let $w_{+}(t)$ be a completely positive weighting function that is larger than some $W: 0< W < 1$ inside the same Fourier window:
\begin{equation}
    \widetilde{w}_+(E: \lvert E\rvert < \Delta E) > W.
    \label{eq:wplus_W_constraint}
\end{equation}
Then, for any $\hat{A}$ and $\hat{B}$, the completely positive $w_+$-time-average of the autocorrelator $\langle \hat{A}, \mathcal{U}_t(\hat{A})\rangle$ constrains the $w$-time-average of the correlator $\langle \hat{B}, \mathcal{U}_t(\hat{A})\rangle$ according to the inequality:
\begin{equation}
    \left\lvert \int\diff t\ w(t) \langle \hat{B}, \mathcal{U}_t(\hat{A})\rangle\right\rvert \leq \left(\sqrt{\frac{1}{W} \int\diff t\ w_+(t) \langle \hat{A}, \mathcal{U}_t(\hat{A})\rangle} + w_0\sqrt{\langle \hat{A},\hat{A}\rangle}\right)\sqrt{\langle \hat{B},\hat{B}\rangle}.
    \label{eq:auto_to_cross_correlator}
\end{equation}
\end{theorem}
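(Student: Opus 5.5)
Write $\hat{A}_k \equiv \mathcal{P}_k(\hat{A})$ and $\hat{B}_k \equiv \mathcal{P}_k(\hat{B})$. The plan is to pass to the energy-eigenspace decomposition of Eq.~\eqref{eq:genunitary_eigendecomposition} and split the spectrum into the Fourier window $\lvert \mathcal{E}_k\rvert < \Delta E$ and its complement. By linearity of the inner product and the orthogonality relations in Eq.~\eqref{eq:eigenspaceorthonormality},
\begin{equation*}
    \int\diff t\ w(t)\langle \hat{B},\mathcal{U}_t(\hat{A})\rangle = \sum_k \widetilde{w}(\mathcal{E}_k)\,\langle \hat{B}_k,\hat{A}_k\rangle ,
\end{equation*}
and in the same way
\begin{equation*}
    \int\diff t\ w_+(t)\langle \hat{A},\mathcal{U}_t(\hat{A})\rangle = \sum_k \widetilde{w}_+(\mathcal{E}_k)\,\langle \hat{A}_k,\hat{A}_k\rangle .
\end{equation*}
Here I use the paper's sign convention for the Fourier transform; if it produces $-\mathcal{E}_k$ instead, this is harmless, since the window and the bound on $\lvert \widetilde{w}\rvert$ are symmetric. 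Every term in the second sum is nonnegative, because $w_+$ is completely positive and the inner product is positive. This nonnegativity is the crucial feature.

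Split the first sum as $S_{\text{in}} + S_{\text{out}}$, where $S_{\text{in}}$ runs over $\lvert\mathcal{E}_k\rvert<\Delta E$ and $S_{\text{out}}$ over the rest.

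For $S_{\text{in}}$, use $\lvert\widetilde{w}\rvert\le \int w = 1$ together with Cauchy--Schwarz applied first within each eigenspace and then across $k$:
\begin{equation*}
    \lvert S_{\text{in}}\rvert \le \Big(\sum_{\text{in}}\langle \hat{A}_k,\hat{A}_k\rangle\Big)^{1/2}\Big(\sum_{\text{in}}\langle \hat{B}_k,\hat{B}_k\rangle\Big)^{1/2}.
\end{equation*}
By completeness, Eq.~\eqref{eq:eigenspacecompleteness}, the second factor is at most $\sqrt{\langle\hat{B},\hat{B}\rangle}$. For the first factor, inside the window we have $\widetilde{w}_+ > W$, and all other terms of the $w_+$-sum are nonnegative. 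Hence
\begin{equation*}
    \sum_{\text{in}}\langle \hat{A}_k,\hat{A}_k\rangle \le \frac{1}{W}\int\diff t\ w_+(t)\langle \hat{A},\mathcal{U}_t(\hat{A})\rangle .
\end{equation*}

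For $S_{\text{out}}$, the hypothesis $\lvert\widetilde{w}\rvert\le w_0$ and the same Cauchy--Schwarz argument give
\begin{equation*}
    \lvert S_{\text{out}}\rvert \le w_0\sqrt{\langle\hat{A},\hat{A}\rangle\langle\hat{B},\hat{B}\rangle}.
\end{equation*}
Adding the two bounds yields Eq.~\eqref{eq:auto_to_cross_correlator}.

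The main point needing care is the Cauchy--Schwarz step across eigenspaces. One needs $\sum_k\langle\hat{B}_k,\hat{B}_k\rangle = \langle\hat{B},\hat{B}\rangle$, which follows from Eq.~\eqref{eq:eigenspacecompleteness} combined with the idempotence and self-adjointness encoded in Eq.~\eqref{eq:eigenspaceorthonormality}. One also needs to justify interchanging the time integral with the finite sum over $k$, which is immediate in finite dimensions. Everything else is bookkeeping.
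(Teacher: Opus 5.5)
Your proposal is correct and follows essentially the same route as the paper. Both arguments decompose the weighted correlator over the eigenspaces and split at the Fourier window. Both apply Cauchy--Schwarz with $\lvert\widetilde{w}\rvert\le 1$ inside the window and $\lvert\widetilde{w}\rvert\le w_0$ outside. Both use the complete positivity of $w_+$ to bound the in-window weight of $\hat{A}$ by $W^{-1}$ times the $w_+$-averaged autocorrelator.

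The only cosmetic difference is in the Cauchy--Schwarz step. The paper applies it once, pairing $\hat{B}$ with the filtered vector $\sum_{k\in\mathcal{K}}\widetilde{w}(\mathcal{E}_k)\mathcal{P}_k(\hat{A})$. It therefore never needs the identity $\sum_k\langle\mathcal{P}_k(\hat{B}),\mathcal{P}_k(\hat{B})\rangle=\langle\hat{B},\hat{B}\rangle$ that your per-eigenspace version relies on, though you justify that identity correctly.
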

\begin{proof}
    This follows according to a proof strategy similar to the combination of Theorem 5.2 and Proposition 6.1 in Ref.~\cite{dynamicalqthermalization}, adapted to this general setting; see Appendix~\ref{proof:autocorrelatorimpliescorrelator_Gen}.
\end{proof}

For intuition, we note that Eq.~\eqref{eq:w_w0_constraint} with $0 < w_0 \ll 1$ corresponds to stating that $w(t)$ is a time average of the correlator $\langle \hat{B}(0), \hat{A}(t)\rangle$ over some long time scale $T \gg 2\pi/\Delta E$; similarly Eq.~\eqref{eq:wplus_W_constraint} with $0 < W \approx 1$ amounts to a time-average of the autocorrelator $\langle \hat{A}(0),\hat{A}(t)\rangle$ over some short time scale $T \ll 2\pi/\Delta E$. In practice, $2\pi/\Delta E$ may be chosen to be slightly larger than the decay timescale of the autocorrelator, and therefore essentially represents the latter up to a large prefactor. Overall, Theorem~\ref{thm:autocorrelatorimpliescorrelator_Gen} allows short-time autocorrelator decay to imply long-time correlator decay for a general inner product with autonomous unitary dynamics.

As an aside, we note that a similar inequality that applies to more general forms of dynamics (not restricted to autonomous unitary dynamics) can be derived, but comes with the loss of predictive power for time intervals notably longer than that over which the autocorrelator is sampled. In particular, a direct application of the Cauchy-Schwarz inequality~\cite{ByronFuller} gives:
\begin{align}
    \left\lvert \int\diff t\ w(t) \langle \hat{B}, \mathcal{U}_t(\hat{A})\rangle\right\rvert &= \left\lvert  \left\langle \hat{B}, \int\diff t\ w(t)\mathcal{U}_t(\hat{A})\right\rangle\right\rvert \nonumber \\
    &\leq \sqrt{\int\diff t\ \diff t'\ w(t)w(t') \langle \mathcal{U}_{t'}(\hat{A}), \mathcal{U}_t(\hat{A})\rangle} \sqrt{\langle \hat{B},\hat{B}\rangle}.
    \label{eq:autocorrelator_to_correlator_generaldynamics}
\end{align}
In this case, however, given the smallness of the right hand side for a given $w(t)$, it does not follow that the left hand side is small for other choices of $w(t)$ (in particular, with significant support over longer time intervals), in contrast to the ability of Theorem~\ref{thm:autocorrelatorimpliescorrelator_Gen} to make long-time forecasts from short-time data. Nevertheless, Eq.~\eqref{eq:autocorrelator_to_correlator_generaldynamics} still shows that autocorrelators constrain other correlators at least over a comparable time range for more general forms of dynamics, providing a comparative standard for what we have for thermalization without averages in terms of OTOCs in the main text.

\subsection{Application to different correlators}
\label{app:correlators_examples}

Given the general result of Theorem~\ref{thm:autocorrelatorimpliescorrelator_Gen}, our task is now to apply it to all manner of correlation functions to obtain rigorous statements about thermalization dynamics. As we will use a number of different inner products that are all consistent with the template of Sec.~\ref{app:autocorrelators_ips}, we will use subscripts to distinguish these inner products, e.g. $\langle \cdot ,\cdot\rangle_{4c}$. It will be convenient to denote the space of linear operators on $\mathcal{H}$ by $L(\mathcal{H})$; these inner products will act on different tensor products of $L(\mathcal{H})$.

\subsubsection{Review: Second order correlators}
\label{app:secondordercorrelators}

First, let us consider the simplest version of this problem: time-averaged thermalization in second-order two-point functions. In this case, our approach will recover some key results of Ref.~\cite{dynamicalqthermalization} (here, with incrementally sharper constants). To begin with, for two operators $\proj_R$ representing a projector observable with $\Tr[\proj_R]/D = D_R/D = 1/D_S$ and $\proj_{\rho}$ representing a subspace of states with $\Tr[\proj_{\rho}]/D = D_{\rho}/D =  1/D_{\sigma}$, we may choose a general structure with the standard trace inner product of two operators:
\begin{align}
    \hat{A}_2, \hat{B}_2 &\in L(\mathcal{H}), \\
    \langle \hat{B}_2, \hat{A}_2\rangle_2 &\equiv \frac{1}{D}\Tr_{\mathcal{H}}[\hat{B}_2^\dagger \hat{A}_2], \\
    \mathcal{U}_t(\hat{A}_2) &= e^{-i\hat{H}t} \hat{A}_2 e^{i\hat{H} t}.
\end{align}
Explicitly, if we make the identifications
\begin{align}
    \hat{A}_2 &= \proj_R - \frac{1}{D_S}\idop, \nonumber \\
    \hat{B}_2 &= D_{\sigma}\proj_{\rho} - \idop,
\end{align}
then for the density operator
\begin{equation}
    \hat{\rho} = \frac{D_{\sigma}}{D}\proj_{\rho},
\end{equation}
Eq.~\eqref{eq:auto_to_cross_correlator} takes the form:
\begin{equation}
    \left\lvert \int\diff t\ w(t) \Tr[\hat{\rho}(t) \proj_R]-\frac{1}{D_S}\right\rvert \leq \sqrt{\frac{D_{\sigma}}{W} \int\diff t\ w_+(t) \left(\frac{1}{D}\Tr[\proj_R(t)\proj_R] - \frac{1}{D_S^2}\right)} + \frac{w_0\sqrt{D_{\sigma}(D_S-1)}}{D_S}.
    \label{eq:timeaveragedbound_app}
\end{equation}
The content of this equation is that for finite (or accessible) $D_S$ and $D_{\sigma}$, the time-averaged saturation of the autocorrelator of $\proj_R(t)$ and $\proj_R$ to nearly $1/D_S^2$ implies the time-averaged thermalization of the expectation value of $\proj_{R}$ in $\hat{\rho}(t)$ to the infinite temperature value $1/D_S = \Tr[\proj_R]/D$. As this holds for all choices of $\proj_{\rho}$, this can be shown to further imply the time-averaged thermalization of the expectation value of $\proj_R$ in almost all \textit{pure states} in every orthonormal basis, as in Theorem 5.3 of Ref.~\cite{dynamicalqthermalization}. We stress that the reduction to pure states from large subspaces $\proj_{\rho}$ even for finite $D_{\sigma}$ (large $D_{\rho}$) is possible only because of the time average for reasons discussed in that work, and a na\"{i}ve choice of $D_{\rho} = 1$ $\implies$ $D_{\sigma} = D$ above does not work directly in Eq.~\eqref{eq:timeaveragedbound_app}; indeed, we do not expect that time-averaged thermalization can be predicted in every pure state, but only an overwhelming fraction of them.

It is also possible to (effectively) eliminate the time average via the technique of ``cloning'' these operators~\cite{dynamicalqthermalization}, which we will indicate by tacking on the subscript $c$. Our setting with cloned operators is:
\begin{align}
    \hat{A}_{2c}, \hat{B}_{2c} &\in L(\mathcal{H})\otimes L(\mathcal{H}), \\
    \langle \hat{B}_{2c}, \hat{A}_{2c}\rangle_{2c} &\equiv \frac{1}{D^2} \Tr_{\mathcal{H} \otimes \mathcal{H}}[\hat{B}_{2c}^\dagger \hat{A}_{2c}] \label{eq:clonedinnerproduct} \\
    \mathcal{U}_t(\hat{A}_{2c}) &= e^{-i\hat{H}t} \otimes e^{-i\hat{H} t} \hat{A}_{2c} e^{i\hat{H}t} \otimes e^{i\hat{H} t}.
\end{align}
Now, if we identify
\begin{equation}
    \hat{A}_{2c} = \hat{A}_2 \otimes \hat{A}_2,\ \hat{B}_{2c} = \hat{B}_2\otimes \hat{B}_2,
\end{equation}
which effectively ``clones'' the operators above, the inner product effectively becomes:
\begin{equation}
    \langle \hat{B}_{2c}, \hat{A}_{2c}\rangle_{2c} = \left(\Tr_{\mathcal{H}}[\hat{\rho}(t)\proj_R] - \frac{1}{D_S}\right)^2.
\end{equation}
In this case, Eq.~\eqref{eq:auto_to_cross_correlator} takes the form
\begin{equation}
    \int\diff t\ w(t) \left(\Tr[\hat{\rho}(t) \proj_R]-\frac{1}{D_S}\right)^2 \leq \sqrt{\frac{D_{\sigma}^2}{W} \int\diff t\ w_+(t) \left(\frac{1}{D}\Tr[\proj_R(t)\proj_R] - \frac{1}{D_S^2}\right)^2} + \frac{w_0 D_{\sigma}(D_S-1)}{D_S^2}.
    \label{eq:clonedthermalequilibriumZ}
\end{equation}
This constrains thermal equilibrium at almost all times as follows. If the autocorrelator $D^{-1}\Tr[\proj_R(t)\proj_R]$ deviates from $1/D_S^2$ by more than $\epsilon$ only in a set of times ${\texcept}_+$ of $+$-weighted duration $0 \leq \kappa_+ \leq 1$ (intuitively, $\kappa_+ \ll 1$), i.e.
\begin{equation}
    \left\lvert D^{-1}\Tr[\proj_R(t)\proj_R] - \frac{1}{D_S^2}\right\rvert > \epsilon \implies t \in {\texcept}_+ : \int_{{\texcept}_+}\diff t\ w_+(t) = \kappa_+,
\end{equation}
then we have (on account of $\lvert D^{-1}\Tr[\proj_R(t)\proj_R] - D_S^{-2}\rvert \leq D_S^{-1}$)
\begin{equation}
    \int\diff t\ w_+(t) \left(\frac{1}{D}\Tr[\proj_R(t)\proj_R] - \frac{1}{D_S^2}\right)^2 \leq \epsilon^2 (1-\kappa_+) + \frac{\kappa_+}{D_S^2}.
\end{equation}
By Eq.~\eqref{eq:clonedthermalequilibriumZ}, we have
\begin{equation}
     \int\diff t\ w(t) \left(\Tr[\hat{\rho}(t) \proj_R]-\frac{1}{D_S}\right)^2 \leq \epsilon_c \equiv D_{\sigma}\sqrt{\frac{\epsilon^2 (1-\kappa_+)}{W} + \frac{\kappa_+}{W D_S^2}} + \frac{w_0 D_{\sigma}(D_S-1)}{D_S^2}.
     \label{eq:time_averaged_thermal_equilibirum_expr}
\end{equation}
Note that $\epsilon_C$ is a small parameter when $\epsilon$, $\kappa_+$ and $w_0$ are sufficiently small. It follows that the set of times $t$ at which $\Tr[\hat{\rho}(t) \proj_R]$ deviates from $D_S^{-1}$ by more than some  $\lambda$ has small $w$-weighted duration: if
\begin{equation}
    \left\lvert \Tr[\hat{\rho}(t) \proj_R]-\frac{1}{D_S}\right\rvert > \lambda \implies t\in \texcept: \int_{\texcept}\diff t\ w(t) = \kappa,
\end{equation}
then by Eq.~\eqref{eq:time_averaged_thermal_equilibirum_expr}, as the integrand (modulo the weight $w(t)$) there is at least $\lambda^2$ in $t \in \texcept$, we have\footnote{\label{footnote:timeintervalconstraintderivation}To obtain Eq.~\eqref{eq:secondordertimeintervalconstraint} from here, we choose $w(t) = T^{-1}\ifc(t\in[t_0,t_0+T])$ for any $t_0$, $w_+(t) = T_{\obs}^{-1}(1-|t|/T_{\obs})\ifc(t\in [-T_{\obs},T_{\obs}])$ and $\Delta E = 2\xi/T_{\obs}$, from which it follows that $\kappa_+ \leq 2 \kappa_{RR}$ [using Eq.~\eqref{eq:autocorrelatortimeintervalconstraint}], $W = \sinc^2 \xi$, $w_0 = T_{\obs}/(\xi T)$, and $\epsilon = \varepsilon/D_S$, together with some additional mild inequalities such as $(1-\varepsilon^2) \leq 1$ and $D_S-1 \leq D_S$.}
\begin{equation}
    \kappa \leq \frac{\epsilon_C}{\lambda^2},
\end{equation}
which is small if $\epsilon_C \ll \lambda^2$. This shows that $\Tr[\hat{\rho}(t) \proj_R]$ equilibrates to $D_S^{-1}$ to within an accuracy $\lambda$ at all times except a set of small $w$-weighted length $\kappa$, if the autocorrelator is close to thermal equilibrium for a sufficiently long time. Here, however, we do not have a way to go from the large subspace $\proj_{\rho}$ to individual pure states for finite values of parameters such as $\epsilon$, $\kappa_+$, $w_0$, and $D_S$. Thus, the time average has, almost inevitably, been replaced by a state average over the different states in $\proj_{\rho}$.

There are several ways to argue that this is necessarily the case, such as the argument based on the classical limit in Sec.~\ref{sec:intro}; here we provide a different argument from the ones in Ref.~\cite{dynamicalqthermalization} that reveals comparable difficulties even in generic quantum systems. For pure state dynamics, we want $D_{\sigma} = D$, so that $D_{\rho} = \Tr[\proj_{\rho}] = 1$. By Eq.~\eqref{eq:time_averaged_thermal_equilibirum_expr}, we see that for $\epsilon_C \ll 1$, we will require (taking $W \approx 1$)
\begin{equation}
    \epsilon \ll \frac{1}{D},\ \kappa_+ \ll \frac{D_S^2}{D^2},\ w_0 \ll \frac{D_S}{D},
\end{equation}
none of which scale accessibly with $D$ for fixed $D_S$. In fact, for a typical unitary, even with $w_+(t)$ significantly supported over a very long time that scales with $D$ or even an infinitely long time, we have for most times (see Appendix~\ref{app:Haar}, and here we note that during typical dynamics, the size of fluctuations can increase relative to the Haar value due to spectral rigidity effects such as a random Hamiltonian-like ramp in fluctuations~\cite{ChaosComplexityRMT, refRampPlateau2, CotlerHunterJones2})
\begin{equation}
    \left\lvert G_{RR}^{(2)}(t) - \frac{1}{D_S}\right\rvert \gtrsim \Omega(D^{-1}) \implies \epsilon \gtrsim \Omega(D^{-1}).
\end{equation}
Therefore, $\epsilon \ll 1/D$ cannot typically be satisfied, and we cannot get a useful bound for these large values of $D_{\sigma}$.

To get some more direct eigenstate-structure based intuition for why this might be the case, consider the expectation value of $\proj_R$ in a pure state $\lvert \psi(t)\rangle$ at time $t$. For simplicity, we will take the pure state to be unbiased in the energy eigenstates $\lvert E_n\rangle$ at $t=0$, i.e.
\begin{equation}
    \lvert \psi\rangle = \frac{1}{\sqrt{D}}\sum_n e^{i\varphi_n}\lvert E_n\rangle.
\end{equation}
This is a crude but reasonable model of typical states that factors out some technical complications for the sake of better intuition; e.g. a Haar random pure state looks equidistributed ``on average''~\cite{vonNeumannThermalization}, though not at the level of individual eigenstates. Moreover, the phases $\varphi_n$ can be absorbed into the eigenstates (for a given initial state) via the redefinition $\lvert \widetilde{E_n}\rangle = e^{-i\varphi_n}\lvert E_n\rangle$. The expectation value of $\proj_R$ in this state at time $t$ is given by:
\begin{equation}
\langle \psi(t)\vert \proj_R\lvert \psi(t)\rangle = \frac{1}{D}\Tr[\proj_R]+ \frac{1}{D}\sum_{n \neq m} \langle \widetilde{E_m}\rvert \proj_R\lvert \widetilde{E_n}\rangle e^{-i(E_n-E_m)t}.
\label{eq:thermalizationenergyeigenbasisexpr}
\end{equation}
The second term on the right hand side can be viewed as the Fourier transform at frequency $t$ of a complex-valued distribution $\Pi_R(\mathcal{E}) = D^{-1}\sum_{n\neq m} \langle \widetilde{E_m}\rvert \proj_R\lvert \widetilde{E_n}\rangle \delta(E_n-E_m-\mathcal{E})$ in $\mathcal{E} = E_n-E_m$.
Thermal equilibrium corresponds to an approximate vanishing of this Fourier transform, or a degree of uniformity of $\Pi_R(\mathcal{E})$ when aggregated (summed) over intervals spanning energy scales comparable to $\Delta \mathcal{E} \sim 2\pi/t$. When such uniformity is observed for $t \lesssim T_{\obs}$ for some finite scale $T_{\obs}$, it does not generally imply any strong degree uniformity over scales $\Delta \mathcal{E} \ll 2\pi/T_{\obs}$ (corresponding to thermalization for $t \gg T_{\obs}$), where the distribution may still be widely fluctuating such that only a coarse-grained version for $\Delta \mathcal{E} \gtrsim 2\pi/T_{\obs}$ appears uniform. A limited degree of uniformity does follow~\cite{dynamicalqthermalization}, but this is not sufficient to limit the magnitude of the second term in Eq.~\eqref{eq:thermalizationenergyeigenbasisexpr} to smaller than $\Theta(1)$ values, unless its prefactor is $O(1/D^{3/2})$ rather than $1/D$ --- corresponding to highly mixed initial states rather than pure initial states, or to $D_{\sigma} = O(1)$ in Eq.~\eqref{eq:time_averaged_thermal_equilibirum_expr} (in addition to applying only to ``almost all times'').

This necessitates the OTOC-based considerations of this manuscript, as described in the main text, to obtain reasonable constraints on pure state thermalization at a given instant of time. However, one can also ask if OTOCs themselves can be constrained in this manner, particularly from the viewpoint of predicting the long-time behavior of OTOCs from finite-time observations of dynamics, which we will explore next for completeness without much in the way of gains.


\subsubsection{Fourth order correlators: independence from autocorrelators?}
\label{app:fourthordercorrelators}

A more nontrivial inner product is necessary to handle OTOCs. In fact, we will need to use an indefinite metric in the inner product. We switch to the following setting:
\begin{align}
    \hat{A}_{4}, \hat{B}_{4} &\in L(\mathcal{H})\otimes L(\mathcal{H}), \\
    \langle \hat{B}_4,\hat{A}_4\rangle_4 &= \frac{1}{D}\Tr_{\mathcal{H}\otimes\mathcal{H}}[\hat{B}_4^\dagger \hat{A}_4] \label{eq:OTOCinnerproduct}\\
    \mathcal{U}_t(\hat{A}_4) &= e^{-i\hat{H}t} \otimes e^{-i\hat{H} t} \hat{A}_{2c} e^{i\hat{H}t} \otimes e^{i\hat{H} t},
\end{align}
which is similar to the cloned case ($2c$), except that the inner product is now normalized by $1/D$ instead of $1/D^2$. This difference in normalization will prove very crucial.

Now, consider the superoperator
\begin{equation}
    \eta(\hat{A}_4) = \hat{\swap}\ \hat{A}_4,
\end{equation}
where $\hat{\swap} \in L(\mathcal{H})\otimes L(\mathcal{H})$ is the standard swap operator~\cite{NielsenChuang}, which acts as $\swap \lvert \psi\rangle \otimes \lvert \chi\rangle = \lvert \chi\rangle \otimes \lvert \psi\rangle$ for $\lvert \psi\rangle, \lvert \chi\rangle \in \mathcal{H}$. Note that $\eta^2 \equiv \eta \circ \eta = 1$, and
\begin{equation}
    \langle \eta(\hat{B}_4), \hat{A}_4\rangle = \langle \hat{B}_4, \eta(\hat{A}_4)\rangle,
\end{equation}
making $\eta$ self-adjoint with respect to this inner product. Therefore, $\eta$ is a Hermitian superoperator with eigenvalues $\pm 1$, and can function as an indefinite metric that takes our inner product space to a Krein space~\cite{KreinSpaces}.

This is of relevance because we can now write the OTOC, $G_{R\rho}^{(4)}(t)$, as follows:
\begin{equation}
    G_{R\rho}^{(4)} = \frac{1}{D_{\rho}}\Tr[\proj_R \otimes \proj_R \hat{\swap} \proj_{\rho}(t) \otimes \proj_{\rho}(t)] = D_{\sigma} \langle \eta(\proj_R \otimes \proj_R), \mathcal{U}_t(\proj_{\rho} \otimes \proj_{\rho})\rangle.
    \label{eq:OTOCinnerproductfullexpr}
\end{equation}
In this way, Theorem~\ref{thm:autocorrelatorimpliescorrelator_Gen} can also be formally applied to OTOCs. In practice, for the kind of application we have in mind, it is sufficient to consider the time average of this OTOC. This is because $G_{R\rho}^{(4)} \geq [G_{R\rho}^{(2)}]^2$, and this lower bound is also (nearly) the saturation value we are interested in; recall that we want to operate in a regime with small $\sigma_{R\rho}^2$, typically associated with large $D_{\sigma}$. In other words, if the time average of $G_{R\rho}^{(4)}(t)$ is seen to be nearly $D_S^{-2}$, then that is sufficient to argue that $G_{R\rho}^{(4)}(t) \simeq D_S^{-2}$ at almost all times, as will be formalized below.

Based on the above intuition, we pick our operators to be:
\begin{align}
    \hat{A}_4 &= \proj_{\rho} \otimes \proj_{\rho} - \frac{1}{D_\sigma^2} \idop\otimes \idop, \nonumber \\
    \hat{B}_4 &= \eta\left(\proj_{R} \otimes \proj_{R} - \frac{1}{D_S^2} \idop\otimes \idop\right).
\end{align}
Here, we note that at any given time,
\begin{equation}
    \langle \eta(\hat{B}_4), \mathcal{U}_t(\hat{A}_4)\rangle_4 = \frac{1}{D_{\sigma}}\left[G_{R\rho}^{(4)}(t)-\frac{1}{D_S^2}-\frac{1}{D_S D_{\sigma}}+\frac{1}{D_S^2 D_{\sigma}}\right],
\end{equation}
which precisely measures the deviation of the OTOC from its saturation value. Moreover, we have
\begin{align}
    \langle \hat{A}_4, \mathcal{U}_t(\hat{A}_4)\rangle_4 &= \frac{1}{D}\left(\Tr[\proj_{\rho}(t)\proj_{\rho}]\right)^2-\frac{2D_{\rho}^2}{D D_{\sigma}^2}+\frac{D}{D_{\sigma}^4} \nonumber \\
    &= \frac{D_{\rho}}{D_{\sigma}}\left[\left(\frac{1}{D_{\rho}}\Tr[\proj_{\rho}(t)\proj_{\rho}]\right)^2-\frac{1}{D_{\sigma}^2}\right]
\end{align}
and
\begin{align}
    \langle \eta(\hat{B}_4), \eta(\hat{B_4})\rangle_4 = \frac{D_R(D_S^2-1)}{D_S^3}.
\end{align}

Applying Theorem~\ref{thm:autocorrelatorimpliescorrelator_Gen} to this set of choices of operators, inner product, and dynamics, we obtain
\begin{align}
    \frac{1}{D_{\sigma}} &\int\diff t\ w(t)  \left[G_{R\rho}^{(4)}(t)-\frac{1}{D_S^2}-\frac{1}{D_S D_{\sigma}}+\frac{1}{D_S^2 D_{\sigma}}\right] \nonumber \\
    &\leq \left[\sqrt{\frac{D_{\rho}}{D_{\sigma}W}\int\diff t\ w_+(t)\left(\left[G_{\rho \rho}^{(2)}(t)\right]^2-\frac{1}{D_{\sigma}^2}\right)} + w_0\sqrt{\frac{D_{\rho}(D_{\sigma}^2-1)}{D_{\sigma}^3}}\right]\sqrt{\frac{D_R(D_S^2-1)}{D_S^3}}.
\end{align}
Rearranging factors, we get
\begin{align}
    &\int\diff t\ w(t)  \left[G_{R\rho}^{(4)}(t)-\frac{1}{D_S^2}-\frac{1}{D_S D_{\sigma}}+\frac{1}{D_S^2 D_{\sigma}}\right] \nonumber \\
    &\leq D_R\sqrt{1-\frac{1}{D_S^2}}\left[\sqrt{\frac{1}{W}\int\diff t\ w_+(t)\left(\left[G_{\rho \rho}^{(2)}(t)\right]^2-\frac{1}{D_{\sigma}^2}\right)} + w_0\sqrt{1-\frac{1}{D_{\sigma}^2}}\right].
\end{align}
For this to provide a useful bound on the time-averaged OTOC, we must be able to constrain
\begin{equation}
    \left\lvert \left[G_{\rho \rho}^{(2)}(t)\right]^2-\frac{1}{D_{\sigma}^2}\right\rvert \ll \frac{1}{D_R^2},
\end{equation}
However, the size of these fluctuations are estimated at $D_{\sigma}/D^2$ even for Haar unitaries in Appendix~\ref{app:Haar}, and therefore the above constraint cannot generically be satisfied (note that we would generally like $D_\sigma \gg D_S$ as discussed in Sec.~\ref{sec:manybodysystems}, while the above condition requires $D_{\sigma} \ll D_S^2$, which can be compatible only if $D_S \gg 1$), much less measured in any case. Moreover, as $G_{\rho\rho}^{(2)}(0) = 1$, the time interval of observation must scale with $D$ for the time-averaged autocorrelator to be as small as $D_R^2$. This supports the expectation that OTOCs (at least those with $D_S \sim 1$) are genuinely independent of second order functions (especially over finite times) in generic systems (mathematically, this is due to the difference in normalization in Eq.~\eqref{eq:OTOCinnerproduct} as opposed to Eq.~\eqref{eq:clonedinnerproduct}, which reflects the appropriate accessible ranges of these quantities).
For predicting the behavior of OTOCs at long times from finite time observations, this suggests a need for entirely different (and perhaps model-dependent) accessible techniques, rather than Theorem~\ref{thm:autocorrelatorimpliescorrelator_Gen} that works best for second order correlators of high-dimensional subspaces.

\section{Haar averages and fluctuations}
\label{app:Haar}

\subsection{Low-order Haar averages}
\label{app:Weingarten}

The Haar averages in Eq.~\eqref{eq:Haar_mean_G} are straightforward to evaluate for $n=1,2$ using standard Haar integrals~\cite{Weingarten1, Weingarten2, ChaosComplexityDesign, ChaosComplexityRMT, CotlerHunterJones2}. With $\int_{U(D)}\diff \hat{U}$ representing an integration using the Haar measure on $U(D)$, we have
\begin{align}
    \int_{U(D)}\diff \hat{U}\ (\hat{U})_{a_1 b_1} (\hat{U}^\dagger)_{b_1' a_1'} =&\ \frac{1}{D} \delta_{a_1 a_1'} \delta_{b_1 b_1'}, \label{eq:1design}  \\
    \int_{U(D)}\diff \hat{U}\ (\hat{U})_{a_1 b_1} (\hat{U})_{a_2 b_2} (\hat{U}^\dagger)_{b_1' a_1'} (\hat{U}^\dagger)_{b_2' a_2'} =&\ \frac{1}{D^2-1}\left[\vphantom{{1}{D}}\delta_{a_1 a_1'} \delta_{a_2 a_2'} \delta_{b_1 b_1'} \delta_{b_2 b_2'} + \delta_{a_1 a_2'} \delta_{a_2 a_1'} \delta_{b_1 b_2'} \delta_{b_2 b_1'}\right. \nonumber \\
    &\left.- \frac{1}{D}\delta_{a_1 a_2'} \delta_{a_2 a_1'} \delta_{b_1 b_1'} \delta_{b_2 b_2'} -\frac{1}{D} \delta_{a_1 a_1'} \delta_{a_2 a_2'} \delta_{b_1 b_2'} \delta_{b_2 b_1'}\right]. \label{eq:2design}
\end{align}
In the latter expression \eqref{eq:2design}, we note that the two terms on the second line cannot be dismissed as being subleading just because of the $1/D$ prefactor; one of them will in fact end up contributing to the $n=2$ average to leading order. Before proceeding with the integrals, we recall that $\Tr\left[\proj_{R1}\right] = D_R$, $\Tr\left[\proj_{\rho 1}\right] = D_{\rho}$, $D_S = D/D_R$, $D_{\sigma} = D/D_{\rho}$.

For $n=1$, performing the Haar integral over $\hat{V} \in U(D)$, we have
\begin{align}
    \left\langle G_{R\rho}^{(2)}\right\rangle_{\hat{V},\text{CUE}} &= \frac{1}{D_{\rho}} \int_{U(D)}\diff \hat{V}\ \Tr\left[\proj_{R 1} \hat{V} \proj_{\rho 1} \hat{V}^\dagger\right] \nonumber \\
    &= \frac{1}{D_{\rho} D} \Tr\left[\proj_{R1}\right] \Tr\left[\proj_{\rho 1}\right] \nonumber \\
    &= \frac{1}{D_S}.
    \label{eq:G2average}
\end{align}

The Haar integral for $n=2$ gives, using $\proj^2 = \proj$ for both projectors:
\begin{align}
    \left\langle G_{R\rho}^{(4)} \right\rangle_{\hat{V},\text{CUE}} =&\ \frac{1}{D_{\rho}}\int_{\unitarygroup(D)}\diff \hat{V}\ \Tr\left[\proj_{R 1} \hat{V} \proj_{\rho 1} \hat{V}^\dagger\proj_{R 1} \hat{V} \proj_{\rho 1} \hat{V}^\dagger\right] \nonumber \\
    =&\ \frac{1}{D_{\rho}(D^2-1)}\left\lbrace \vphantom{\frac{1}{D}} \Tr\left[\proj_{R1}\right] \left(\Tr\left[\proj_{\rho 1}\right]\right)^2 + \left(\Tr\left[\proj_{R1}\right]\right)^2 \Tr\left[\proj_{\rho 1}\right]  \right. \nonumber \\ &\left. -\frac{1}{D}\left(\Tr\left[\proj_{R1}\right]\right)^2\left(\Tr\left[\proj_{\rho 1}\right]\right)^2 - \frac{1}{D}\Tr\left[\proj_{R1}\right]\Tr\left[\proj_{\rho 1}\right]\right\rbrace \nonumber \\
    =&\ \frac{D^2}{D^2-1}\left\lbrace \frac{1}{D_S D_{\sigma}} + \frac{1}{D_S^2}-\frac{1}{D_S^2 D_{\sigma}} - \frac{1}{D D_S D_{\sigma}}\right\rbrace.
    \label{eqs:G4average}
\end{align}
In the large $D$ limit with fixed $D_S, D_{\sigma}$, we note that (the first) three of the four terms in braces are finite, including one of the terms with a $1/D$ coefficient in Eq.~\eqref{eq:2design}.

We can also evaluate the fluctuations for $n=1$, for which we get
\begin{align}
    \left\langle \left(G_{R\rho}^{(2)}\right)^2 \right\rangle_{\hat{V},\text{CUE}} =&\ \frac{1}{D_{\rho}^2}\int_{\unitarygroup(D)}\diff \hat{V}\ \Tr\left[\proj_{R 1} \hat{V} \proj_{\rho 1} \hat{V}^\dagger\right] \Tr\left[\proj_{R 1} \hat{V} \proj_{\rho 1} \hat{V}^\dagger\right] \nonumber \\
    =&\ \frac{1}{D_{\rho}^2(D^2-1)}\left\lbrace \vphantom{\frac{1}{D}} \left(\Tr\left[\proj_{R1}\right]\right)^2 \left(\Tr\left[\proj_{\rho 1}\right]\right)^2 + \Tr\left[\proj_{R1}\right] \Tr\left[\proj_{\rho 1}\right]  \right. \nonumber \\ &\left. -\frac{1}{D} \Tr\left[\proj_{R1}\right] \left(\Tr\left[\proj_{\rho 1}\right]\right)^2 - \frac{1}{D}\left(\Tr\left[\proj_{R1}\right]\right)^2\Tr\left[\proj_{\rho 1}\right]\right\rbrace \nonumber \\
    =&\ \frac{D^2}{D^2-1}\left\lbrace \frac{1}{D_S^2} + \frac{D_{\sigma}}{D^2 D_S}-\frac{1}{D^2 D_S} - \frac{D_\sigma}{D^2 D_S^2}\right\rbrace.
\end{align}
Consequently, the variance between unitaries is given by:
\begin{equation}
    \left\langle \left(G_{R\rho}^{(2)}\right)^2\right\rangle_{\hat{V},\text{CUE}} - \left\langle G_{R\rho}^{(2)}\right\rangle_{\hat{V},\text{CUE}}^2 = \frac{(D_{\sigma}-1)(D_S-1)}{(D^2-1)D_S^2},
    \label{eq:WeingartenG2fluctuationsapp}
\end{equation}
which corresponds to a standard deviation of $\sim O(1/D)$ in $G_{R{\rho}}^{(2)}$ for fixed $D_S$, $D_{\sigma}$, which vanishes as $D\to\infty$. A comparable estimate of fluctuations for $G_{R\rho}^{(4)}$ is tedious due to involving a tensor product of $8$ Haar unitaries, and we will therefore constrain these fluctuations through concentration-of-measure bounds for the Haar measure.

\subsection{Constraining fluctuations via concentration of measure inequalities}
\label{app:concentration}

The ``universality'' properties of the Haar measure, by which virtually any ``sufficiently smooth'' observable in a typical quantum system (i.e. with some property varied by the application of CUE unitaries) agrees with the Haar average, is due to its concentration property\footnote{This concentration property is more powerful than the Haar measure. From the standpoint of ``quantum chaos'', even measure zero regions with respect to the Haar measure tend to agree with Haar averages, i.e., such properties tend to be universal even in smaller families of systems than the Haar measure is sensitive to. Even in less typical systems, many features are compatible with a simple modification (a modulation by some macroscopic quantities) of Haar behavior~\cite{DAlessio2016}. The power of (modulated) typicality has sometimes led to the attribution of the occurrence of phenomena associated with complexity to the general emergence of Haar random behavior~\cite{deutsch1991eth, srednicki1994eth, DAlessio2016, ChaosComplexityRMT}. In this context, we prefer to emphasize a logical separation between the (modulated) typicality phenomenon (which may yet determine which forms of complexity are ubiquitous and therefore of more practical interest) and direct mechanisms that link different forms of complex quantum dynamics and other characteristic features of complexity in an individual system (which can inform rigorous computational modes of \textit{predicting} the behavior of these systems, as in this work).}. We will quantify this according to the formulation in Ref.~\cite{HaarBook}.

First, we will specialize to $SU(D)$, the subset of $U(D)$ with $\det \hat{U} = 1$ for $\hat{U} \in SU(D)$. The less restrictive condition $\left\lvert \det \hat{U}\right\rvert = 1$ satisfied by $\hat{U} \in U(D) = U(1) \times SU(D)$ allows an overall phase $\phi \in [0,2\pi)$ via $\hat{U} = e^{i\phi} \hat{U}_S$ for $\hat{U} \in U(D)$ and $\hat{U}_S \in SU(D)$; but our correlators $G_{R\rho}^{(2n)}(\hat{U}) = G_{R\rho}^{(2n)}(\hat{U}_S)$ (with $\hat{U}$ taking the place of $\hat{V}$ in Eq.~\eqref{eq:correlator_randomization_def}) do not depend on this phase, justifying this restriction. The practical reason for this restriction is that the concentration bound satisfied by $SU(D)$ is stronger than $U(D)$ by a small constant~\cite[Theorem 5.16]{HaarBook} (using concentration for $U(D)$ would lead to a weaker bound by a factor of $3$ in Eqs.~\eqref{eq:G2fluctuations} and \eqref{eq:G4fluctuations} below).

If $F: SU(D) \to \mathbb{R}$ is a real-valued function of a unitary, $F$ being ``sufficiently smooth'' is formalized by the criterion of Lipschitz continuity with Lipschitz constant $L \geq 0$:
\begin{equation}
    \left\lvert F(\hat{U})-F(\hat{V})\right\rvert \leq L \sqrt{\Tr\left[(\hat{U}-\hat{V})^\dagger (\hat{U}-\hat{V})\right]},\;\;\ \text{ for all } \hat{U},\hat{V} \in U(D).
    \label{eq:Lipschitzdef}
\end{equation}
Intuitively, the right hand side measures the Hilbert-Schmidt or Frobenius norm~\cite{HaarBook} of $\hat{U}-\hat{V}$ scaled by $L$, and therefore requires that the difference in $F$ between the two points is bounded by a specific scaled distance between the two unitaries.

In this case, the concentration property of the Haar measure on $SU(D)$ takes the following form:
\begin{lemma}[Concentration of Haar measure on $SU(D)$]
\label{lem:Haarconcentration}
If $F: SU(D) \to \mathbb{R}$ is Lipschitz continuous with Lipschitz constant $L \geq 0$ [i.e. $F$ satisfies Eq.~\eqref{eq:Lipschitzdef}], then the Haar-measure-induced probability\footnote{In our notation, given a conditional statement $\mathcal{C}[\hat{U}]$ on $\hat{U}$, the probability that this condition is satisfied according to the Haar measure on $SU(D)$ is given by \begin{equation*}
    \mathbb{P}_{\hat{U}}\left(\mathcal{C}[\hat{U}]\right) = \int_{SU(D)}\diff \hat{U}\ \ifc\left(\mathcal{C}[\hat{U}]\right),
\end{equation*}
where $\ifc(\mathcal{C}) = 1$ if $\mathcal{C}$ is true and $\ifc(\mathcal{C}) = 0$ if $\mathcal{C}$ is false.
} $\mathbb{P}_{\hat{U}}$ over $\hat{U} \in SU(D)$ that $F(\hat{U})$ differs from its Haar average by at least $\epsilon$ is constrained by:
\begin{equation}
        \mathbb{P}_{\hat{U}}\left(\left\lvert F(\hat{U}) - \int_{SU(D)}\diff \hat{V} F(\hat{V}) \right\rvert \geq \epsilon \right) \leq 2\exp\left(- \frac{D \epsilon^2}{4 L^2}\right).
        \label{eq:concentrationforSUn}
\end{equation}
\end{lemma}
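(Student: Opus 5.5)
The plan is to follow the standard geometric route in Ref.~\cite[Chapter 5]{HaarBook}. First compute the Ricci curvature of $SU(D)$, then obtain a logarithmic Sobolev inequality from the Bakry--\'{E}mery criterion, and finally convert it into Gaussian concentration with the Herbst argument.

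\textbf{Step 1: curvature.} Equip $SU(D)$ with the bi-invariant Riemannian metric induced by the Hilbert--Schmidt inner product $\langle X,Y\rangle = \Re\Tr[X^\dagger Y]$ on its Lie algebra $\mathfrak{su}(D)$ of traceless anti-Hermitian matrices. The geodesic distance then dominates the Frobenius distance $\sqrt{\Tr[(\hat{U}-\hat{V})^\dagger(\hat{U}-\hat{V})]}$. So a function $F$ satisfying Eq.~\eqref{eq:Lipschitzdef} is also $L$-Lipschitz for the Riemannian distance, and $|\nabla F| \leq L$ almost everywhere by Rademacher's theorem.

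For a bi-invariant metric the curvature formula $\mathrm{Ric}(X,X) = -\tfrac14 \Tr_{\mathfrak{g}}(\mathrm{ad}_X^2)$ reduces the problem to evaluating a Killing-form-type trace on $\mathfrak{su}(D)$. A direct computation in an orthonormal basis of $\mathfrak{su}(D)$ gives $\mathrm{Ric} = \tfrac{D}{2}\, g$. This is where $SU(D)$ beats $U(D)$: the $U(1)$ centre is flat and destroys the uniform curvature bound.

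\textbf{Step 2: log-Sobolev inequality.} Since Haar measure is the normalized Riemannian volume on this compact manifold, the Bakry--\'{E}mery theorem with $\mathrm{Ric}\geq c\,g$ and $c = D/2$ yields
\begin{equation*}
\mathrm{Ent}(f^2) \leq \frac{2}{c}\int |\nabla f|^2 = \frac{4}{D}\int |\nabla f|^2 .
\end{equation*}

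\textbf{Step 3: Herbst argument.} Apply the inequality to $f^2 = e^{\lambda F}$, with $F$ shifted so that its Haar mean vanishes. This gives a differential inequality for $H(\lambda) = \int e^{\lambda F}$ which integrates to
\begin{equation*}
\int e^{\lambda (F - \mathbb{E}F)} \leq e^{\lambda^2 L^2/D} .
\end{equation*}
Markov's inequality, optimized at $\lambda = D\epsilon/(2L^2)$, then gives $\mathbb{P}(F-\mathbb{E}F\geq\epsilon)\leq e^{-D\epsilon^2/(4L^2)}$. Applying the same bound to $-F$ and taking a union bound supplies the factor $2$ in Eq.~\eqref{eq:concentrationforSUn}.

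\textbf{Main obstacle.} The delicate step is the curvature computation together with the bookkeeping of metric normalizations. Using the real part of the trace inner product versus some other convention changes $c$ by constant factors, and these propagate directly into the exponent. I would verify the final constant against \cite[Theorem 5.16]{HaarBook}, which is where the $SU(D)$ exponent in Eq.~\eqref{eq:concentrationforSUn} comes from. Beyond that, the statement can simply be cited from there.
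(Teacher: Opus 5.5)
Your proposal is correct and is essentially the paper's own argument. The paper simply cites \cite{HaarBook}, chaining the Ricci curvature $\mathrm{Ric}=\tfrac{D}{2}g$ of $SU(D)$ (Proposition 5.13) into Bakry--\'{E}mery (Theorem 5.12) and then the log-Sobolev-to-concentration theorem (Theorem 5.5), which are exactly your Steps 1--3; your constants (log-Sobolev constant $2/D$, moment bound $e^{\lambda^2 L^2/D}$, final $2\exp(-D\epsilon^2/(4L^2))$) and your observation that geodesic distance dominates the Frobenius distance all check out.
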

\begin{proof}
    See, e.g., Ref.~\cite{HaarBook}; this follows specifically by using Proposition 5.13 in Theorem 5.12, and then using Theorem 5.5, all in that reference.
\end{proof}
For any (and all)
\begin{equation}
    \epsilon \gg L\sqrt{2/D},
    \label{eq:concentrationvarianceconstraint}
\end{equation}
the (cumulative) probability of larger fluctuations is therefore vanishingly small, with the right hand side above formally representing the standard deviation of $\epsilon$ in the Gaussian that appears in Eq.~\eqref{eq:concentrationforSUn}.

For $F = G_{R\rho}^{(2)}$, we have
\begin{align}
    \left\lvert G_{R\rho}^{(2)}(\hat{U})-G_{R\rho}^{(2)}(\hat{V})\right\rvert =&\ \frac{1}{D_{\rho}}\left\lvert \Tr\left[\proj_{R 1} \hat{U} \proj_{\rho 1} \hat{U}^\dagger\right] -  \Tr\left[\proj_{R 1} \hat{V} \proj_{\rho 1} \hat{V}^\dagger\right]\right\rvert \nonumber \\
    =&\ \frac{1}{D_{\rho}}\left\lvert \Tr\left[\proj_{R 1} (\hat{U}-\hat{V}) \proj_{\rho 1} \hat{U}^\dagger\right]+\Tr\left[\proj_{R 1} \hat{V} \proj_{\rho 1} (\hat{U}-\hat{V})^\dagger\right]\right\rvert \nonumber \\
    \leq&\ \frac{1}{D_{\rho}}\sqrt{\Tr\left[(\hat{U}-\hat{V})^\dagger (\hat{U}-\hat{V})\right]\Tr\left[\proj_{\rho 1} \hat{U}^\dagger\proj_{R1}\hat{U}\right]} \nonumber \\
     &+ \frac{1}{D_{\rho}}\sqrt{\Tr\left[(\hat{U}-\hat{V})^\dagger (\hat{U}-\hat{V})\right]\Tr\left[\proj_{\rho 1} \hat{V}^\dagger\proj_{R1}\hat{V}\right]} \nonumber \\
     \leq&\ \frac{2}{\sqrt{D_{\rho}}}\sqrt{\Tr\left[(\hat{U}-\hat{V})^\dagger (\hat{U}-\hat{V})\right]},
     \label{eq:G2Lipschitz}
\end{align}
where we have used the triangle and Cauchy-Schwarz inequalities, and that $0 \leq G_{R\rho}^{(2)} \leq 1$. For this function, we therefore obtain a Lipschitz constant of $L = 2/\sqrt{D_{\rho}}$, which by Eq.~\eqref{eq:concentrationvarianceconstraint} constrains fluctuations with probability nearly $1$ to
\begin{equation}
    \epsilon_2 \lesssim 2\sqrt{\frac{2}{D D_{\rho}}} = \frac{2\sqrt{2D_{\sigma}}}{D}.
    \label{eq:G2fluctuations}
\end{equation}
We note that this is comparable in scale to Eq.~\eqref{eq:WeingartenG2fluctuationsapp}, obtained by evaluating the Haar integral, but for the $D_S$-dependence (which typically makes Eq.~\eqref{eq:WeingartenG2fluctuationsapp} tighter, especially for large $D_S$).

Now, let us apply a similar strategy to to $G^{(4)}_{R\rho}$. Our derivation will largely parallel those of Refs.~\cite{SAZ, DKM} for other types of OTOCs, but with some quantitative and technical differences, and somewhat increased brevity. For $F = G_{R\rho}^{(4)}$, we can apply a similar strategy as in Eq.~\eqref{eq:G2Lipschitz}, i.e. introduce differences of $\hat{U}-\hat{V}$ for each factor and collect terms, and then apply the triangle and Cauchy-Schwarz inequalities to get:
\begin{align}
    \left\lvert G_{R\rho}^{(4)}(\hat{U})-G_{R\rho}^{(4)}(\hat{V})\right\rvert =&\ \frac{1}{D_{\rho}}\left\lvert \Tr\left[\proj_{R 1} (\hat{U}-\hat{V}) \proj_{\rho 1} \hat{U}^\dagger\proj_{R 1} \hat{U} \proj_{\rho 1} \hat{U}^\dagger\right] + \Tr\left[\proj_{R 1} \hat{V} \proj_{\rho 1} (\hat{U}-\hat{V})^\dagger\proj_{R 1} \hat{U} \proj_{\rho 1} \hat{U}^\dagger\right]\right. \nonumber \\
    &\left. + \Tr\left[\proj_{R 1} \hat{V} \proj_{\rho 1} \hat{V}^\dagger\proj_{R 1} (\hat{U}-\hat{V}) \proj_{\rho 1} \hat{U}^\dagger\right] + \Tr\left[\proj_{R 1} \hat{V} \proj_{\rho 1} \hat{V}^\dagger\proj_{R 1} \hat{V} \proj_{\rho 1} (\hat{U}-\hat{V})^\dagger\right]\right\rvert \nonumber \\
    \leq&\ \frac{4}{\sqrt{D_{\rho}}}\sqrt{\Tr\left[(\hat{U}-\hat{V})^\dagger (\hat{U}-\hat{V})\right]},
\end{align}
where we have also used bounds such as (here, for a factor that emerges subsequent to the use of the Cauchy-Schwarz inequality for the first term above)
\begin{equation}
    0\leq \frac{1}{D_{\rho}}\Tr\left[\proj_{R1} \hat{V} \proj_{\rho 1} \hat{V}^\dagger\proj_{R 1} \hat{V} \proj_{\rho 1} \hat{V}^\dagger\proj_{R 1} \hat{V} \proj_{\rho 1} \hat{V}^\dagger \proj_{R1}\right] \leq 1,
\end{equation}
as the trace can be viewed as that of $\proj_{\rho 1}$ subject to several unitary rotations and projections, which collectively form a trace-non-increasing quantum operation~\cite{NielsenChuang} with Kraus operator $\hat{\mathcal{K}} = \proj_{R1} \hat{V} \proj_{\rho 1} \hat{V}^\dagger\proj_{R 1} \hat{V}$. Consequently, $G_{R\rho}^{(4)}$ is Lipschitz continuous with Lipschitz constant $L = 4/\sqrt{D_{\rho}}$, which by Eq.~\eqref{eq:concentrationvarianceconstraint} restricts fluctuations to be of the scale
\begin{equation}
    \epsilon_4 \lesssim \frac{4\sqrt{2D_{\sigma}}}{D},
    \label{eq:G4fluctuations}
\end{equation}
with probability nearly $1$. As an aside, from the structure of the derivation of Lipschitz constants above (there are as many terms bounded by $1$ after the use of the triangle inequality as the number of Haar random unitaries in the correlator), we can anticipate that $L_{2n} = 2n/\sqrt{D_{\rho}}$ giving $\epsilon_{2n} \lesssim 2n\sqrt{2D_{\sigma}}/D$ for higher order correlators. In Eq.~\eqref{eq:generalcorrelators_concentration} in the main text, we quote this result with an additional large constant $\kappa$ to account for ``most of the probability'' of fluctuations to account for the $\gg$ sign in Eq.~\eqref{eq:concentrationvarianceconstraint}.

Thus, from Eqs.~\eqref{eq:G2average}, \eqref{eqs:G4average}, \eqref{eq:WeingartenG2fluctuationsapp} or \eqref{eq:G2fluctuations}, and \eqref{eq:G4fluctuations}, we are led to the estimate that typically,
\begin{equation}
    \sigma_{R\rho}^2 = G_{R\rho}^{(4)}-\left(G_{R\rho}^{(2)}\right)^2 \simeq \frac{D_S-1}{D_S^2 D_{\sigma}} + O(D^{-1}D_{\sigma}^{1/2}),
    \label{eq:variancefluctuationsapp}
\end{equation}
with the fluctuations indicated by the second term being strongly suppressed in the $D\to\infty$ limit, for fixed $D_S$ and $D_{\sigma}$. This ultimately justifies our procedure for controlling the size of this variance, in typical cases, by increasing $D_{\sigma}$ to reduce the fluctuations of the principal angles.

\section{Proofs}
\label{app:proofs}

\subsection{Theorem~\ref{thm:quantumthermalization_subspaces}: Quantum thermalization in pure states in nearly aligned subspaces}
\label{proof:quantumthermalization_subspaces}
From Eq.~\eqref{eq:anglevariance_def}, the fraction $\varphi_{\lambda}$ of the $D_{\rho}$ principal angles $\theta_k$ for which $\left\lvert \cos^2\theta_k-G_{R\rho}^{(2)}\right\rvert > \lambda$, i.e.
    \begin{equation}
        \varphi_{\lambda} \equiv \frac{1}{D_{\rho}}\sum_k \ifc\left(\left\lvert \cos^2\theta_k-G_{R\rho}^{(2)}\right\rvert > \lambda\right),
        \label{eq:fractiondef_inproof}
    \end{equation}
    satisfies, by Markov's inequality~\cite{RossProbability} (i.e. the angles in the sum above contribute at least $\lambda^2$ to $\sigma_{R\rho}^2$):
    \begin{equation}
        \varphi_{\lambda} \leq \frac{\sigma_{R\rho}^2}{\lambda^2}.
        \label{eq:fractionineq_inproof}
    \end{equation}
    Define the subspace $\mathcal{H}_{\therm}(\lambda) \subseteq \mathcal{H}_{\rho}$ spanned by the remaining principal axes $\lvert w_k\rangle$ [defined in Eq.~\eqref{eq:principalaxesdef}]:
    \begin{equation}
        \mathcal{H}_{\therm}(\lambda) \equiv \vectorspan\left\lbrace\lvert w_k\rangle:  \left\lvert \cos^2\theta_k-G_{R\rho}^{(2)}\right\rvert \leq \lambda\right\rbrace.
        \label{eq:Hthermdef_inproof}
    \end{equation}
    By Eq.~\eqref{eq:fractiondef_inproof}, $\dim \mathcal{H}_{\therm}(\lambda) = D_{\rho}(1-\varphi_{\lambda})$, which implies Eq.~\eqref{eq:dimrestriction1} of Theorem~\ref{thm:quantumthermalization_subspaces} by Eq.~\eqref{eq:fractionineq_inproof}.
    Further, any (non-null) vector $\lvert \psi\rangle \in \mathcal{H}_{\therm}(\lambda)$ can be expanded in the $\lvert w_k\rangle$ basis within this subspace:
    \begin{equation}
        \lvert \psi\rangle = \sum_{k: \lvert w_k\rangle \in \mathcal{H}_{\therm}(\lambda)} c_k \lvert w_k\rangle.
    \end{equation}
    We can therefore write
    \begin{align}
        \langle \psi\vert \proj_R\lvert \psi\rangle &= \sum_{\substack{k: \lvert w_k\rangle \in \mathcal{H}_{\therm}(\lambda),\\ \ell: \lvert w_\ell\rangle \in \mathcal{H}_{\therm}(\lambda)}} c_k^\ast c_{\ell} \langle w_k\vert \proj_R\lvert w_{\ell}\rangle \nonumber \\
        &= \sum_{\substack{k: \lvert w_k\rangle \in \mathcal{H}_{\therm}(\lambda),\\ \ell: \lvert w_\ell\rangle \in \mathcal{H}_{\therm}(\lambda)}} c_k^\ast c_{\ell} \sum_r \langle w_k\vert u_r\rangle\langle u_r\vert w_{\ell}\rangle \nonumber \\
        &= \sum_{k: \lvert w_k\rangle \in \mathcal{H}_{\therm}(\lambda)} \lvert c_k\rvert^2 \cos^2\theta_k,
    \end{align}
    where we have used Lemma~\ref{lem:Halmos} in the second line. Using the triangle inequality~\cite{ByronFuller} to obtain a sum of nonnegative terms and Eq.~\eqref{eq:Hthermdef_inproof}, i.e. that $\lvert \cos^2\theta_k - G_{R\rho}^{(2)}\rvert \leq \lambda$ in each term, we get
    \begin{equation}
        \left\lvert\langle \psi\vert \proj_R\lvert \psi\rangle - \sum_{k: \lvert w_k\rangle \in \mathcal{H}_{\therm}(\lambda)} \lvert c_k\rvert^2G_{R\rho}^{(2)}\right\rvert \leq \lambda \sum_{k: \lvert w_k\rangle \in \mathcal{H}_{\therm}(\lambda)} \lvert c_k\rvert^2,
        \label{eq:subspacethermalization_inproof}
    \end{equation}
    which, on identifying $\langle \psi\vert\psi\rangle = \sum_{k: \lvert w_k\rangle \in \mathcal{H}_{\therm}(\lambda)} \lvert c_k\rvert^2$, gives Eq.~\eqref{eq:subspacethermalizationcondition} of Theorem~\ref{thm:quantumthermalization_subspaces} and establishes the first part of the theorem.

    For the second part, it is convenient to first assume an unspecified parameter $\widetilde{\lambda} \in (0,1)$ in place of $\lambda$ above. Given any orthonormal basis $\mathcal{B}_{\rho} = \lbrace \lvert b_k\rangle_{\rho}\rbrace$ for $\mathcal{H}_{\rho}$ (with $\langle b_k\vert b_{\ell}\rangle = \delta_{k \ell}$), and the projector $\proj_{\therm}(\widetilde{\lambda}) = \sum_{k: \lvert w_k\rangle \in \mathcal{H}_{\therm}(\widetilde{\lambda})} \lvert w_k\rangle \langle w_k\rvert$ that projects onto $\mathcal{H}_{\therm}(\widetilde{\lambda})$, we can decompose each basis vector into a part inside $\mathcal{H}_{\therm}(\widetilde{\lambda})$ and an orthogonal part outside this subspace, while keeping all our considerations within $\mathcal{H}_{\rho}$ (with $\idop_{\rho}$ being the identity operator on $\mathcal{H}_{\rho}$, which may be replaced by $\proj_{\rho}$):
    \begin{equation}
        \lvert b_k\rangle = \proj_{\therm}(\widetilde{\lambda})\lvert b_k\rangle + (\idop_{\rho}-\proj_{\therm}(\widetilde{\lambda}))\lvert b_k\rangle.
    \end{equation}
    We will now show that due to Lemma~\ref{lem:Halmos}, this decomposition retains its orthogonal structure when projected onto $\mathcal{H}_R$, allowing us to separate the thermal and non-thermal contributions to the expectation value of $\proj_R$ from each vector, and constrain each such contribution. From the triangle inequality, we have
    \begin{align}
        \left\lvert\langle b_k\rvert \proj_R\lvert b_k\rangle-G_{R\rho}^{(2)}\right\rvert &\leq \left\lvert \langle b_k\rvert \proj_{\therm}(\widetilde{\lambda}) [\proj_R-G_{R\rho}^{(2)}\idop_{\rho}] \proj_{\therm}(\widetilde{\lambda})\lvert b_k\rangle\right\rvert \nonumber \\
        &+ 2 \left\lvert \langle b_k\rvert \proj_{\therm}(\widetilde{\lambda})[\proj_R-G_{R\rho}^{(2)}\idop_{\rho}][\idop_{\rho}-\proj_{\therm}(\widetilde{\lambda})]\lvert b_k\rangle\right\rvert \nonumber \\
        &+ \left\lvert\langle b_k\rvert[\idop_{\rho}-\proj_{\therm}(\widetilde{\lambda})] [\proj_R-G_{R\rho}^{(2)}\idop_{\rho}] [\idop_{\rho}-\proj_{\therm}(\widetilde{\lambda})]\lvert b_k\rangle\right\rvert.
    \end{align}
    It is convenient to consider each term separately:
    \begin{enumerate}
        \item For the first term, by Eq.~\eqref{eq:subspacethermalization_inproof}, we have (using $\proj_{\therm}(\widetilde{\lambda})^2 = \proj_{\therm}(\widetilde{\lambda})$)
        \begin{equation}
            \left\lvert \langle b_k\rvert \proj_{\therm}(\widetilde{\lambda}) [\proj_R-G_{R\rho}^{(2)}\idop_{\rho}] \proj_{\therm}(\widetilde{\lambda})\lvert b_k\rangle\right\rvert \leq \widetilde{\lambda} \langle b_k\rvert \proj_{\therm}(\widetilde{\lambda})\lvert b_k\rangle.
        \end{equation}
        \item For the second term, as $\proj_{\therm}(\widetilde{\lambda})$ and $\idop_{\rho}-\proj_{\therm}(\widetilde{\lambda})$ project onto orthogonal subspaces, the part inside the absolute value signs simplifies to:
    \begin{equation}
        \langle b_k\rvert \proj_{\therm}(\widetilde{\lambda})[\proj_R-G_{R\rho}^{(2)}\idop_{\rho}][\idop_{\rho}-\proj_{\therm}(\widetilde{\lambda})]\lvert b_k\rangle = \langle b_k\rvert \proj_{\therm}(\widetilde{\lambda})\proj_R [\idop_{\rho}-\proj_{\therm}(\widetilde{\lambda})]\lvert b_k\rangle.
    \end{equation}
    Using Lemma~\ref{lem:Halmos} and Eq.~\eqref{eq:principalaxesdef}, we can write (recalling that $\idop_{\rho}$ amounts to $\proj_{\rho}$)
    \begin{align}
       \proj_{\therm}(\widetilde{\lambda})\proj_R [\idop_{\rho}-\proj_{\therm}(\widetilde{\lambda})] &= \sum_{\substack{k: \lvert w_k\rangle \in \mathcal{H}_{\therm}(\widetilde{\lambda}),\\ \ell: \lvert w_\ell\rangle \notin \mathcal{H}_{\therm}(\widetilde{\lambda})}} \sum_r \lvert w_k\rangle\langle w_k\vert u_r\rangle \langle u_r\vert w_\ell\rangle \langle w_{\ell}\rvert \nonumber \\
       &= \sum_{\substack{k: \lvert w_k\rangle \in \mathcal{H}_{\therm}(\widetilde{\lambda}),\\ \ell: \lvert w_\ell\rangle \notin \mathcal{H}_{\therm}(\widetilde{\lambda})}} \delta_{k\ell} \cos^2\theta_k \lvert w_k\rangle \langle w_{\ell}\rvert = 0,
    \end{align}
    due to which the second term is identically zero.
    \item For the third term, we can use the fact that $[\proj_R-G_{R{\rho}}\idop_{\rho}]$ has eigenvalues in $[-1,1]$ (as $G_{R\rho}^{(2)} \in [0,1]$) to write, for any vector $\lvert v\rangle$:
    \begin{equation}
        \left\lvert \langle  v\rvert [\proj_R-G_{R{\rho}}\idop_{\rho}]\lvert v\rangle\right\rvert \leq \langle v\vert v\rangle.
    \end{equation}
    Choosing $\lvert v\rangle = [\idop_{\rho}-\proj_{\therm}(\widetilde{\lambda})]\lvert b_k\rangle$, this gives:
    \begin{equation}
        \left\lvert\langle b_k\rvert[\idop_{\rho}-\proj_{\therm}(\widetilde{\lambda})] [\proj_R-G_{R\rho}^{(2)}\idop_{\rho}] [\idop_{\rho}-\proj_{\therm}(\widetilde{\lambda})]\lvert b_k\rangle\right\rvert \leq \langle b_k\rvert (\idop_{\rho}-\proj_{\therm}(\widetilde{\lambda}))\lvert b_k\rangle,
    \end{equation}
    where we have also used $[\idop_{\rho}-\proj_{\therm}(\widetilde{\lambda})]^2 = [\idop_{\rho}-\proj_{\therm}(\widetilde{\lambda})]$.
    \end{enumerate}
    Combining these simplifications, we get
    \begin{equation}
        \left\lvert\langle b_k\rvert \proj_R\lvert b_k\rangle-G_{R\rho}^{(2)}\right\rvert \leq \widetilde{\lambda} \langle b_k\rvert \proj_{\therm}(\widetilde{\lambda})\lvert b_k\rangle + \langle b_k\rvert (\idop_{\rho}-\proj_{\therm}(\widetilde{\lambda}))\lvert b_k\rangle.
        \label{eq:projRinequalityintermediate_inproof}
    \end{equation}
    Now, if we sum over all the $\lvert b_k\rangle$ (amounting to a trace within $\mathcal{H}_{\rho}$), we obtain:
    \begin{align}
        \sum_k \langle b_k\rvert \proj_{\therm}(\widetilde{\lambda})\lvert b_k\rangle &= D_{\rho}(1-\varphi_{\widetilde{\lambda}}), \\
        \sum_k \langle b_k\rvert (\idop_{\rho}-\proj_{\therm}(\widetilde{\lambda}))\lvert b_k\rangle &= D_{\rho}\varphi_{\widetilde{\lambda}},
    \end{align}
    due to which we can write, from Eq.~\eqref{eq:projRinequalityintermediate_inproof} and then using Eq.~\eqref{eq:fractionineq_inproof} and \eqref{eq:Hthermdef_inproof},
    \begin{equation}
        \frac{1}{D_{\rho}}\sum_k \left\lvert\langle b_k\rvert \proj_R\lvert b_k\rangle-G_{R\rho}^{(2)}\right\rvert = \widetilde{\lambda}(1-\varphi_{\widetilde{\lambda}}) + \varphi_{\widetilde{\lambda}} \leq \widetilde{\lambda} + (1-\widetilde{\lambda})\frac{\sigma_{R\rho}^2}{\widetilde{\lambda}^2}.
    \end{equation}
    Note that this holds for any choice of $0 < \widetilde{\lambda} < 1$. The minimization of the right hand side with respect to $\widetilde{\lambda}$ to obtain the tightest such bound is straightforward but tedious (involving solutions to a cubic equation). To derive a more palatable final expression, at the expense of some tightness, we will simplify the above inequality by using $(1-\widetilde{\lambda}) \leq 1$, giving
    \begin{equation}
        \frac{1}{D_{\rho}}\sum_k \left\lvert\langle b_k\rvert \proj_R\lvert b_k\rangle-G_{R\rho}^{(2)}\right\rvert \leq \widetilde{\lambda} +\frac{\sigma_{R\rho}^2}{\widetilde{\lambda}^2}.
    \end{equation}
    This is minimized for $\widetilde{\lambda} = (2\sigma_{R\rho}^2)^{1/3}$, which implies that
    \begin{equation}
        \frac{1}{D_{\rho}}\sum_k \left\lvert\langle b_k\rvert \proj_R\lvert b_k\rangle-G_{R\rho}^{(2)}\right\rvert \leq 3 \left(\frac{\sigma_{R\rho}^2}{4}\right)^{1/3}.
    \end{equation}
    Now, defining the fraction $f_{\lambda}$ of basis states in which the expectation value of $\proj_R$ deviates from $G_{R\rho}^{(2)}$ more than $\lambda$ as in Eq.~\eqref{eq:nonthermalbasisstates_fraction}, we obtain using Markov's inequality~\cite{RossProbability}
    \begin{equation}
        f_{\lambda} \leq \frac{3}{\lambda}\left(\frac{\sigma_{R\rho}^2}{4}\right)^{1/3},
        \label{eq:nonthermalbasisstatesconstraint_inproof}
    \end{equation}
    which is Eq.~\eqref{eq:nonthermalbasisstatesconstraint} of Theorem~\ref{thm:quantumthermalization_subspaces}, completing the proof.
    
    As a sanity check, we note that Eq.~\eqref{eq:nonthermalbasisstatesconstraint_inproof} should not give a stronger bound than Eq.~\eqref{eq:fractionineq_inproof}, as we could have chosen $\lvert b_k\rangle = \lvert w_k\rangle$ and then Eq.~\eqref{eq:fractionineq_inproof} (which was used as an input in this bound) should apply, and presumably cannot by itself generate a stronger bound. We get that this holds as long as $\lambda \geq (2\sigma_{R\rho}^2)^{2/3}/3$, in which case the respective bounds are $f_{\lambda},\varphi_{\lambda} \leq (9/2) \sigma_{R\rho}^{-2/3}$; when $\lambda < (2\sigma_{R\rho}^2)^{2/3}/3$ which would na\"{i}vely violate this sanity condition, the above bounds on $f_{\lambda}, \varphi_{\lambda}$ are always weaker than the natural bounds $f_{\lambda}, \varphi_{\lambda} \leq 1$ (as $\sigma_{R\rho}^2 \leq 1$), preserving consistency.

\subsection{Theorem~\ref{thm:autocorrelatorimpliescorrelator_Gen}: Autocorrelator smallness predicts correlator smallness}
\label{proof:autocorrelatorimpliescorrelator_Gen}

 First, let us write the correlation function in terms of the Fourier transform of $w(t)$ using the eigendecomposition of $\mathcal{U}_t$ in Eq.~\eqref{eq:genunitary_eigendecomposition}:
    \begin{equation}
        \int\diff t\ w(t) \langle \hat{B}, \mathcal{U}_t(\hat{A})\rangle = \sum_k \widetilde{w}(\mathcal{E}_k) \langle \hat{B}, \mathcal{P}_k(\hat{A})\rangle.
    \end{equation}
    Let $\mathcal{K}_{\Delta E} = \lbrace k: \lvert \mathcal{E}_k\rvert < \Delta E\rbrace$ denote the set of energy levels inside the Fourier window $\Delta E$, and $\mathcal{K}_{\Delta E}^c = \lbrace k: \lvert \mathcal{E}_k\rvert \geq \Delta E\rbrace$ be its complementary set. Splitting the sum over $k$ on the right into these two sets and using the triangle inequality gives:
    \begin{equation}
        \left\lvert \int\diff t\ w(t) \langle \hat{B}, \mathcal{U}_t(\hat{A})\rangle\right\rvert \leq \left\lvert \sum_{k \in \mathcal{K}_{\Delta E}} \widetilde{w}(\mathcal{E}_k) \langle \hat{B}, \mathcal{P}_k(\hat{A})\rangle \right\rvert + \left\lvert \sum_{j \in \mathcal{K}_{\Delta E}^c} \widetilde{w}(\mathcal{E}_j) \langle \hat{B}, \mathcal{P}_j(\hat{A})\rangle \right\rvert.
    \end{equation}
 
    By linearity and the Cauchy-Schwarz inequality for the inner product $\langle \cdot, \cdot \rangle$, for a sum restricted to any set $\mathcal{K} \in \lbrace \mathcal{K}_{\Delta E}, \mathcal{K}_{\Delta E}^c \rbrace$ we have
    \begin{align}
        \left\lvert \sum_{k \in \mathcal{K}} \widetilde{w}(\mathcal{E}_k) \langle \hat{B}, \mathcal{P}_k(\hat{A})\rangle \right\rvert &= \left\lvert  \left\langle \hat{B}, \sum_{k \in \mathcal{K}} \widetilde{w}(\mathcal{E}_k) \mathcal{P}_k(\hat{A})\right\rangle \right\rvert \nonumber \\
        &\leq \sqrt{\langle \hat{B}, \hat{B}\rangle \left(\left\langle \sum_{k \in \mathcal{K}} \widetilde{w}(\mathcal{E}_k) \mathcal{P}_k(\hat{A}), \sum_{k \in \mathcal{K}} \widetilde{w}(\mathcal{E}_k) \mathcal{P}_k(\hat{A})\right\rangle \right)}.
    \end{align}
    Using the orthonormality of eigenspaces, Eq.~\eqref{eq:eigenspaceorthonormality}, and the properties of the inner product, this inequality can be simplified to
    \begin{equation}
        \left\lvert \sum_{k \in \mathcal{K}} \widetilde{w}(\mathcal{E}_k) \langle \hat{B}, \mathcal{P}_k(\hat{A})\rangle \right\rvert \leq  \sqrt{\left(\sum_{k \in \mathcal{K}} \left\lvert \widetilde{w}(\mathcal{E}_k)\right\rvert^2 \langle \hat{A}, \mathcal{P}_k(\hat{A})\rangle \right)}\ \sqrt{\langle \hat{B}, \hat{B}\rangle}.
    \end{equation}
     Applying Eq.~\eqref{eq:w_w0_constraint} to the terms in $\mathcal{K}_{\Delta E}^c$ and noting that $\widetilde{w}(E) \leq 1$ due to $w(t) \geq 0$ (see e.g. Refs.~\cite{FTpositivityConvex, FTpositivity2014}) (applied to the terms in $\mathcal{K}_{\delta E}$), we get:
     \begin{equation}
          \left\lvert \int\diff t\ w(t) \langle \hat{B}, \mathcal{U}_t(\hat{A})\rangle\right\rvert \leq \left(\sqrt{\left(\sum_{k \in \mathcal{K}_{\Delta E}} \langle \hat{A}, \mathcal{P}_k(\hat{A})\rangle \right)} + w_0 \sqrt{\left(\sum_{k \in \mathcal{K}_{\Delta E}^c} \langle \hat{A}, \mathcal{P}_k(\hat{A})\rangle \right)} \right) \sqrt{\langle \hat{B}, \hat{B}\rangle}.
          \label{eq:correlator_to_eigenspaces_bound}
     \end{equation}

     Momentarily setting aside the above inequality, let us now turn to the eigendecomposition of the completely positive weighted average of the autocorrelator, as per Eq.~\eqref{eq:genunitary_eigendecomposition}:
     \begin{equation}
         \int\diff t\ w_+(t) \langle \hat{A}, \mathcal{U}_t(\hat{A})\rangle = \sum_k \widetilde{w}_+(\mathcal{E}_k) \langle \hat{A}, \mathcal{P}_k(\hat{A})\rangle.
     \end{equation}
     As $\widetilde{w}_+(\Delta E) \geq 0$ by assumption, and further $\langle \hat{A}, \mathcal{P}_k(\hat{A})\rangle = \langle \mathcal{P}_k(\hat{A}), \mathcal{P}_k(\hat{A})\rangle \geq 0$ by Eq.~\eqref{eq:eigenspaceorthonormality} and Eq.~\eqref{eq:ip_postivity}, each term on the right hand side is non-negative. It follows that
     \begin{equation}
         \int\diff t\ w_+(t) \langle \hat{A}, \mathcal{U}_t(\hat{A})\rangle \geq \sum_{k \in \mathcal{K}_{\Delta E}} \widetilde{w}_+(\mathcal{E}_k) \langle \hat{A}, \mathcal{P}_k(\hat{A})\rangle \geq 0.
         \label{eq:autocorrelator_Fourier_constraint}
     \end{equation}
     Now, using Eq.~\eqref{eq:wplus_W_constraint}, we have
     \begin{equation}
         \sum_{k \in \mathcal{K}_{\Delta E}} \widetilde{w}_+(\mathcal{E}_k) \langle \hat{A}, \mathcal{P}_k(\hat{A})\rangle \geq W  \sum_{k \in \mathcal{K}_{\Delta E}} \langle \hat{A}, \mathcal{P}_k(\hat{A})\rangle,
     \end{equation}
     due to which by Eq.~\eqref{eq:autocorrelator_Fourier_constraint},
     \begin{equation}
         \sum_{k \in \mathcal{K}_{\Delta E}} \langle \hat{A}, \mathcal{P}_k(\hat{A})\rangle \leq \frac{1}{W}\int\diff t\ w_+(t) \langle \hat{A}, \mathcal{U}_t(\hat{A})\rangle.
         \label{eq:autocorrelator_energyband_constraint}
     \end{equation}

     Separately, again due to $\langle \hat{A}, \mathcal{P}_k(\hat{A})\rangle = \langle \mathcal{P}_k(\hat{A}), \mathcal{P}_k(\hat{A})\rangle \geq 0$, we have
     \begin{equation}
         0 \leq  \sum_{k \in \mathcal{K}_{\Delta E}^c} \langle \hat{A}, \mathcal{P}_k(\hat{A})\rangle \leq \sum_k \langle \hat{A},  \mathcal{P}_k(\hat{A})\rangle \leq \langle \hat{A}, \hat{A}\rangle,
         \label{eq:projected_Ruto_norms_constraint}
     \end{equation}
     where the last inequality follows from the completeness relation, Eq.~\eqref{eq:eigenspacecompleteness}.

     Using Eq.~\eqref{eq:autocorrelator_energyband_constraint} and \eqref{eq:projected_Ruto_norms_constraint} in Eq.~\eqref{eq:correlator_to_eigenspaces_bound}, we get Eq.~\eqref{eq:auto_to_cross_correlator} of Theorem~\ref{thm:autocorrelatorimpliescorrelator_Gen}.

\printbibliography

\end{document}